\renewcommand{\subsubsection}[1]{\smallskip\noindent{\em\bfseries #1.}}
\newcolumntype{d}{D{.}{.}{3.0}}
\newcolumntype{j}{D{.}{.}{2.0}}
\newcolumntype{k}{D{.}{.}{1.0}}
\newcolumntype{e}{D{.}{.}{4.0}}
\newcolumntype{f}{D{.}{.}{5.0}}
\newcolumntype{g}{D{.}{.}{4.1}}
\newcommand{\PreserveBackslash}[1]{\let\temp=\\#1\let\\=\temp}
\newcolumntype{C}[1]{>{\PreserveBackslash\centering}p{#1}}
\newcolumntype{R}[1]{>{\PreserveBackslash\raggedleft}p{#1}}
\newcolumntype{L}[1]{>{\PreserveBackslash\raggedright}p{#1}}
\pgfplotsset{compat=1.8}
\lstdefinestyle{C}{
    language=C,
    basicstyle=\sffamily\small,
    numberblanklines=true,
    columns=fixed,
    aboveskip=2pt,
    belowskip=1pt,
    lineskip=0pt,
    numbers=left,
    numberstyle=\tiny,
    stepnumber=1,
    numberfirstline=true,
    firstnumber=1,
    xleftmargin=15pt,
    keywordstyle=\color{black},
    commentstyle=\color{gray},
    morekeywords={assert},
}
\let\oldsdlengths\sdlengths
\renewcommand\sdlengths{\oldsdlengths\setlength{\sdfinalskip}{0pt}}
\newcommand*\ExpandableInput[1]{\@@input#1 }
\newcolumntype{C}[1]{>{\centering}m{#1}}
\definecolor{quotemark}{gray}{0.7}
\def\fquote{%
    \@ifnextchar[{\fquote@i}{\fquote@i[]}
           }%
\def\fquote@i[#1]{%
    \def\tempa{#1}%
    \@ifnextchar[{\fquote@ii}{\fquote@ii[]}
                 }%
\def\fquote@ii[#1]{%
    \def\tempb{#1}%
    \@ifnextchar[{\fquote@iii}{\fquote@iii[]}
                      }%
\def\fquote@iii[#1]{%
    \def\tempc{#1}%
    \vspace{1em}%
    \noindent%
    \begin{list}{}{%
         \setlength{\leftmargin}{0.1\textwidth}%
         \setlength{\rightmargin}{0.1\textwidth}%
                  }%
         \item[]%
         \begin{picture}(0,0)%
         \put(-15,-5){\makebox(0,0){\scalebox{3}{\textcolor{quotemark}{``}}}}%
         \end{picture}%
         \begingroup\itshape}%
 \def\endfquote{%
 \endgroup\par%
 \makebox[0pt][l]{%
 \hspace{0.8\textwidth}%
 \begin{picture}(0,0)(0,0)%
 \put(15,15){\makebox(0,0){%
 \scalebox{3}{\color{quotemark}''}}}%
 \end{picture}}%
 \ifx\tempa\empty%
 \else%
    \ifx\tempc\empty%
       \hfill\rule{100pt}{0.5pt}\\\mbox{}\hfill\tempa%
   \else%
       \hfill\rule{100pt}{0.5pt}\\\mbox{}\hfill\tempa,\ \emph{\tempb},\ \tempc%
   \fi\fi\par%
   \vspace{0.5em}%
 \end{list}%
 }%
\newcommand{\astodo}[1]{}
\newcommand{\asnext}[1]{}
\newcommand{\recap}[1]{}
\newcommand{\important}{}
\newtcolorbox{myopbox}[1][]{
    breakable,
    title=#1,
    colback=white,
    colbacktitle=white,
    coltitle=black,
    fonttitle=\bfseries,
    bottomrule=0pt,
    toprule=0pt,
    leftrule=1pt,
    rightrule=1pt,
    titlerule=0pt,
    arc=0pt,
    outer arc=0pt,
    colframe=gray,
}
\newtcolorbox{recapbox}[1][]{
    breakable,
    colback=gray,
    colbacktitle=white,
    coltitle=black,
    fonttitle=\bfseries,
    bottomrule=0pt,
    toprule=0pt,
    leftrule=0pt,
    rightrule=0pt,
    titlerule=0pt,
    arc=0pt,
    outer arc=0pt,
    colframe=gray,
}
\newtcolorbox{keywordsbox}[1][]{
    breakable,
    title=#1,
    colback=white,
    colbacktitle=white,
    coltitle=black,
    fonttitle=\bfseries,
    bottomrule=1pt,
    toprule=1pt,
    leftrule=0pt,
    rightrule=0pt,
    titlerule=0pt,
    arc=0pt,
    outer arc=0pt,
    colframe=black,
}
\newtcolorbox{storrybox}[1][]{
    breakable,
    title=#1,
    colback=white,
    colbacktitle=white,
    coltitle=black,
    fonttitle=\bfseries,
    bottomrule=0pt,
    toprule=0pt,
    leftrule=0pt,
    rightrule=0pt,
    titlerule=0pt,
    arc=0pt,
    outer arc=0pt,
    colframe=black,
}
\newtcolorbox{defbox}{
  skin=enhanced,
  breakable,
  colback=white,
  colframe=gray!30,
  coltitle=black,
  fonttitle=\bfseries,
  leftrule=0.5cm,
  arc=0pt,
  outer arc=0pt,
  title={Definition:}
}
\newtcolorbox{parabox}[1][]{
    breakable,
    title=#1,
    colback=gray!30,
    colbacktitle=white,
    coltitle=black,
    fonttitle=\bfseries,
    bottomrule=0pt,
    toprule=0pt,
    leftrule=0pt,
    rightrule=0pt,
    titlerule=0pt,
    arc=0pt,
    outer arc=0pt,
    colframe=black,
}
\tikzset{initial text={}}
\newcommand{\cA}{1}
\newcommand{\cB}{0}
\newcommand{\perfshift}[2]{%
\renewcommand{\cA}{#1}%
\renewcommand{\cB}{#2}%
\begin{tikzpicture}
    \pgfmathparse{\cB*-1} \pgfmathresult \let\paramMin\pgfmathresult
    \pgfmathparse{\cB} \pgfmathresult \let\paramMax\pgfmathresult
    \pgfmathparse{\cA} \pgfmathresult \let\paramValue\pgfmathresult
   
    \begin{axis}[hide axis, scale only axis, height=2ex, width=10ex,
    restrict y to domain=\paramMin:\paramMax]    
    \addplot +[mark size=1pt, fill=black, draw=black, color=black, black] coordinates {(\paramValue, 0)};
    \addplot [black] coordinates { (\paramMin,0) (\paramMax,0) };
    \addplot [black] coordinates { (0,1) (0,-1) };

    \end{axis}
\end{tikzpicture}
}
\newcommand{\perfshiftlog}[2]{%
\renewcommand{\cA}{#1}%
\renewcommand{\cB}{#2}%
\begin{tikzpicture}
    \pgfmathparse{\cB*-1} \pgfmathresult \let\paramMin\pgfmathresult
    \pgfmathparse{\cB} \pgfmathresult \let\paramMax\pgfmathresult
    \pgfmathparse{\cA} \pgfmathresult \let\paramValue\pgfmathresult
   
    \begin{axis}[hide axis, scale only axis, height=2ex, width=10ex,
    restrict y to domain=\paramMin:\paramMax,xmode=log]    
    \addplot +[mark size=1pt, fill=black, draw=black, color=black, black] coordinates {(\paramValue, 0)};
    \addplot [black] coordinates { (\paramMin,0) (\paramMax,0) };
    \addplot [black] coordinates { (0,1) (0,-1) };

    \end{axis}
\end{tikzpicture}
}
\newcommand\definetool[2]{\newcommand{#1}{{\textsf{#2}}\xspace}}
\definetool{\framac}     {Frama-C}
\definetool{\blast}     {Blast}
\definetool{\varvel}     {Varvel}
\definetool{\orion}     {Orion}
\definetool{\cpachecker}{CPAchecker}
\definetool{\smtinterpol}{SMTInterpol}
\definetool{\mathsat}{MathSAT5}
\definetool{\cbmc}      {Cbmc}
\definetool{\esbmc}     {ESBMC}
\definetool{\cil}       {Cil}
\definetool{\llvm}      {LLVM}
\definetool{\tvla}      {Tvla}
\definetool{\ocaml}     {OCaml}
\definetool{\tvp}       {Tvp}
\definetool{\camplp}    {CamlP4}
\definetool{\foci}      {Foci}
\definetool{\tcp}       {TCP}
\definetool{\escjava}   {ESC/Java}
\definetool{\slam}      {SLAM}
\definetool{\yogi}      {Yogi}
\definetool{\jpf}       {JPF}
\definetool{\sycmc}     {SyCMC}
\definetool{\impact}    {Impact}
\definetool{\wolverine} {Wolverine}
\definetool{\ufo}       {UFO}
\definetool{\dctwo}       {DC2}
\definetool{\ultimate}  {\textsc{Ultimate Automizer}}
\definetool{\automizer} {Automizer}
\newcommand{\abstentities}{\abststates}%
\newcommand{\abstentity}{\abststate}%
\newcommand{\true}{\ensuremath{\mathit{true}}\xspace}
\newcommand{\sem}[1]{[\![ #1 ]\!]}
\newcommand{\abst}[1]{\ensuremath{\langle\!\langle #1 \rangle\!\rangle}\xspace}
\newcommand{\abstraction}{\abst{\cdot}}
\newcommand{\meet}{\sqcap}
\newcommand{\bigmeet}{\bigsqcap}
\newcommand{\cpa}{\ensuremath{\mathbb{D}}\xspace}
\newcommand{\lattice}{\mathcal{E}}
\newcommand{\Natsz}{\mathbb{N}_0}
\newcommand{\Bools}{\mathbb{B}}
\newcommand{\less}{\sqsubseteq}
\newcommand{\join}{\sqcup}
\newcommand{\bigjoin}{\bigsqcup}
\newcommand{\strengthenop}{\ensuremath{\mathord{\downarrow}}\xspace}
\newcommand{\strengthen}{\strengthenop}
\newcommand{\transconc}[1]{\smash{\stackrel{#1}{\rightarrow}}}
\newcommand{\transabs}[2]{\smash{\stackrel[#2]{#1}{\rightsquigarrow}}}
\newcommand{\mergesepop}{\mathsf{merge}^\mathsf{sep}}
\newcommand{\stopop}{\ensuremath{\mathsf{stop}}\xspace}
\newcommand{\stopsepop}{\mathsf{stop}^\mathsf{sep}}
\DeclareSymbolFont{bbsymbol}{U}{bbold}{m}{n}
\DeclareMathSymbol{\bbpi}{\mathbin}{bbsymbol}{"05}
\newcommand{\lookahead}{\ensuremath{\ell}\xspace}%
\newcommand{\atmtstates}{\ensuremath{Q}\xspace}%
\newcommand{\atmtrelation}{\ensuremath{\delta}\xspace}%
\newcommand{\transducerstate}{\ensuremath{\iota}\xspace}%
\newcommand{\transducerstates}{\ensuremath{J}\xspace}%
\newcommand{\absttransducerstates}{\ensuremath{\transducerstates}\xspace}%
\newcommand{\absttransducerstate}{\ensuremath{\transducerstate}\xspace}%
\newcommand{\final}{\ensuremath{F}\xspace}%
\newcommand{\initial}{\ensuremath{{\transducerstate_0}}\xspace}%
\newcommand{\atmtfinal}{\final}%
\newcommand{\atmtinit}{\initial}%
\newcommand{\outrelation}{\ensuremath{\lambda}\xspace}
\newcommand{\concernmap}{\ensuremath{\zeta}\xspace}
\newcommand{\transtrans}{\ensuremath{t}\xspace}
\newcommand{\outalphabet}{\ensuremath{\Theta}\xspace}
\newcommand{\outsymbols}{\outalphabet}
\newcommand{\outwords}{\ensuremath{{\outalphabet^\infty}}}
\newcommand{\inputword}{{\ensuremath{\bar{\sigma}}\xspace}}
\newcommand{\inputwordset}{{\ensuremath{\hat{\sigma}}\xspace}}
\newcommand{\concinputword}{\ensuremath{\bar{\sigma}}\xspace}
\newcommand{\precisions}{\ensuremath{\Pi}\xspace}
\newcommand{\precision}{\ensuremath{\pi}\xspace}
\newcommand{\abstdomain}{\ensuremath{D}\xspace}
\newcommand{\abstworddomain}{\ensuremath{\overline{D}}\xspace}
\newcommand{\inputdomain}{\ensuremath{{\abstworddomain_\inputcomp}}\xspace}
\newcommand{\outputdomain}{\ensuremath{{\abstworddomain_\outputcomp}}\xspace}
\newcommand{\precop}{\mathsf{prec}}
\newcommand{\artefacts}{\ensuremath{A}\xspace}
\newcommand{\artefact}{\ensuremath{a}\xspace}
\newcommand{\artifact}{\artefact}
\newcommand{\artifacts}{\artefacts}
\newcommand{\artefactjoin}{\ensuremath{{\sqcup}}\xspace}
\newcommand{\concat}{\circ}%
\newcommand{\outclosure}{\ensuremath{\mathsf{abstclosure}}\xspace}%
\newcommand{\epsclosure}{\ensuremath{\mathsf{epsclosure}}\xspace}%
\newcommand{\closuretermstates}{\ensuremath{{\mathsf{closureterm}}}\xspace}%
\renewcommand{\hat}{\widehat}
\newcommand{\langin}{\ensuremath{\mathcal{L}_\textit{in}}\xspace}
\newcommand{\langacc}{\ensuremath{\mathcal{L}_\textit{acc}}\xspace}
\newcommand{\langua}{\ensuremath{\mathcal{L}}\xspace}%
\newcommand{\abstpath}{\ensuremath{{\bar e}}\xspace}%
\renewcommand{\path}{\ensuremath{\abstpath}\xspace}
\newcommand{\atmtinactive}{{\qoff}}%
\newcommand{\symtransducer}{\ensuremath{\mathsf{T}}\xspace}%
\newcommand{\transducer}{\symtransducer}%
\newcommand{\transduction}{\ensuremath{\mathcal{T}}\xspace}%
\newcommand{\transductions}{\transduction}%
\newcommand{\transducers}{\ensuremath{\mathbb{T}}}%
\newcommand{\yarn}{\textsc{Yarn}\xspace}
\newcommand{\trans}[1]{\smash{\xrightarrow{#1}}}
\newcommand{\mergeop}{\ensuremath{\mathsf{merge}}\xspace}
\newcommand{\properties}{\ensuremath{\mathbb{S}}\xspace}
\newcommand{\targetop}{\mathsf{target}}
\renewcommand{\hat}{\widehat}
\newcommand{\formula}{\ensuremath{\vartheta}\xspace}
\newcommand{\cfatransitions}{\ensuremath{G}\xspace}
\newcommand{\controlflows}{\cfatransitions}
\newcommand{\abststates}{\ensuremath{E}\xspace}
\newcommand{\abststate}{\ensuremath{e}\xspace}
\newcommand{\concretestates}{\ensuremath{C}\xspace}
\newcommand{\concretes}{\concretestates}
\newcommand{\powset}[1]{\ensuremath{2^{#1}}\xspace}
\newcommand{\qoff}{{q_\pi}}
\newcommand{\qinactive}{\atmtinactive}
\newcommand{\qbot}{q_\bot}
\newcommand{\srccode}[1]{\texttt{#1}}
\newcommand{\aspects}{\ensuremath{H}\xspace}
\newcommand{\concerns}{\aspects}
\newcommand{\sementail}{\ensuremath{\vDash}\xspace}%
\newenvironment{component}[1]%
{
  \par
  \smallskip
  \noindent
  \textcolor{gray}{\mbox{\textsf{#1}.}}~%
  \noindent\ignorespaces
}%
{
  \par
  \smallskip
  \ignorespacesafterend
} 
\newenvironment{operator}[1]%
{
  \par
  \smallskip
  \noindent
  \textcolor{gray}{\mbox{\textsf{\textsc{Operator} \textbf{#1}}.}}~%
  \noindent\ignorespaces
}%
{
  \par
  \smallskip
  \ignorespacesafterend
} 
\definecolor{bancolor}{RGB}{62,96,111}
\newtcolorbox{mybox}[2][]{myboxstyle,
title={\hspace*{.5cm}#2\hspace*{.5cm}},#1}
\tikzset{
 ctrlstate/.style = {state,align=center,inner sep=2pt, minimum size=2mm},
 cfastate/.style = {ctrlstate},
 cfatargetstate/.style = {cfastate, double},
 compstate/.style = {cfastate, rounded rectangle, minimum height=5mm, minimum width=3em, inner sep=3pt},
 concept/.style = {cfastate, inner sep=3pt, fill=gray!50, rectangle, 
        minimum width=17mm, minimum height=9mm, draw=gray!6 },
 crosscutting/.style = {concept, rounded rectangle, fill=gray!30},
 conceptstate/.style = {concept, rounded rectangle, fill=gray!30, draw=gray!90, minimum width=20mm},
 inputstate/.style = {concept, rectangle, fill=gray!10, draw=gray!90, minimum width=20mm, inner sep=3pt},
 explain/.style = {circle, draw=gray!30, line width=1mm, minimum size=7mm},
 one/.style = {fill=blue!70,draw=blue!70},
 two/.style = {fill=red!50,draw=red!50},
 abststate/.style = {rectangle,align=center,inner sep=2pt,minimum size=3.5mm,fill=gray!0, draw=gray!90},
 line/.style = {draw},
 trans/.style = {draw,semithick,->,shorten >=1pt,>=stealth'},
 ctrans/.style = {draw,very thick,->,shorten >=1pt,>=stealth',draw=gray!90},
 epsilon/.style = {trans,dashed},
 strengthen/.style = {draw=gray!30,semithick,double,shorten >=1pt,>=stealth',line width=1mm},
}
\newcommand{\inputalphabet}{\ensuremath{\Sigma}\xspace}%
\newcommand{\outputalphabet}{\ensuremath{\Theta}\xspace}%
\newcommand{\inputcomp}{\ensuremath{{\text{in}}}\xspace}%
\newcommand{\outputcomp}{\ensuremath{{\text{out}}}\xspace}%
\newcommand{\abstractlang}{\ensuremath{\mathfrak{L}}\xspace}%
\newcommand{\abstractlangm}{\ensuremath{\mathfrak{M}}\xspace}%
\newcommand{\abstractword}{\ensuremath{\mathfrak{v}}\xspace}%
\newcommand{\abstractwordv}{\ensuremath{\mathfrak{w}}\xspace}%
\newcommand{\abstractwords}{\ensuremath{\mathfrak{I}}\xspace}%
\newcommand{\abstractwordsv}{\ensuremath{\mathfrak{W}}\xspace}%
\newcommand{\abstractepsilon}{\ensuremath{\abstractword_\epsilon}\xspace}%
\newcommand{\concretewords}{\ensuremath{W}\xspace}%
\newcommand{\concreteword}{\ensuremath{\bar{w}}\xspace}%
\newcommand{\inputwords}{\ensuremath{\abstractwords}\xspace}%
\newcommand{\outputwords}{\ensuremath{\abstractwordsv}\xspace}%
\newcommand{\outputlang}{\ensuremath{{\abstractlangm}}\xspace}%
\newcommand{\outputword}{\ensuremath{{\abstractwordv}}\xspace}%
\newcommand{\abstoutputword}{\outputword}%
\newcommand{\abstinputword}{\abstractword}%
\newcommand{\run}{\ensuremath{\bar{\transducerstate}}\xspace}%
\def\HiLi{\leavevmode\rlap{\hbox to \hsize{\color{yellow!50}\leaders\hrule height .8\baselineskip depth .5ex\hfill}}}
\begin{document}

\title{Abstract Transducers}
\titlenote{This is a preliminary report. Please refer to the final version of this work if available. }

\author{Andreas Stahlbauer}
\email{andreas@stahlbauer.net}
\orcid{0000-0003-4174-7242}
\affiliation{\institution{University of Passau}}

\begin{abstract}
Several abstract machines that operate on symbolic input alphabets 
have been proposed in the last decade, for example, symbolic 
automata or lattice automata.
Applications of these types of automata include
software security analysis and natural language processing.
%
While these models provide means to describe words over infinite 
input alphabets, there is no considerable work on symbolic output~(as 
present in transducers) alphabets, or even abstraction (widening) thereof. 
Furthermore, established approaches for transforming, for example,
minimizing or reducing, finite-state machines that produce output
on states or transitions are not applicable.
A notion of equivalence of this type of machines is needed
to make statements about whether or not transformations
maintain the semantics. 

We present \emph{abstract transducers} as a new form
of finite-state transducers.
Both their input alphabet and the output alphabet
is composed of \emph{abstract words}, where 
one abstract word represents a set of concrete words.
The mapping between these representations is described  
by \emph{abstract word domains}.
By using words instead of single letters, abstract transducers 
provide the possibility of lookaheads to decide on state transitions
to conduct.
Since both the input symbol and the output symbol on each 
transition is an abstract entity, \emph{abstraction} techniques 
can be applied naturally.

We apply abstract transducers as the foundation for 
\emph{sharing task artifacts for reuse} in context of program analysis
and verification, and describe task artifacts as abstract words.
A task artifact is any entity that contributes to an analysis 
task and its solution, for example, candidate invariants
or source code to weave.

\keywords{Transducers \and Transducer Abstraction \and Sharing and Reuse}
\end{abstract}

 \begin{CCSXML}
<ccs2012>
<concept>
<concept_id>10003752.10003753.10010622</concept_id>
<concept_desc>Theory of computation~Abstract machines</concept_desc>
<concept_significance>500</concept_significance>
</concept>
<concept>
<concept_id>10003752.10003766.10003770</concept_id>
<concept_desc>Theory of computation~Automata over infinite objects</concept_desc>
<concept_significance>500</concept_significance>
</concept>
<concept>
<concept_id>10003752.10010124.10010138.10011119</concept_id>
<concept_desc>Theory of computation~Abstraction</concept_desc>
<concept_significance>500</concept_significance>
</concept>
<concept>
<concept_id>10003752.10010124.10010138.10010143</concept_id>
<concept_desc>Theory of computation~Program analysis</concept_desc>
<concept_significance>300</concept_significance>
</concept>
<concept>
<concept_id>10011007.10010940.10010992.10010998.10011000</concept_id>
<concept_desc>Software and its engineering~Automated static analysis</concept_desc>
<concept_significance>300</concept_significance>
</concept>
<concept>
<concept_id>10011007.10011006.10011060.10011063</concept_id>
<concept_desc>Software and its engineering~System modeling languages</concept_desc>
<concept_significance>300</concept_significance>
</concept>
</ccs2012>
\end{CCSXML}

\ccsdesc[500]{Theory of computation~Abstract machines}
\ccsdesc[500]{Theory of computation~Automata over infinite objects}
\ccsdesc[500]{Theory of computation~Abstraction}
\ccsdesc[300]{Software and its engineering~Automated static analysis}
\ccsdesc[300]{Software and its engineering~System modeling languages}

\maketitle

\newpage
\section{Introduction}

\newcommand{\singlewordset}[1]{\ensuremath{\srccode{#1}}\xspace}%
\newcommand{\epsilontrigger}{\ensuremath{\{ \epsilon \}}\xspace}%

\begin{figure}[tp]
\centering
\begin{tikzpicture}[node distance=10mm, scale=0.8, transform shape,
cell/.style={rectangle,draw=black},
space/.style={minimum height=1.2em,matrix of nodes,row sep=-\pgflinewidth,column sep=-\pgflinewidth}]
\node[] (es) {};
\node[cfastate,right of=es] (e0) {$q_0$};
\node[cfastate,right=of e0] (e1) {$q_1$};
\node[cfastate,right=of e1] (e2) {$q_2$};
\node[cfastate,above=of e2] (e3) {$q_3$};
\node[cfastate,right=of e3] (e4) {$q_4$};
\node[cfastate,below right=of e0] (e7) {$q_7$};
\node[cfastate,right=of e7] (e8) {$q_8$};
\node[cfastate,below right=of e2] (e5) {$q_5$};
\node[cfastate,right=of e5] (e6) {$q_6$};
\path[trans] (es) edge node [pos=.2,label=left:{$/ \{ \singlewordset{p} \} $}] {} (e0);
\path[trans] (e0) edge node [pos=.5,label=above:{$\{ \singlewordset{a} \} / \{ \epsilon \}$}] {} (e1);
\path[trans] (e1) edge node [pos=.5,label=below:{$\{ \singlewordset{b} \} / \{ \singlewordset{s} \}$}] {} (e2);
\path[trans] (e2) edge node [pos=.5,label=left:{$\epsilontrigger / \{ \singlewordset{t} \} $}] {} (e3);
\path[trans] (e0) edge node [pos=.7,label=left:{$\{ \singlewordset{d} \} / \{ \epsilon \} $}] {} (e7);
\path[trans] (e7) edge node [pos=.5,label=below:{$\{ \singlewordset{e} \} / \{ \singlewordset{y} \} $}] {} (e8);
\path[trans] (e4) edge node [pos=.5,label=right:{$\epsilontrigger / \{ \singlewordset{v} \} $}] {} (e2);
\path[trans] (e3) edge node [pos=.5,label=above:{$\epsilontrigger / \{ \singlewordset{u} \} $}] {} (e4);
\path[trans] (e2) edge node [pos=.2,label=right:{$\epsilontrigger / \{ \singlewordset{w} \} $}] {} (e5);
\path[trans] (e5) edge node [pos=.5,label=below:{$\{ \singlewordset{c} \} / \{ \singlewordset{x} \}$}] {} (e6);
\matrix (first) at (10,0) [space, column 1/.style={anchor=base west, scale=0.8}, column 2/.style={anchor=base west, scale=0.8}]
{
 Input & Output \\ 
 $\epsilon$ & \srccode{p} \\
 \srccode{a} & \srccode{p} \\
 \srccode{de} & \srccode{py} \\
 \srccode{ab} & \srccode{ps(tuv)*w} \\
 \srccode{abc} & \srccode{ps(tuv)*wx} \\
};
%
%
\end{tikzpicture}
\caption{
Each transition of an abstract transducer is annotated with an 
abstract input word~\abstinputword and an abstract output word~\abstoutputword. 
An abstract input word denotes~$\sem{\abstinputword} \subseteq \inputalphabet^*$ 
a set of words over an input alphabet~\inputalphabet, 
an abstract output word denotes~$\sem{\abstoutputword} \subseteq \outputalphabet^\infty$ a set of words over an output alphabet~\outputalphabet; 
the illustration shows the sets of concrete words.
Please note that already the activation of the initial state~$q_0$
emits an abstract output word---in this case: the 
set~$\{ \srccode{p} \}$ of concrete words.
We use the abstract epsilon word~$\abstractepsilon$, 
with~$\sem{\abstractepsilon} = \{ \epsilon \}$,
with the semantics that is known from $\epsilon$-NFAs~\cite{HopcroftEtal2003,Sipser}. 
Transitions with the abstract input word~$\abstractepsilon$ are called
$\epsilon$-moves and can result in output words of infinite length, 
which result from $\epsilon$-loops---for example, the loop 
$q_2 \transconc{} q_3 \transconc{} q_4 \transconc{} q_2$.
The table on the right shows---for the transducer on the left---a 
set of input words and corresponding output words in the 
form of regular expressions.
}
\label{fig:artitrans:example}%
\label{fig:intro:ex1}%
\end{figure}

We present \emph{abstract transducers} as a new type of abstract machines 
that operate on an abstract input alphabet and an abstract output 
alphabet, and that have an inherent notion of abstraction.
Both the input alphabet and the output alphabet are described 
based on abstract domains, which enables different forms 
of abstracting these transducers and allows for different forms
of \emph{symbolic} representations.
An abstract representation of words is essential for creating 
\emph{finite abstractions} of possibly exponentially many and 
infinitely long output words, and abstraction of a transducer 
allows to increase the \emph{sharing} of its outputs, that is, 
one output becomes applicable to a wider set of input words.
Different abstract domains, and respective lattices, have been 
proposed to represent and abstract states and behaviors of systems
and their relationships~\cite{AbstractInterpretation}.
An abstract domain provides means to map between abstract and 
concrete entities.
Combining abstract domains and finite-state transducers 
results in a generic formalism that provides a \emph{unified view} 
on different types of automata and transducers, 
and enables \emph{new applications} in different areas, 
for example, in program analysis and verification.
Figure~\ref{fig:intro:ex1} illustrates the
working principle of abstract transducers.

\paragraph{Problems.}%
Abstract transducers address several problems:
(1)~In case alphabets consist of many, possibly exponentially many, 
symbols, traditional automata concepts with single concrete symbols 
per transitions provide limited efficiency.
Automata that employ a symbolic alphabet---where one symbol
from the alphabet denotes a set of concrete symbols---%
solve this issue~\cite{NoordGerdemann2001,Veanes2013}.
Having a symbolic representation of alphabet symbols makes approaches 
for abstracting~(or widening)~finite-state machines---such as 
relational abstraction or alphabet 
abstraction~\cite{BultanEtal2017,PredaEtal2015}---applicable.
We use abstract domains, as known from abstract interpretation,
for constructing symbolic representations, and mapping between
concrete and symbolic alphabets.
This way, we can choose from a large variety of abstract domains
to provide different symbolic and explicit mechanisms for representing 
data, for example, binary decision diagrams~\cite{BDDs}, 
predicates~\cite{PredicateAbstraction,BooleanCartesian}, 
or polyhedra~\cite{SinghEtal2017}.
%
Abstraction is also essential for output words, which are produced
by transducers, and has not yet received attention by researchers.
(2)~We allow the transducers to have $\epsilon$-moves that
are annotated with outputs, which can lead to output words
of \emph{infinite length}; here, a symbolic representation of sets 
of output words, based on corresponding abstract domains for the output 
alphabet, can help to provide a finite representation that represents 
or even overapproximates sets of exponentially many and infinitely long words.
%
By having a means for abstracting both the input alphabet and 
the output alphabet, we can implement further, more elaborated 
techniques with various applications.
We abstract our transducers to \emph{increase the sharing} of the 
output they emit. 
An abstract transducer might have been constructed to produce its 
output for a specific set of input words that can be found in a 
specific analysis task, that is, (3)~the reuse of the output can 
be limited to a specific set of analysis tasks, while the output 
would also be applicable to a broader set of tasks.
Sharing is increased if a given output word becomes produced
for a larger set of input words---that is, we take advantage of 
the nondeterminism that abstraction introduces~\cite{AvniKupferman2013}.
%
The alphabets from that these words can be composed of can~(in general) 
consist of arbitrarily complex entities~(symbols), 
for example, tuples of concrete letters as used for 
multi-track automata~\cite{BultanEtal2017}.
(4)~Nevertheless, also for these complex symbols, a means of 
abstraction is needed.
Constructing complex alphabets, and words thereof, based on abstract 
product domains~\cite{CortesiEtal2013} addresses this issue.

\begin{figure}[tp]
\begin{center}
\begin{subfigure}[b]{.49\linewidth}
\begin{tikzpicture}[node distance=10mm, scale=0.8, transform shape]
\node[] (qs) {};
\node[cfastate,below of=qs] (q0) {$q_0$};
\node[cfastate,below right=of q0] (q1) {$q_1$};
\node[cfatargetstate,below right=of q1] (q2) {$q_2$};
\path[trans] (qs) edge node [pos=.3,label=left:{$/\abstoutputword_\epsilon$}] {} (q0);
\path[trans] (q0) edge node [pos=.2,label=right:{$\abstinputword_1/\abstoutputword_\epsilon$}] {} (q1);
\draw[trans,bend left] (q1) edge node [pos=.5,label=left:{$\abstinputword_2/\abstoutputword_\epsilon$}] {} (q0);
\draw[trans] (q0) edge [out=10,in=50,looseness=8] node [pos=.5,label=right:{$\lnot \abstinputword_1/\abstoutputword_\epsilon$}] {} (q0);
\draw[trans] (q1) edge [out=10,in=50,looseness=8] node [pos=.5,label=right:{$\abstinputword_3/\abstoutputword_1$}] {} (q1);
\draw[trans] (q1) edge [out=190,in=230,looseness=8] node [pos=.5,label=left:{$\lnot \abstinputword_2/\abstoutputword_\epsilon$}] {} (q1);
\draw[trans] (q1) edge node [pos=.3,label=right:{$\abstinputword_3/\abstoutputword_2$}] {} (q2);
\node[] (t1) at (-0.5,-3.5) {$\sem{\abstoutputword_\epsilon} = \{ \epsilon \}$};
\node[below=0.1em of t1] (t2) {$\sem{\abstoutputword_1} = \{ \srccode{\$2 != 7} \}$};
\node[below=0.1em of t2] (t3) {$\sem{\abstoutputword_2} = \{ \srccode{\$2 == 7} \}$};
\node[] (t5) at (2.8,-4.7) {$\sem{\abstinputword_1} = \{ \srccode{enter()} \}$};
\node[below=0.1em of t5] (t6) {$\sem{\abstinputword_2} = \{ \srccode{leave()} \}$};
\node[below=0.1em of t6] (t7) {$\sem{\abstinputword_3} = \{ o \concat \srccode{alloc(\$1, \$2)} \; | \; o \in \outputalphabet \}$};
\end{tikzpicture}
\caption{\yarn Transducer}
\label{fig:ats:example:yarn}
\end{subfigure}
\begin{subfigure}[b]{.49\linewidth}
\begin{tikzpicture}[node distance=10mm, scale=0.8, transform shape]
\node[] (dummy) at (-0.5,0) {};
\node[] (qs) {};
\node[cfastate,below of=qs] (q0) {$q_0$};
\node[cfastate,below right=of q0] (q1) {$q_1$};
\node[cfastate,below right=of q1] (q2) {$q_2$};
\node[cfastate,below right=of q2] (q3) {$q_3$};
\path[trans] (qs) edge node [pos=.3,label=left:{$/\abstoutputword_\epsilon$}] {} (q0);
\path[trans] (q0) edge node [pos=.2,label=right:{$\abstinputword_1/\abstoutputword_1$}] {} (q1);
\draw[trans] (q0) edge [out=10,in=50,looseness=8] node [pos=.5,label=right:{$\lnot \abstinputword_1/\abstoutputword_\epsilon$}] {} (q0);
\draw[trans] (q1) edge [out=10,in=50,looseness=8] node [pos=.5,label=right:{$\lnot \abstinputword_1/\abstoutputword_1$}] {} (q1);
\draw[trans] (q1) edge node [pos=.2,label=right:{$\abstinputword_1/\abstoutputword_2$}] {} (q2);
\draw[trans] (q2) edge [out=10,in=50,looseness=8] node [pos=.5,label=right:{$\lnot \abstinputword_1/\abstoutputword_2$}] {} (q2);
\draw[trans] (q2) edge node [pos=.2,label=right:{$\abstinputword_1/\abstoutputword_3$}] {} (q3);
\draw[trans] (q3) edge [out=10,in=50,looseness=8] node [pos=.5,label=right:{$\top/\abstoutputword_3$}] {} (q3);
\node[] (t1) at (0.3,-3.2) {$\sem{\abstoutputword_\epsilon} = \{ \true \}$};
\node[below=0.1em of t1] (t2) {$\sem{\abstoutputword_1} = \{ i \equiv 1 \}$};
\node[below=0.1em of t2] (t3) {$\sem{\abstoutputword_2} = \{ i \equiv 2 \}$};
\node[below=0.1em of t3] (t4) {$\sem{\abstoutputword_3} = \{ i \equiv 3 \}$};
\node[] (t5) at (3.7,-5.2) {$\sem{\abstinputword_1} = \{ \srccode{i = i + 1} \}$};
\end{tikzpicture}
\caption{Precision Transducer}
\end{subfigure}
\end{center}
\caption{Two types of abstract transducers are illustrated: 
A \yarn transducer that emits code to weave, that is, it corresponds 
to an aspect in AOP, and a precision transducer, which is a means 
to share candidate invariants, or predicates, for (re-)use in 
predicate abstraction.
Please note that the abstract input word~$\abstinputword_3$ describes a 
lookahead, that is, it contains a word~$\bar{\sigma} \in \sem{\abstinputword}$ 
with~$|\bar{\sigma}| > 1$.
The lookahead matches if the input that remains to be consumed 
after word~$\abstinputword_3$ matched starts 
with~$\srccode{alloc(\$1,\$2)}$, where~$\srccode{\$1}$
and~$\srccode{\$2}$ are parameters that bind the arguments that
are given to the function~$\srccode{alloc}$ to internal variables,
which are then used to produce the concrete output for the 
abstract output words~$\abstoutputword_1$ and~$\abstoutputword_2$.}
\label{fig:example:ats}%
\label{fig:ex2}%
\end{figure}
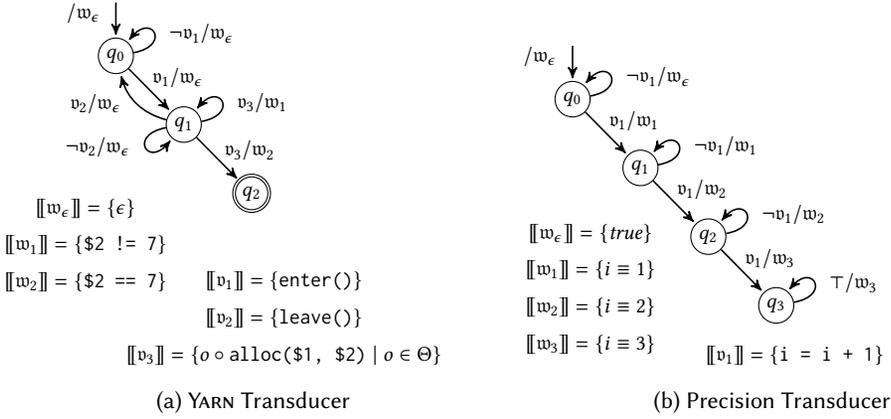

\paragraph{Applications.}%
We instantiate abstract transducers as task artifact transducers.
A \emph{task artifact transducer} is an abstract transducer that 
maps between a set of control paths of a given program to analyze 
and a set of task artifacts, which are intended to be shared for reuse.
Task artifact transducers are a \emph{generic means to provide 
artifacts that contributes to an analysis task and its solution}.
These task artifact transducers aid in various analysis tasks 
for that task artifacts, for example, intermediate verification 
results, have to be provided at specific points and in specific 
contexts in the control flow.
By using such transducers as means for sharing artifacts for reuse,
we \emph{gain precise control} over the sharing process: 
We can precisely specify at which points and in which context~(path 
prefix), of the control flow of a program, certain artifacts 
should be shared for reuse.
We use them both to construct the transition relation of the 
analysis task itself, and for constructing a state-space abstraction 
with a finite number of abstract states in an efficient and effective 
manner, that is, for sharing syntactic and semantic task artifacts. 
Syntactic task artifacts include, for example, components, aspects, or
assertions to check~\cite{AOP,SLIC}.
Semantic task artifacts include, for example, 
function summaries~\cite{SeryEtal2012}, invariants, 
or Craig interpolants~\cite{LazyAbstraction,AbstractionsFromProofs}.
The goal of sharing task artifacts is to make the overall 
process of constructing syntactic and semantic task models 
more efficient and effective.

We present two forms of task artifact transducers based on 
abstract transducers in another work~\cite{thesis:stahlbauer}:
\yarn transducers and precision transducers.
Figure~\ref{fig:example:ats} provides examples for these types
of abstract transducers.
A \emph{\yarn transducer} can express aspects---source code, 
or labeled transition systems (LTSs) in general, to emit at 
specific points---to weave into a control-flow graph. 
Such aspects can, for example, provide the environment 
model or a specification. It must be possible to emit code to 
weave before any of the transitions that are processed as input: 
An initial transducer output is needed. For soundness, operations 
such as $\epsilon$-elimination, union, or reduction must keep the 
semantics---including their temporal relationships, also 
concurrency---of these aspects.
A \emph{precision transducer} is annotated with sets of 
predicates~(candidate invariants) to emit for reuse in different 
contexts of the transition system to construct (for example, 
a Kripke structure) in an analysis process.
The shared predicates can be used to compute predicate abstractions~(as 
used for software model checking~\cite{PredicateAbstraction,BallEtal2001}), 
the number of CEGAR~\cite{CEGAR,PrecisionReuse} iterations can 
be reduced by abstracting these transducers, which increases 
sharing (the same predicate can be emitted in more contexts). 
Such precision transducers can also express the predicate
sharing strategy of lazy abstraction~\cite{LazyAbstraction}.

\subsubsection{Contributions}
This work presents the following contributions and shares
most of the material with the author's thesis~\cite{thesis:stahlbauer}:
\begin{itemize}
\setlength\itemsep{0.4em}
  \item \emph{Abstract Transducers.} 
    We introduce \emph{abstract transducers} as a generic 
    and unifying type of abstract machines that use \emph{abstract 
    word domains} to characterize both the input alphabet and the 
    output alphabet, and that have an inherent notion of \emph{abstraction}.

  \item \emph{Abstract Output Closure.}
    We present techniques for computing \emph{finite abstractions 
    of the output} of $\epsilon$-closures with \emph{$\epsilon$-loops},
    which are possible if $\epsilon$-moves are allowed.
    These techniques allow to produce finite outputs from transducers 
    with outputs that describe exponentially large sets of potentially 
    infinitely long words, and they aid in eliminating the $\epsilon$-moves.

  \item \emph{Transducer Abstraction.}
    We exactly define what it means to \emph{abstract} (or overapproximate)
    an abstract transducer. Based on this notion of abstraction
    we discuss different \emph{types of abstractions} and define 
    corresponding operators. 

  \item \emph{Transducer Reduction.}
    After defining the notion of \emph{equivalence} of abstract transducers,
    we discuss different transformations that maintain their semantics
    while \emph{reducing} their number of control states and transitions.
    Such reduction techniques help to reduce the degree of non-determinism,
    which reduces the costs of executing abstract transducers.
  
  \item \emph{Transducer Analysis.}
    We present an~\emph{abstract transducer analysis} as a 
    generic configurable program analysis for running different 
    types of abstract transducers.

  \item \emph{Task Artifact Transducers.}
    We instantiate abstract transducers as \emph{task artifact transducers} 
    to have a generic means to share various artifacts that contribute 
    to different concerns of an analysis task.
    Task artifact transducers foster the reuse of components of 
    an analysis task and the intermediate analysis~(reasoning) results 
    that are produced while conducting an analysis.

\end{itemize}

\subsubsection{Key Insights}
(1)~Abstract transducers have a fundamentally different semantic
compared to other transducers with symbolic alphabets, such as
symbolic transducers~\cite{DAntoniVeanes2017}, which becomes
obvious when comparing their notions of equivalence.
(2)~Existing algorithms for transforming finite-state transducers
are not applicable for abstract transducers.
(3)~Abstracting abstract transducers is a means for systematically
increasing the scope of sharing artifacts for reuse.

\section{Preliminaries}
\newcommand{\abstconcat}{\ensuremath{\varotimes}\xspace}%
\newcommand{\prefixof}{\ensuremath{\preceq}\xspace}%
\newcommand{\alphabet}{\ensuremath{\Sigma}\xspace}%
\newcommand{\words}{\ensuremath{\alphabet^*}\xspace}%
\newcommand{\infinitewords}{\ensuremath{\alphabet^{\omega}}\xspace}%
\newcommand{\eitherwords}{\ensuremath{\alphabet^{\square}}\xspace}%
\newcommand{\allwords}{\ensuremath{\alphabet^{\infty}}\xspace}%
\newcommand{\finitewords}{\ensuremath{\words}\xspace}%
\newcommand{\word}{\ensuremath{\bar{\sigma}}\xspace}%
\newcommand{\worda}{\ensuremath{\bar{a}}\xspace}%
\newcommand{\wordb}{\ensuremath{\bar{b}}\xspace}%
\newcommand{\wordc}{\ensuremath{\bar{c}}\xspace}%
\newcommand{\letter}{\ensuremath{\sigma}\xspace}%
\newcommand{\functions}{\ensuremath{F}\xspace}%
\newcommand{\function}{\ensuremath{f}\xspace}%
\newcommand{\functionof}{\ensuremath{\functions}\xspace}%
\newcommand{\powerlatticeof}[1]{\ensuremath{\mathsf{pw}(#1)}\xspace}%
\newcommand{\flatlatticeof}[1]{\ensuremath{\mathsf{fl}(#1)}\xspace}%
\newcommand{\maplatticeof}[2]{\ensuremath{\mathsf{ml}(#1, #2)}\xspace}%
\newcommand{\downpowerlatticeof}[1]{\ensuremath{\wp_\downarrow(#1)}\xspace}%
\newcommand{\downlatticeof}[1]{\downpowerlatticeof{#1}}%
\newcommand{\downsetsof}[1]{\ensuremath{\mathcal{I}(#1)}}%
\newcommand{\downsetof}[1]{\ensuremath{\downarrow #1}}%
\newcommand{\latticeof}[1]{\ensuremath{\dddot{#1}}}%
\newcommand{\latelems}{\ensuremath{E}\xspace}%
\newcommand{\keys}{\ensuremath{\textsf{keys}}}%
\newcommand{\imagejoin}{\ensuremath{\bigjoin_\rightarrow}\xspace}%
\newcommand{\prefixlatticeof}[1]{\ensuremath{\mathsf{pr}(#1)}\xspace}%
\newcommand{\concrete}{\ensuremath{C}\xspace}%
\newcommand{\semilattice}{\ensuremath{\mathcal{E}}\xspace}%

We start with preliminaries, including the notation:
We denote \emph{sets} $A,B,\ldots$ by upper case letters or add 
a hat~$\hat{a}, \hat{b}, \ldots$ to signal that an entity is a set.
\emph{Set elements}~$a, b, \ldots$ are denoted by lower case 
letters.
We add a bar~$\bar{a}, \bar{B}, \ldots$ to denote \emph{lists~(or sequences)}.
Elements of sets are enclosed~$\{ a, b, c \}$ in curly brackets, 
components of tuples are enclosed~$(A, x, y)$ in round brackets,
elements of lists are enclosed~$\langle a, b, c \rangle$ in angle brackets.

\subsubsection{Languages and Words}
The set of all \emph{finite words} over an \emph{alphabet}~$\Sigma$ 
is denoted by~$\Sigma^*$, which is a 
free monoid~$\Sigma^* = (\Sigma, \concat, \epsilon)$, 
also known as Kleene star, where 
\emph{concatenation}~$\concat: \words \times \words \rightarrow \words$
is its binary operator and its neutral element~$\epsilon$ is 
the \emph{empty word}.
A \emph{word}~$\bar{\sigma}$ is a sequence~$\langle \sigma_1, \ldots, \sigma_n \rangle$ 
of symbols from the alphabet.
The \emph{length}~$|\bar{\sigma}|$ of a word~$\bar{\sigma}$
is its number of subsequent symbols; the empty word 
has length~$|\epsilon| = |\langle \rangle| = 0$.
Given two words~$\bar{\sigma} = \langle \sigma_1, \ldots, \sigma_n \rangle$ 
and $\bar{\tau} = \langle \tau_1, \ldots, \tau_m \rangle$,
the concatenation~$\bar{\sigma} \concat \bar{\tau}$ 
results in the word~$\bar{c} = \langle \sigma_1, \ldots, \sigma_n, 
\tau_1, \ldots, \tau_m \rangle$ 
with length~$|\bar{c}| = |\bar{\sigma}| + |\bar{\tau}|$.
A word~$\word_a$ is \emph{prefix} of another word~$\word_b$, that is,
it is element~$(\word_a, \word_b) \in \prefixof$ of the prefix
relation~$\prefixof$, if there exists a \emph{suffix}~$\word_s \in \Sigma^*$ 
such that~$\word_b = \word_a \concat \word_s$.
While the finitely words over an alphabet~$\alphabet$ are denoted 
by~$\finitewords$, the \emph{infinitely long} ones are denoted 
by~$\infinitewords$, and the set of all words is denoted 
by~$\allwords = \finitewords \cup \infinitewords$~\cite{LodingTollkoetter2016},
with the \emph{infinite iteration}~$\cdot^\omega$ and 
the \emph{finite iteration}~$\cdot^*$.
The set of all words over an alphabet~$\Sigma$ that is 
described by a structure~$S$ and that are considered well-formed 
regarding certain production rules is called 
\emph{language}~$\langua(S) \subseteq \Sigma^*$ of $S$.
The \emph{empty word language} $\{ \epsilon \}$ consists of 
the empty word~$\epsilon$ only, the \emph{empty language} 
corresponds to the empty set~$\emptyset$.

\subsubsection{Lattices}
A (complete) \emph{lattice}~\cite{Graetzer2011} is a 
tuple~$\latticeof{E} = (\latelems, \less, \meet, \join, \top, \bot)$, 
with a set of \emph{abstract elements}~$\latelems$ and a 
\emph{partial order relation}~$\less \; \subseteq \latelems \times \latelems$.
The operator~\emph{meet} is a relation~$\meet: (\latelems \times \latelems) \rightarrow \latelems$ 
that provides the \emph{greatest lower bound} for a given 
pair~$(\abstentity_1, \abstentity_2) \in \latelems \times \latelems$
of abstract elements.
The operator \emph{join} is a relation~$\join: \latelems \times \latelems \rightarrow \latelems$ 
that provides the \emph{least upper bound} for a given pair 
of abstract elements.
The \emph{bottom} element~$\bot$ is the \emph{least} element in 
the partial order relation, that is, there exists no other 
element~$\abstentity \less \bot$, with~$\abstentity \neq \bot$.
The \emph{top} element~$\top$ is the \emph{greatest} element in 
the partial order, that is, there exists no other 
element~$\top \less \abstentity$, with~$\abstentity \neq \top$.
The operators~$\meet$ and~$\join$ extend to sets naturally: 
The meet over a set of abstract elements is denoted 
by~$\bigmeet: 2^\latelems \times \latelems$, and the join 
by~$\bigjoin: 2^\latelems \times \latelems$, for example, 
$\bot = \bigmeet \latelems$ and $\top = \bigjoin \latelems$.
%
A \emph{semi lattice} either has no meet or no join 
for all abstract elements.
The relation~$\less$ is also called the \emph{inclusion relation}~\cite{Whitman1946}.
Partial ordered sets (\emph{posets}) can be made semi-lattices, and semi-lattices 
can be made \emph{complete} by adding additional abstract elements~\cite{LatticeTheoryComputerScience}.
A lattice element~$e_c \in \latelems$ is called \emph{complement}
of an element~$e \in \latelems$ iff~$e \meet e_c = \bot$
and~$e \join e_c = \top$. 
We denote the complement of an lattice element~$e$ by~$\lnot e$.
A lattice is called~\emph{complemented} iff there exists
a complement for all its elements.

\subsubsection{Powerset Lattice}
The powerset lattice that describes a Hoare powertheory~\cite{HandbookOfLogic,FergusonHughes1989,HartTsinakis2007}---%
over a given lattice~$\latticeof{E} = (\latelems, \less_E, \meet_E, \join_E, \top_E, \bot_E)$---%
is denoted by~$\powerlatticeof{\latticeof{E}} = (2^\latelems, \less, \meet, \join, \top, \bot)$,
where the set of elements is constituted by the set of all 
subsets~$2^\latelems$ of set~$\latelems$.
The inclusion relation~$\less$ has the element~$(E_1, E_2) \in \less$ 
if and only if~$\forall e_1 \in E_1 \exists e_2 \in E_2: e_1 \less_E e_2$.
The join~$\join(E_1, E_2) = E_1 \cup E_2$ is the union,
and the meet~$\meet(E_1, E_2) = E_1 \cap E_2$ is the intersection
of two given sets~$E_1, E_2 \subseteq E$.
The bottom element~$\bot = \emptyset$ is the empty set,
and the top element~$\top = E$ corresponds to the set with all elements.

Any complemented distributive lattice is isomorphic to a Boolean 
algebra~\cite{Huntington1904},
which also follows from the Stone duality~\cite{Stone1936};
one example for such lattices are powerset lattices.
Lattices generalize Boolean algebras by not requiring 
complement and distributivity in the first hand.

\subsubsection{Map Lattice}
A \emph{map lattice}~$\maplatticeof{K}{\latticeof{V}} = (2^{K \rightarrow V}, \less, \meet, \join, \top, \bot)$
is a lattice of elements that are maps, that is, the elements are
functions that map from a set~$K$ of keys to a set~$V$ of values; 
the values of this map are elements of another  
lattice~$\latticeof{V} = (V, \less_V, \meet_V, \join_V, \top_V, \bot_V)$.
Such a lattice are also known as \emph{function lattice}~\cite{DuffusJonssonRival1978,BackWright1990}.
The inclusion relation~$\less$ has element~$(m_1, m_2) \in \less$ 
if and only if $\forall k \in K:  m_1(k)_\bot \less_V m_2(k)_\bot$.
In the following, we rely on the function $m(k)_\bot = m(k) \; \text{if} \; (k, \cdot) \in m \; \text{otherwise} \; \bot_V$ 
which returns the value for a given key~$k$ from a map~$m$, and the 
bottom element of the value lattice if no entry for the key is present.
The meet~$\meet$ is defined by~$\meet(m_1, m_2) = \{ (k, v_1 \meet_V v_2) \, | \, (k, v_1) \in m_1 \land v_2 = m_2(k)_\bot \}$,
the join~$\join$ is defined by $\join(m_1, m_2) = \{ (k, m_1(k)_\bot \join_V m_2(k)_\bot) \, | \, k \in \keys(m_1) \cup \keys(m_2) \}$,
the top element~$\top$ is defined by~$\top = \{ (k, \top_V) \, | \, k \in K \}$,
and the bottom element~$\bot$ is defined by~$\bot = \{ (k, \bot_V) \, | \, k \in K \}$.

We define an \emph{image-join operator}~$\imagejoin: 2^{K \times E} \rightarrow 2^{K \rightarrow E}$:
Given a map~$M \subseteq K \times E$, with a set of keys~$K$, and 
a set of lattice elements~$E$, the operator joins all tuples~$(k, e) \in M$ 
with the same key~$k$ into one tuple with a value that aggregates 
all value elements~$e$, that is, 
$\imagejoin M = \{ \; (k, \bigjoin \, \{ e \; | \; e \in \{ (k, e) \in M \} \} ) 
    \; | \; k \in \{ k \; | \; (k, \cdot) \in M \} \; \} 
$.

\subsubsection{Abstract Domains}
\newcommand{\powerdomainof}[1]{\ensuremath{\wp(#1)}\xspace}%
\newcommand{\downpowerdomainof}[1]{\ensuremath{\wp_\downarrow(#1)}}%
\newcommand{\powerdomain}{\ensuremath{\hat{\abstdomain}}}%
An abstract domain~$\abstdomain = (\latticeof{\concretes}, \latticeof{E}, \sem{\cdot}, \abst{\cdot})$~\cite{CousotCousot1992} 
is defined based on a tuple that consists of a lattice~\latticeof{\concretes} 
of the set of concrete elements~$\concretes$, a lattice~$\latticeof{E}$ 
of the set of abstract elements~$\abstentities$,
a denotation function~$\sem{\cdot}$ and an abstraction function~$\abst{\cdot}$.
The set $\concretes$ consists of all possible interpretations of 
elements from the set of abstract elements~$\abstentities$ for 
a specific universe.
The \emph{denotation}~$\sem{\abstentity}: \abstentities \rightarrow \powset{\concretes}$ 
of an abstract element~$\abstentity$ is the set of all its 
possible interpretations---as known from denotational semantics~\cite{HandbookOfLogic}.
The \emph{abstraction}~$\abst{\abstentity}$ of an abstract 
element~\abstentity results in a new abstract element~$e'$ 
with~$\sem{e} \subseteq \sem{e'}$. 
The abstraction~$\abst{\concretes_k}$ of a set of concrete 
elements~$\concretes_k \subseteq \concretes$ results in 
an abstract element~$\abstentity$, with~$\sem{\abstentity} = \concretes_k$. 
An abstraction~$\abst{\cdot}^\precision$ with \emph{widening} produces an 
abstraction with an abstraction precision~$\precision$, which 
can result in a widening.
The \emph{abstraction precision}~$\precision \in \precisions$~\cite{NayakLevy1995} 
defines the set of details that the resulting abstraction 
should maintain for sound reasoning.
Two elements are called \emph{semantically equal}, that is, 
$\abstentity_1 \equiv \abstentity_2$, if and only 
if~$\sem{\abstentity_1} = \sem{\abstentity_2}$ in the same universe.
One element \emph{semantically implies} another element,
that is, $\abstentity_1 \sementail \abstentity_2$, if and only
if~$\sem{\abstentity_1} \subseteq \sem{\abstentity_2}$.


\section{Abstract Words}
\label{sec:at:abstwords}%
\newcommand{\wordlattice}{\ensuremath{\dddot{\mathcal{W}}}\xspace}%
\newcommand{\finabstractwords}{\ensuremath{{\abstractalphabet^*}}\xspace}%
\newcommand{\abstractwordlattice}{\ensuremath{\dddot{\abstractalphabet}}\xspace}%
\newcommand{\concretealphabet}{\ensuremath{\Sigma}\xspace}%
\newcommand{\finconcretewords}{\ensuremath{{\concretealphabet^*}}\xspace}%
\newcommand{\orbottom}[1]{\ensuremath{\mbox{or}_{\bot}(#1)}\xspace}%
\newcommand{\conccomp}[1]{\ensuremath{#1}\xspace}%
\newcommand{\concbot}{\conccomp{\bot}}%
\newcommand{\conctop}{\conccomp{\top}}%
\newcommand{\concless}{\conccomp{\less}}%
\newcommand{\concjoin}{\conccomp{\join}}%
\newcommand{\concmeet}{\conccomp{\meet}}%
\newcommand{\abstcomp}[1]{\ensuremath{#1^\#}\xspace}%
\newcommand{\abstbot}{\abstcomp{\bot}}%
\newcommand{\absttop}{\abstcomp{\top}}%
\newcommand{\abstless}{\abstcomp{\less}}%
\newcommand{\abstjoin}{\abstcomp{\join}}%
\newcommand{\abstmeet}{\abstcomp{\meet}}%

Before we present abstract transducers, we describe concepts to cope 
with sets of possibly exponentially many and infinitely long words symbolically.
A word can \emph{express temporal or causal relationships} between 
the letters of the word.
We introduce concepts and techniques to deal with sets of 
words on an abstract level.

\subsection{Hierarchy of Characters, Words, and Languages}
\label{sec:chars:words:langs}%
We now discuss established terms that are relevant in the 
context of the terms that we introduce in the following sections.
This helps to understand our terminology choices.

Both the input alphabet and the output alphabet of an abstract
transducer is characterized based on an abstract domain.
\emph{Abstract domains} are a generic means for abstraction and 
provide various operations for manipulating and comparing abstract 
elements~(entities)~\cite{AbstractInterpretation}, and for mapping 
between \emph{concrete and abstract elements}.

Elements from a set~$\Sigma$ can be combined to form~(possibly infinite) 
sequences~$\bar{\sigma} \in \Sigma^\infty$ of those elements.
We use the term \emph{word} to denote sequences of elements that can 
be formed from other words by concatenation.
Words are elements of a free monoid~(semigroup) for that concatenation
is the binary and associative operator, and the empty word~(empty 
sequence) is the identity~(neutral) element.
A \emph{language} is a set of words---and typically well-formed
regarding some production rules.

In a generic abstract domain, one \emph{abstract element} maps to 
a set of \emph{concrete elements}, which is reflected by the 
denotation~(concretization) function~$\sem{\cdot}$.
That is, we can deduce that one \emph{abstract word} represents 
a set of \emph{concrete words}, and an \emph{abstract language} 
maps to a set of \emph{concrete languages}.

A word, as mentioned earlier, establishes a \emph{temporal
relationship} between all its characters; each character has 
a semantic denotation on its own, that is, it
maps to a set of entities.
The expressiveness of words compared their characters is dual
to the expressiveness of linear temporal logic to propositional logic:
A formula in propositional logic~(interpreted for a specific universe) 
denotes a set of entities, whereas a formula in linear temporal 
logic denotes sequences of sets of entities~(over time).
A \emph{set of words}, that is, a \emph{language}, provides sufficient
expressiveness to describe a set of forks in words over time, 
for example, to describe a set of \emph{concurrent} program executions,
or for matching trees or (more general) graphs.

That is, an abstract word, which maps to a set of concrete words, 
provides an abstraction with sufficient expressiveness to describe sets 
of linear-time concerns,
and an abstract language, which represents a set of sets of words,
provides expressiveness to describe sets of concerns that are 
expressible in branching-time logic. 
In the following, we restrict the discussion and presentation
\important
to abstract words and keep abstract languages for future work.

\subsection{Abstract Word Domain}
\label{concept:abstworddomain}%
The foundation of abstract transducers is formed by the 
abstract word domain, a lattice-based abstract 
domain~\cite{AbstractInterpretation,FileEtal1996} 
for mapping between abstract words and concrete words.
\newcommand{\epsilonlang}{\ensuremath{\abstractlang_\epsilon}\xspace}%
\newcommand{\epsilonword}{\ensuremath{\abstractword_\epsilon}\xspace}%
\newcommand{\emptywordlang}{\ensuremath{\epsilonlang}\xspace}%
\newcommand{\concretewordlattice}{\ensuremath{\latticeof{W}}\xspace}%
\newcommand{\wordlatticeof}[1]{\ensuremath{\omega(#1)}\xspace}%
\newcommand{\alphabetlattice}{\ensuremath{\latticeof{\alphabet}}\xspace}%

\begin{definition}[Abstract Word]
An \emph{abstract word}~$\abstractword \in \abstractwords$ 
is a symbolic representation of a set~$\subseteq \concretealphabet^\infty$ 
of concrete words over a concrete alphabet~$\concretealphabet$, 
where the set~$\abstractwords$ denotes all abstract words.
\end{definition}

\noindent
The relationship between an abstract word and the set of 
concrete words it represents, along with a means for abstraction,
is defined by the abstract word domain:
\nopagebreak[4]
\begin{definition}[Abstract Word Domain]
An \emph{abstract word domain} is an abstract 
domain~$\abstdomain_\concretewords = (\powerlatticeof{\latticeof{\concretewords}},
\latticeof{\abstractwords},\allowbreak \sem{\cdot}, \allowbreak \abst{\cdot})$
that has abstract words~\abstractwords as its abstract elements.
The relationship between abstract words is defined based on the
\emph{abstract word lattice}~$\latticeof{\abstractwords} = 
(\abstractwords, \less, \meet, \join, \top, \bot)$.
One abstract word~$\abstractword$ maps to a set of concrete
words~$\sem{\abstractword} \subseteq \concretewords$, 
which is defined by the denotation 
function~$\sem{\cdot}: \abstractwords \rightarrow 2^{\concretewords}$.
The lattice of concrete words~$\latticeof{\concretewords}$
defines the relationship between elements from
the set of concrete words~$\concretewords$.
Sets of concrete words are formed based on a powerset 
lattice~$\powerlatticeof{\latticeof{\concretewords}}$.
The abstraction function~$\abst{\cdot}: 2^\concretewords \rightarrow \abstractwords$
transforms a given set of concrete words~$\hat{\concreteword} \subseteq \concretewords$
into an abstract word~\abstractword, 
that is, $\abstractword = \abst{\hat{\concreteword}}$.
The \emph{abstract epsilon word}~\abstractepsilon maps~%
$\sem{\abstractepsilon} = \{ \epsilon \}$ to the set
with the empty word~$\epsilon$ only.
The \emph{bottom element}~$\bot$, or also \emph{abstract bottom word},
of the abstract word lattice denotes an abstract word that maps to 
the empty set of concrete words, that is, $\sem{\bot} = \emptyset$.
\end{definition}

\noindent
The abstraction mechanism that is provided by the abstract word 
domain is important for (1)~constructing finite 
abstractions of collections with exponentially many or infinitely 
long words; it can be used to (2)~check whether or not the 
analysis process ran into a fixed point, and (3)~for increasing 
the sharing of the output that we produce based on abstract transducers.

A problem that we have to deal with is the \emph{word coverage 
problem}, that is, the question of whether or not a given abstract 
word~$\abstractword_a$ is covered by another abstract 
word~$\abstractlang_b$, that is, if~$\abstractword_a \less \abstractword_b$, 
where~$\less$ is the inclusion relation of the abstract 
word lattice.
The actual matching process, that is, the check for coverage
can be implemented based on quotienting: The abstract word domain 
must provide the possibility to compute left
quotients~\cite{Brzozowski1964}~(Brzowzowski derivates) to 
match abstract words.

\newcommand{\quotientof}[2]{\ensuremath{#1^{#2}}\xspace}%
\begin{definition}[Left Quotient]The \emph{left quotient}~\cite{Brzozowski1964}~$\quotientof{\abstractword}{\abstractwordv}: 
\abstractwords \times \abstractwords \rightarrow \abstractwords$
of an abstract word~$\abstractword \in \abstractwords$ 
regarding an abstract word~$\abstractwordv \in \abstractwords$ is defined as 
$\quotientof{\abstractword}{\abstractwordv} = \abst{ \{ \bar{s} \; | \;
\bar{p} \concat \bar{s} \in \sem{\abstractword} \land \bar{p} \in \sem{\abstractwordv} \} }$.
It denotes suffixes of~$\abstractword$ for 
that~$\abstractwordv$ contains prefixes.
\end{definition}

\noindent
Another fundamental operation when dealing with words is their 
concatenation, which is the binary operator of the free 
monoid~$\alphabet^*$ that describes the set of words over an 
alphabet~$\alphabet$.
We extend this operator to abstract words, and with it
to sets of words:
\nopagebreak[4]
\begin{definition}[Concatenation]The \emph{concatenation} of a pair of abstract 
words~$\abstractword_1 \concat \abstractword_2$ results in 
an abstract word~$\abstractword_\concat$ that 
denotes~$\sem{\abstractword_\concat}$ the concatenation of all 
concrete finite words from the abstract word~$\abstractword_1$ with all 
(finite and infinite) concrete words from the abstract word~$\abstractword_2$.
The concatenation~$\word_1 \concat \word_2$ of an infinite word~$\word_1$
with another word~$\word_2$ results in the infinite word~$\word_1$.
That is~$\sem{\abstractword_1 \concat \abstractword_2} = \{ \word_1 \concat \word_2 \; | \; \word_1 \in \sem{\abstractword_1} \land \word_2 \in \sem{\abstractword_2} \}$.
\end{definition}

\noindent
To deal with abstract words, the notion of head and tail is
important:
\nopagebreak[4]
\newcommand{\headof}[1]{\ensuremath{\textsf{head}(#1)}\xspace}%
\begin{definition}[Head]Given an abstract word~$\abstractword$, the 
function~$\headof{\abstractword}: \abstractwords \rightarrow \abstractwords$ 
denotes the~\emph{head of an abstract word}:
The resulting abstract word represents the set of prefixes 
with length one, or formally~$\sem{\headof{\abstractword}} 
= \{ \, \bar{h} \; | \; \bar{h} \concat \word \in \sem{\abstractword} 
    \land |\bar{h}| = 1 \, \}$.
\end{definition}

\newcommand{\tailof}[1]{\ensuremath{\textsf{tail}(#1)}\xspace}%
\begin{definition}[Tail]The \emph{tail of an abstract word} is provided by the 
function~$\tailof{\abstractword}: \abstractwords \rightarrow \abstractwords$.
A call~$\abstractword' = \tailof{\abstractlang}$ returns a new 
abstract word~$\abstractword'$ that represents the set of 
postfixes that follow after the head. 
That is, $\sem{\tailof{\abstractword}} 
= \{ \, \word \; | \; \bar{h} \concat \word \in \sem{\abstractword} \land |\bar{h}| = 1 \, \}$, 
which equals~$\tailof{\abstractword} = \quotientof{\abstractword}{\headof{\abstractword}}$.
\end{definition}


\subsection{Boolean Algebra}
In several occasions, when reasoning about abstract words and 
their relationship, we need the full expressive power of a Boolean algebra.
We can build on the duality between Boolean algebras, regular 
languages, and complemented and distributive lattices,
which follows from the Stone duality~\cite{Stone1936,Pippenger1997}.
The abstract word lattice is dual to a Boolean algebra if 
and only if its meet~$\meet$ and join~$\join$ are distributive 
over each other and if each element in the lattice has a 
complement within the lattice.
One example of a lattice that is dual to a Boolean algebra is 
the powerset lattice and another one the
lattice of regular languages~\cite{GehrkeEtal2008,BrancoPin2009}.
Both lattices can describe sets of words and can thus be 
instantiated as an abstract word lattice of an abstract word domain. 
Given a lattice of regular expressions,
the join~$\join$ corresponds to the language inclusion,
the meet~$\meet$ to the language intersection,
and the operator~$\less$ describes the language inclusion;
the language is complemented since the complement of a 
regular expression is still regular.
\begin{definition}[Abstract Word Complement]Given an abstract word~$\abstractword$, 
its complement~$\lnot \abstractword$ defines a set of concrete words such 
that~$\forall \abstractword \in \abstractwords: \lnot \abstractword 
\meet \abstractword = \bot$
and~$\forall \abstractword \in \abstractwords: \lnot \abstractword 
\join \abstractword = \top$, 
with~$\forall \worda \in \sem{\abstractword}: \forall \wordb \in 
\sem{\lnot \abstractword}: \worda \meet_c \wordb = \bot_c$,
where~$\meet_c$ and $\bot_c$ are components of the concrete word lattice.
\end{definition}

\noindent
In case an abstract word lattice is dual to a Boolean algebra,
the abstract words and their composition, can also be described
using Boolean operators, which have their duals
in lattice theory: The join~$\join$ corresponds to the logical
disjunction~$\lor$, the meet~$\meet$ corresponds to the 
disjunction~$\lor$, and the complement corresponds to the logical negation~$\lnot$.
A Boolean formula~$\formula$ is equivalent to an abstract word~$\abstractword$
if and only if~$\sem{\formula} = \sem{\abstractword}$.


%

\subsection{Parameterized Words}
\label{concept:paramlang}%
\newcommand{\parameters}{\ensuremath{\mathcal{B}}\xspace}%
\newcommand{\values}{\ensuremath{\mathcal{V}}\xspace}%
\newcommand{\configarg}{\ensuremath{\mathcal{v}}\xspace}%
\newcommand{\arguments}{\ensuremath{\values}\xspace}%
\newcommand{\argument}{\ensuremath{\value}\xspace}%
\newcommand{\parameter}{\ensuremath{\mathbb{b}}\xspace}%
\newcommand{\bind}{\ensuremath{\textsf{bind}}\xspace}%
\newcommand{\bounded}{\ensuremath{\textsf{bounded}}\xspace}%
An abstract word can be \emph{parameterized} with a finite 
set of parameters~$\beta \subseteq \parameters$.
A parameterized abstract word can take two roles:
It~(1) can \emph{capture} (bind) values to the parameters during 
a matching process for a given input, and~(2) values for the
arguments can get passed explicitly (and act as a \emph{template}).
We use the term~\emph{instantiation} to denote the process of 
deriving an abstract word~$\abstractword'$
from an abstract word~$\abstractword$ by assigning values to the 
parameters, with~$\sem{\abstractword'} \subset \sem{\abstractword}$.
Examples for different types of templates words include
invariant templates~\cite{SrivastavaGulwani2009,KongEtal2010}.
The values that have been bound to the parameters of
an abstract word are provided by the operator~$\bounded: 
\abstractwords \rightarrow 2^{\parameters \rightarrow \values}$.
We can bind values to parameters of an abstract word and derive
a new abstract word with the operator~$\bind: \abstractwords \times 2^{\parameters \rightarrow \values} \rightarrow \abstractwords$.
Binding of values to parameters~(variable binding) was extensively
studied in the past, for example, for rewriting systems~\cite{NipkowPrehofer1998,Hamana2003}, and regular expressions~\cite{Freydenberger2013}.

\section{Abstract Transducers}
\label{sec:spat}%
\label{sec:analysistransducers}%
\label{sec:artefacttransdcuers}%
\label{sec:absttransducers}%
\label{concept:absttransducers}%
\newcommand{\coninalphabet}{\ensuremath{\Sigma}\xspace}%
\newcommand{\conoutalphabet}{\ensuremath{\outsymbols}\xspace}%
\newcommand{\inputartifacts}{\ensuremath{\artifacts_\coninalphabet}\xspace}%
\newcommand{\inputartifact}{\ensuremath{\artifact}\xspace}%
\newcommand{\outputartifacts}{\ensuremath{\artifacts_\conoutalphabet}\xspace}%
\newcommand{\outputartifact}{\ensuremath{\artifactb}\xspace}%
\newcommand{\abstinputwords}{\abstractwords}%
\newcommand{\abstoutputwords}{\outputwords}%
\newcommand{\transducerss}{\ensuremath{Z}\xspace}%
\newcommand{\inputlangeps}{\ensuremath{{\abstractlang_\epsilon}}\xspace}%
\newcommand{\outputlangeps}{\ensuremath{{\outputlang_\epsilon}}\xspace}%
\newcommand{\epsilonoutlang}{\ensuremath{\outputlangeps}\xspace}%
\newcommand{\emptylang}{\ensuremath{{\bot}}\xspace}%
\newcommand{\setrelation}{\ensuremath{\hat\atmtrelation}\xspace}%
\newcommand{\setrelationabst}{\ensuremath{\hat\atmtrelation_{\#}}\xspace}%
This work introduces abstract transducers, a type of abstract machines 
that map between abstract input words and abstract output words.
Compared to established transducer concepts, intermediate languages 
are central (we still have a  notion of accepted language):
Informally speaking, the \emph{intermediate input language} is 
the set of words for which the transducer can perform state 
transitions, and the set of words that are produced as output 
along these transitions is called the \emph{intermediate output language}.

To produce the intermediate output language, an abstract transducer
operates~\emph{prescient}, that is, it can take a lookahead 
into account to decide whether to conduct a state transition or 
not---and with it produce an output.
Words from the intermediate output language are intended
to be used immediately, that is, as soon as they are produced
while executing the transducer, which has several implications
on the design on the algorithms that execute abstract transducers
and that manipulate them---for example, to eliminate $\epsilon$-moves.

Both the input alphabet and their output alphabet are abstract
and defined based on abstract word domains.
One abstract word maps to a set of concrete words; the
abstract domain provides means for mapping between these representations.
This abstraction functionality enriches the possibilities to compute 
abstractions~(widenings) of abstract transducers, which 
we use as a \emph{means of increasing the scope of sharing}: one 
output is mapped to a larger set of inputs.

Each transition of an abstract transducer is annotated with an 
abstract input word and an abstract output word---which 
corresponds to symbols of the input alphabet and the output 
alphabet of traditional transducers.
Consuming and producing abstract words instead of single concrete 
letters has several advantages that increase the generality of 
our approach: (1)~it can be used for lookahead-matching, that is, 
instead of describing the input symbol to consume, also a sequence of 
symbols that must follow can be described, (2)~the abstract epsilon 
word~$\abstractepsilon$, with~$\sem{\abstractepsilon} = \{ \epsilon \}$,
can be used to model the behavior of 
an $\epsilon$-NFA~\cite{Sipser} with a corresponding
$\epsilon$-closure and to model automata that do not produce 
outputs at all, and (3)~relying on abstract words allows to 
produce and cope with output words of infinite length, which 
can be the result of $\epsilon$-loops.

\medskip
Formally, we define an abstract transducer as: 
\label{concept:atmtrelation}
\begin{definition}[Abstract Transducer]An \emph{abstract transducer}~$\symtransducer \in \transducers$ 
is defined by following tuple:
$$\transducer =  (\atmtstates, \inputdomain, \outputdomain, 
        \atmtinit, \atmtfinal, \atmtrelation)$$
\begin{itemize}
\setlength\itemsep{0.4em}
\item \emph{Control States~\atmtstates.} 
    The finite set~$\atmtstates$ defines the \emph{control states} 
    in which the transducer can be in.
    \label{concept:atmtstates}%

\item \emph{Abstract Input Domain~\inputdomain.} 
    The \emph{abstract input domain} is an abstract word domain 
    that maps between abstract words~\inputwords and concrete words 
    over the concrete input alphabet~\coninalphabet.
    It provides a denotation function $\sem{\cdot}_\inputcomp: 
    \inputwords \rightarrow 2^{\coninalphabet^*}$
    to map between an abstract word and a set of concrete (and finite) words.
    We assume the lattice of abstract words to be distributive and 
    complemented, that is, to be dual~\cite{Stone1936} to a \emph{Boolean algebra}.
    An abstract domain with lattice-valued regular expression~\cite{MidtgaardEtal2016}
    would be an example of an abstract input domain.
    \label{concept:inputdomain}%

    \label{concept:outalphabet}%
\item \emph{Abstract Output Domain~\outputdomain.} 
    The \emph{abstract output domain} is an abstract word domain that 
    defines the abstract output words~\outputwords and their relationship.
    Its denotation function $\sem{\cdot}_\outputcomp: \outputwords \rightarrow 2^{\conoutalphabet^\infty}$
    maps between an abstract output word and the corresponding 
    set of concrete output words over the concrete output alphabet~\conoutalphabet.
    An instance of an abstract output domain could, for example,
    use antichains~\cite{AbdullaEtal2010} for word inclusion checks.

\item \emph{Initial Transducer State~$\atmtinit \in 2^{\atmtstates \rightarrow \outputwords}$.} 
    The (non-empty) map~$\atmtinit$ characterizes the \emph{initial transducer state}.
    The pairing of control states with outputs is needed, since already
    the transitions that leave the initial state can be $\epsilon$-moves
    that are annotated with an output, 
    and it must be possible to eliminate those moves without
    affecting the semantics of the transducer.

\item \emph{Final Control States~$\atmtfinal \subseteq \atmtstates$.} 
    The set~$\atmtfinal$ defines the final (accepting)~control states. 
    This set can be empty, for example, if the transducer is not intended 
    to operate as a classical acceptor, that is, if the focus is on the 
    intermediate languages.

\item \emph{Transition Relation~$\atmtrelation \subseteq \Delta$.}
    The \emph{transition relation} defines the set of transitions 
    that are possible between the different control states.
    Given a \emph{transducer transition}~$(q, \abstractword, q', \outputword) \in \atmtrelation$,
    with $\Delta = \atmtstates \times \inputwords \times \atmtstates \times \outputwords$,
    both the abstract \emph{transition input word}~\abstractword and the 
    abstract \emph{transition output word}~\outputword can be the 
    abstract epsilon word, which is used to implement the functionality 
    of an $\epsilon$-NFA. 
    The abstract input word~$\abstractword$ must never be the abstract
    bottom word, that is, $\sem{\abstractword}_\inputword \neq \emptyset$.
    Having the empty word as output signals that the matching process 
    must stop for the given abstract input word---nevertheless, 
    there can be another transition from the same state~$q$ that 
    has an intersecting abstract input word
    which can cancel out this effect.
\end{itemize}
\noindent
The set of all transducers is denoted by~$\transducers$, with the 
subset~$\transducers_{\inputdomain \times \outputdomain} \subseteq \transducers$ 
of transducers that transduce from words from an abstract input 
domain~$\inputdomain$ to those from an abstract output 
domain~$\outputdomain$.
\end{definition}

\noindent
%
%

\subsection{State Types}
The set of control states~$\atmtstates$ of an abstract transducer 
implicitly contains two special states that are entered under
certain conditions or are used by algorithms that operate
on abstract transducers:

\begin{definition}[Trap State]    The \emph{trap state} or \emph{inactivity signaling state} is a 
    special control state~$\qinactive$ that can be entered
    to signal that the analysis should continue from that point on,
    but the transducer will no more contribute to the analysis process.
    We assume that this state is implicitly present for each transducer,
    that is, $\qinactive \in \atmtstates$ 
    and~$(\qinactive, \top, \qinactive, \outputword_\epsilon) \in \atmtrelation$,
    with~$\sem{\outputword_\epsilon}_\outputcomp = \{ \epsilon \}$.
\end{definition}

\noindent
The trap state is entered if no more transitions to move
are left, but the analysis should still continue from that point on.
This state is important for configurations of analyses that track 
automata or transducers with a non-stuttering semantics, that is, 
that do not stay in the same state if no transition matches.
We define another, similar, control state:

\begin{definition}[Bottom Control State]    A \emph{bottom control state} or \emph{unreachable control state}
    is a special control state~$\qbot \in \atmtstates$
    that has no leaving transitions and is not an accepting state, 
    that is, $(\qbot, \cdot, \cdot, \cdot) \not \in \atmtrelation$
    and $\qbot \not \in \atmtfinal$,
    while we assume this state to be present for all transducers
    implicitly in their set of control states~$\atmtstates$.
\end{definition}

\noindent
The core of an abstract transducer is its transition relation,
which defines the possible transitions between control states 
and the output to produce on these transitions.
The result of a state transition is a new transducer state:
\begin{definition}[Transducer State]    A \emph{transducer state}~$\transducerstate \in \transducerstates$,
    with~$\transducerstates = 2^{\atmtstates \rightarrow \outputwords}$,
    is map~$\transducerstate: \atmtstates \rightarrow \outputwords$ 
    from control states to abstract output words.
    Typically, a transducer state is the result of running the abstract
    transducer for a given input, starting in the initial transducer 
    state~$\atmtinit \in \transducerstates$.
\end{definition}

\subsection{Mealy and Moore}
We formalize abstract transducers as Mealy-style~\cite{Mealy} 
finite-state machines. 
Nevertheless, also a Moore-style~\cite{Moore} 
representation is possible:
\nopagebreak[4]
\newcommand{\outputlabeling}{\ensuremath{\lambda}\xspace}%
\begin{definition}[Moore-style Abstract Transducer]A \emph{Moore-style abstract transducer} is an abstract transducer
that emits its outputs not on transitions between control states
but \emph{active control states}. That is,
it is defined by the tuple~$\transducer^\text{Moore} =  
    (\atmtstates, \inputdomain, \outputdomain, 
        \atmtstates_0, \atmtfinal, \atmtrelation, \outrelation)$.
This form of abstract transducer has a 
control transition relation~$\atmtrelation \subseteq \atmtstates \times \abstinputwords \times \atmtstates$
and uses a state-output labeling function~$\outputlabeling: \atmtstates \rightarrow \outputwords$
to map abstract output words to control states.
Furthermore, this style of abstract transducer has 
a set~$\atmtstates_0$ of initial control states.
\end{definition}

\noindent
A Moore-style abstract transducer allows to represent
an abstract reachability graph easily.
For this work, we prefer the Mealy-style formalization
of abstract transducers because they require fewer states and 
are fit well for sharing syntactic task artifacts~(program 
fragments for weaving).

After we have defined the components of an abstract transducer,
we continue in following subsections with the description 
of their semantics.

\subsection{Lookaheads and Graph Matching}
Annotating a transition of an abstract transducer with an 
abstract input word that maps to at least 
one concrete word that is longer than one letter, specifies a lookahead.
The possibility of conducting lookaheads is essential if
a transition should produce a particular output only if the 
remaining word to process has a specific word as its prefix.
Consider the following example:
\begin{example}
Assume that the transducer is in control state~$q \in \atmtstates$.
Given a concrete input word~$\word = \langle \sigma_1, \ldots, 
    \sigma_n \rangle \in \inputalphabet^*$, a transducer transition 
$(q, \abstractword, q', \outputword) \in \atmtrelation$, with 
$\sem{\abstractword}_\inputcomp = 
\{ \langle x, `e`, `d` \rangle \; | \; x \in \inputalphabet \}$ 
will only match if~$\sigma_2 = `e` \land \sigma_3 = `d`$
and will then produce the output~$\outputword$.
\end{example}

\noindent
We characterize the lookahead of a transducer transition
by a number:
\nopagebreak[4]
\label{concept:lookahead}%
\begin{definition}[Transition Lookahead]The \emph{lookahead}~$\lookahead(\tau) \in \Natsz$ of a 
transition~$\tau = (q, \abstractword, \cdot, \cdot) \in \atmtrelation$
is $\lookahead(\tau) = 0$ if the input language is either the 
abstract epsilon word or the abstract bottom word, 
otherwise it is defined as~$\lookahead(\tau) = \text{max} \; 
\{ |\word| \; | \; \word\in \sem{\abstractword}_\inputcomp \} - 1$.
\end{definition}
\noindent
The lookahead of an abstract transducer is defined by the maximal 
lookahead that is conducted on one of its transitions, that is:
\nopagebreak[4]
\begin{definition}[Transducer Lookahead]The \emph{lookahead of an abstract transducer}~$\lookahead(\transducer) \in \Natsz$
is the maximal lookahead of any of its transition.
That is, $\lookahead(\transducer) = \text{max} \; 
\{ \lookahead(\tau) \; | \; \tau \in \atmtrelation\}$,
where~$\atmtrelation$ is the transition relation of transducer~\transducer.
\end{definition}

\begin{wrapfigure}[12]{r}{0.3\linewidth}%
\includegraphics[width=0.3\textwidth]{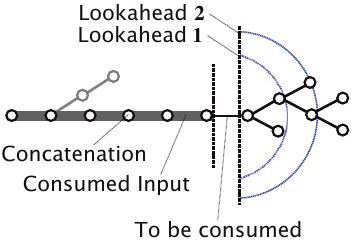}%
\caption{Matching}%
\label{fig:closure:elim}%
\end{wrapfigure}%
One can execute an abstract transducer on a rooted and directed 
graph instead of a particular input word---one word corresponds 
to a list or a sequence of letters. 
Each edge of the graph that we match is labeled with a letter. 
Words are formed by concatenating all letters on the graph edges 
that get traversed during the matching process, 
starting from the root node of the graph.
Figure~\ref{fig:closure:elim} provides an intuition of
the matching process.
\noindent
In this work, we restrict the graph matching process to 
disjunctive tree matching, defined by:
\begin{definition}[Disjunctive Tree Matching]A tree matching procedure is called to be \emph{disjunctive} 
if not several input branches that follow from a particular point 
on have to satisfy specific criteria. 
That is, if only \emph{one} of the input words that follow (on that the 
lookahead is conducted), must satisfy a given criterion.
\end{definition}
\noindent
To allow for matching based on the full expressiveness of regular
tree expressions (several of the input words might have to satisfy
a specific criterion), the abstract transducer's abstract input domain 
has to be lifted from an abstract word domain to an abstract
\emph{language} domain---see Sect.~\ref{sec:chars:words:langs}. 
We keep this extension of abstract transducers for future work.

\subsection{Epsilon Closure}
\label{concept:at:epsilonclosure}%

An established practice~\cite{HopcroftEtal2003,Sipser} in automata 
theory and its application is to use automata with transitions 
that are annotated with an empty-word symbol~$\epsilon$.
This was, first and foremost, introduced as a convenience feature 
to describe automata and its transition relation 
in a more concise fashion.
Abstract transducers allow to annotate transitions with
the abstract epsilon word~$\abstractepsilon$
to provide similar semantics and convenience:
\nopagebreak[4]
\begin{definition}[$\epsilon$-Move]An $\epsilon$-move (or \emph{$\epsilon$-transition}) is an 
automaton transition (or transducer transition) 
$(q, \abstractword, q', \outputword) \in \atmtrelation$
that is annotated with the abstract epsilon word~$\abstractepsilon$ 
as its input, that is, 
$\sem{\abstractword}_\inputcomp = \sem{\abstractepsilon}_\inputcomp = \{\epsilon\}$.
\end{definition}

\noindent
Some algorithms might not be able to deal with transducers that
have $\epsilon$-moves---or they might be more sophisticated in 
their presence---but only with those transducers from that 
all $\epsilon$-moves were eliminated. 
We define abstract transducers without $\epsilon$-moves as:
\nopagebreak[4]
\begin{definition}[Input-$\epsilon$-Free]An abstract transducer is said to be~\emph{input-$\epsilon$-free} 
if it does not have any transition based on an $\epsilon$-move,
that is, $(\cdot, \abstractword_\epsilon, \cdot, \cdot) \not \in \atmtrelation$,
with~$\sem{\abstractword_\epsilon}_\inputcomp = \{ \epsilon \}$.
\end{definition}

\noindent
The presence of $\epsilon$-moves can lead to loops
thereof, which is vital for expressing complex outputs, 
for example, to describe the control-flow of Turing-complete 
programs---assuming that each move emits a program operation 
to conduct as output.

\begin{definition}[$\epsilon$-Loop]An $\epsilon$-loop is any sequence of $\epsilon$-moves that 
starts in a control state~$q_k$ and could include this control
state~$q_k$ infinitely often in such a sequence.
More formally, an $\epsilon$-loop is a sequence~%
$\bar{\tau} = \langle \tau_1, \ldots, \tau_n \rangle \in \Delta^\infty$ 
of $\epsilon$-moves that is well-founded in the
transition relation~\atmtrelation and there exists a transducer 
transition~$\tau_i = (q, \cdot, \cdot, \cdot) \in \bar{\tau}$ for 
which the source state~$q$ is precisely the destination state~$q'$ 
of a transducer transition~$\tau_j = (\cdot, \cdot, q', \cdot) \in \bar{\tau}$, 
with~$i \leq j$.
\end{definition}

\noindent
From the definition of $\epsilon$-moves follows
the definition of the $\epsilon$-closure~\cite{Sipser}. 
Intuitively speaking the $\epsilon$-closure of a control state~$q$ 
is the set of control states that become instantly and 
simultaneously (parallel) active if state~$q$ becomes active.

\begin{definition}[Epsilon Closure]The~\emph{epsilon closure}~$\epsclosure: \atmtstates \rightarrow 2^\atmtstates$
of a state~$q \in \atmtstates$ is the set~$\epsclosure(q) \subseteq \atmtstates$ 
of states that can get reached transitively from state~$q$ by 
only following $\epsilon$-moves~\cite{Sipser}.
The bottom state~$\qbot$ is added if the epsilon closure 
includes an $\epsilon$-loop from which no control state is 
reachable with no $\epsilon$-move leaving.
\end{definition}

\noindent
The transition relation of an abstract transducer can contain
sequences~$\{ (q_1, \epsilonword, q_2, \allowbreak \outputword_1),
\allowbreak (q_2, \epsilonword, q_3, \outputword_2) \}
\subseteq \atmtrelation$ 
of $\epsilon$-moves but not each control state that is reached 
within such a sequence might have non-$\epsilon$-moves leaving 
in the transition relation.
We therefore introduce the notion of closure termination states:
\nopagebreak[4]
\begin{definition}[Closure Termination States]\label{def:cts}
The \emph{closure termination states}~$\closuretermstates: \atmtstates \rightarrow 2^\atmtstates$
of a given state~$q$ are both the states (1)~in the epsilon 
closure~$\epsclosure(q)$ from which no $\epsilon$-move leaves
and (2)~states within the closure that are accepting, that is, 
$\closuretermstates(q) = \{ q' \; | \; q' \in \epsclosure(q) 
\land (q', \epsilonword, \cdot, \cdot) \not \in \atmtrelation \} 
\cup (\epsclosure(q) \cap \atmtfinal)$.
\end{definition}

\noindent
Each transducer transition between the control states from 
an $\epsilon$-closure can be mapped to a set of closure termination states:

\newcommand{\termstates}{\ensuremath{\Delta_\Omega}\xspace}
\begin{definition}[Termination State Mapping]The \emph{termination state mapping} is a 
map~$\termstates: \Delta \rightarrow 2^\atmtstates$
that maps a given transducer transition to the set of closure
termination states that are reachable. Given a control state~$q \in \atmtstates$,
the result is the empty set~$\emptyset$ if no $\epsilon$-move
leaves state~$q$; it is the bottom state~$\qbot$ if there is 
not any other termination state.
\end{definition}

\noindent
Since also each transition within an epsilon closure can produce
an output, we introduce the notion of \emph{concrete language 
on termination}. 
This notion reflects with which output words the different
closure termination states can be reached:

\newcommand{\epspathset}{\ensuremath{\Omega}\xspace}%
\newcommand{\closurelang}{\ensuremath{\Omega}\xspace}%
\begin{definition}[Concrete Language on Termination]The \emph{concrete language on termination} 
$\closurelang: \atmtstates \times \atmtstates \rightarrow 2^\outwords$ 
for a given pair~$(q, q_\Omega)$ describes the concrete output 
language (a set of concrete words) that can be produced starting 
in control state~$q$ and that terminates with a closure termination 
state~$q_\Omega \in \closuretermstates(q)$.
More formally, let $\hat{\tau} = \{ \bar{\tau}_1, \ldots \} \subseteq \Delta^\infty$ 
be the set of all well-founded sequences of transducer transitions 
between control state~$q$ and the termination state~$q_\Omega$,
with~$\bar{\tau_i} = \langle \tau_1, \ldots \rangle$
and~$\tau_i = (q, \abstractword_i, q', \outputword_i) \in \atmtrelation$.
The concrete output language~$\sem{\bar{\tau_i}}$ of a 
sequence~$\bar{\tau_i}$ is the concatenation 
$\sem{\outputword_1}_\outputcomp \concat \ldots$
of the concretizations of all abstract output words~$\outputword_i$ 
that are emitted along it.
That is, the concrete output language $\closurelang(q,q_\Omega)$
is the union~$\bigcup_{\bar{\tau_i} \in \hat{\tau}} \; \sem{\bar{\tau_i}}$.
\end{definition}

\begin{definition}[Concrete Closure Language]The \emph{concrete closure language}~$\closurelang(q) \subseteq \outwords$ 
of a given control state~$q$ and its $\epsilon$-closure is the 
set of concrete output words that is produced while making transitions
along the $\epsilon$-moves between states in the closure.
More precisely, it is the join of concrete languages on termination,
that is, $\closurelang(q) = \bigcup \; \{ \word \in \closurelang(q, q_\Omega)  \; | \; q_\Omega \in \closuretermstates(q) \}$.
\end{definition}

\noindent
In our applications of abstract transducers, we use the (anonymous) 
states and transitions in the epsilon closure as a tool for 
expressing relational outputs.
Please note, that also $\epsilon$-moves that lead to a \emph{dead-end} 
are relevant and \emph{must not be eliminated}---what is 
done for some applications~\cite{DAntoniVeanes2017}---%
because the output might be relevant for the analysis task,
and the soundness of the produced result, for which the 
transducer is executed.
\nopagebreak[4]
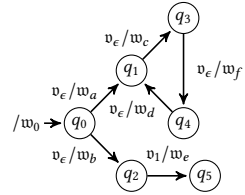
\begin{wrapfigure}[9]{r}{0.3\linewidth}%
\begin{tikzpicture}[node distance=8mm, scale=0.7, transform shape]
\node[] (es) {};
\node[cfastate,right of=es] (e0) {$q_0$};
\node[cfastate,above right=of e0] (e1) {$q_1$};
\node[cfastate,below right=of e0] (e2) {$q_2$};
\node[cfastate,above right=of e1] (e3) {$q_3$};
\node[cfastate,below right=of e1] (e4) {$q_4$};
\node[cfastate,right=of e2] (e5) {$q_5$};
\path[trans] (es) edge node [pos=.5,label=left:{$/\outputword_0$}] {} (e0);
\path[trans] (e0) edge node [pos=.7,label=left:{$\abstractepsilon/\outputword_a$}] {} (e1);
\path[trans] (e1) edge node [pos=.7,label=left:{$\abstractepsilon/\outputword_c$}] {} (e3);
\path[trans] (e0) edge node [pos=.7,label=left:{$\abstractepsilon/\outputword_b$}] {} (e2);
\path[trans] (e4) edge node [pos=.1,label=left:{$\abstractepsilon/\outputword_d$}] {} (e1);
\path[trans] (e3) edge node [pos=.5,label=right:{$\abstractepsilon/\outputword_f$}] {} (e4);
\path[trans] (e2) edge node [pos=.5,label=above:{$\abstinputword_1/\outputword_e$}] {} (e5);
\end{tikzpicture}
\caption{With $\epsilon$-loop}
\label{fig:closure}
\end{wrapfigure}
\begin{example}
Figure~$\ref{fig:closure}$ illustrates an example transducer:
The $\epsilon$-closure of control state~$q_0$ is the 
set~$\epsclosure(q_0) = \{ q_0, q_1, q_2, q_3, q_4, \qbot \}$,
for state~$q_2$, the closure~$\epsclosure(q_2) = \{ q_2 \}$ 
does not contain additional states.
State~$q_0$ has the set of closure termination 
states~$\closuretermstates(q_0) = \{ q_2, \qbot \}$, and 
state~$q_1$ has $\closuretermstates(q_1) = \{ \qbot \}$, 
that is, no other termination state is reachable.
The transitions between states~$\{ q_1, q_3, q_4\}$ form an $\epsilon$-loop.
\end{example}

\noindent
Given a control state~$q \in \atmtstates$, the semantics of 
$\epsilon$-moves implies that with reaching state~$q$, actually all 
states in~$Q_t = \closuretermstates(q)$ are reached immediately.
That is, also all output on the transitions from $q$ to a state 
in $Q_t$ is produced immediately, resulting in---possibly 
exponentially many and infinitely long---words~$\subseteq \outwords$ 
over the output alphabet~$\outalphabet$.

\subsection{Output Closure}
\label{sec:abstepsclosure}%
\label{concept:abstepsclosure}%

Previous section describes the epsilon closure of abstract
transducers; in contrast to established transducer concepts,
we also address $\epsilon$-moves that are annotated with non-empty
outputs, and use them as tool to express complex output languages,
with possibly exponentially many and infinitely long words,
in a convenient fashion.
%
%
When executing or reducing (minimizing) abstract 
transducers, means for collecting, aggregating, and possibly 
abstracting the output on these transitions are needed.
Given a control state~$q \in \atmtstates$, the goal of this 
summarization process is to provide an abstract output 
word~$\outputword_\Omega \in \outputwords$ 
for each of its closure termination states~$q_\Omega \in \closuretermstates(q)$
that overapproximates the concrete closure language, that is,
$\closurelang(q, q_\Omega) \subseteq \sem{\outputword_\Omega}_\outputcomp$---%
which \emph{can} lead to a loss of information.
The computation of this closure is done in a corresponding operator:
\nopagebreak[4]
\begin{definition}[Abstract Output Closure]The \emph{abstract output closure} of a given control 
state~$q \in \atmtstates$ is a finite overapproximation of the 
concrete closure language of each of its closure termination states;
it is a map of closure termination states of~$q$ to 
abstract output words, which summarizes the corresponding
closure output languages: 
$\outclosure: (\atmtstates \times \outputwords) \rightarrow 2^{\atmtstates \rightarrow \outputwords}$.
A call~$\outclosure(q, \outputword_0)$, with an initial abstract
output word~$\outputword_0$, returns a map~$\{ (q_t, \outputword_t) \; | 
\; q_t \in \closuretermstates(q) 
\land \sem{\outputword_t}_\outputcomp \subseteq \sem{\outputword_0}_\outputcomp \concat \closurelang(q,q_t) \}$.
\end{definition}

\noindent
We extend the abstract output closure operator~$\outclosure$ to sets:

\begin{definition}[Abstract Output Closure]The abstract output closure of a given set of transducer states
$\widehat{\outclosure}: 2^{\atmtstates \times \outputwords} 
\rightarrow 2^{\atmtstates \rightarrow \outputwords}$ is
defined as~$\widehat{\outclosure}(S) = \imagejoin \bigcup \;
\{ (q_\Omega, \outputword_\Omega) \; | \allowbreak 
\; \allowbreak (q, \outputword_0) \in S \land \allowbreak
(q_\Omega, \outputword_\Omega) \in \outclosure(q, \outputword_0) \} $.
\end{definition}

\noindent
Actual implementations of an abstract output closure operator
can be provided, for example, based on abstract interpretation, 
or based on techniques from automata theory.
Even transducers can be used~\cite{PredaEtal2016} to compute
abstractions of languages, in our case, the concrete output 
languages that are produced in the $\epsilon$-closure.
We give two examples of implementations:

\newcommand{\closuretrans}{\ensuremath{\textsf{closuretrans}}\xspace}

\subsubsection{Joining Closure}
The first abstract output closure operator~$\outclosure_\join$ 
joins all abstract output words that can be found on transitions in 
the epsilon closure from control state~$q$ that are mapped to 
the same closure termination state.
Let us assume that there is an operation~$\closuretrans: \atmtstates \times \atmtstates \rightarrow 2^\Delta$
that, given a pair of control states~$q, q_\Omega \in \atmtstates$,
returns all transitions from the transition relation~$\atmtrelation$ 
that are in the epsilon closure~$\epsclosure(q)$ and are mapped 
to a closure termination state~$q_\Omega$.
Then, we can define the closure operator as follows: 
$\outclosure_\join(q, \outputword_0) = \allowbreak 
\{ (q_\Omega, \outputword_0 \join_\outputcomp \bigsqcup_\outputcomp \{ \outputword \; | \allowbreak
\; (\cdot, \cdot, \cdot, \outputword) \in \closuretrans(q, q_\Omega)\}) \; | \allowbreak 
\; q_\Omega \in \closuretermstates(q) \} $.
This operator produces an overapproximation of the concrete output language.
The resulting abstraction does neither preserve information on
the flow nor is path information kept.


\subsubsection{Regular Closure}
Another example of an output closure operator is~$\outclosure_\infty$.
Here, we assume that the abstract output words can be 
described based on an abstract domain of $\infty$-regular 
languages~\cite{LodingTollkoetter2016}, with a corresponding 
lattice thereof.
Rules for transforming automata into regular expressions can be 
applied~\cite{LodingTollkoetter2016}: The result for the 
transducer in Fig.~\ref{fig:closure} is
${\outclosure_{\infty}}(q_0, \outputword_0) = 
\{ (q_2, \outputword_0 \concat \outputword_b), 
   (\qbot, \outputword_0 \concat \outputword_a \concat (\outputword_c \concat \outputword_f \concat \outputword_d)^\omega) \}$.
This type of output closure is lossless. Nevertheless, not 
all applications require this level of detail.

\subsection{Runs}

We now define runs of abstract transducers and illustrate
how they are conducted for given inputs.
All runs of an abstract transducer start from the initial 
transducer state:
\nopagebreak[4]
\begin{definition}[Abstract Transducer Run]A \emph{run} of an abstract transducer on a concrete input 
word~$\word = \langle \sigma_1, \allowbreak \ldots, \allowbreak \sigma_n \rangle \in \coninalphabet^*$ 
and a lookahead~$\inputwordset \subseteq \words$
is a sequence of transducer state 
transitions~$\atmtinit \trans{\abstinputword_1/\outputword_1} 
\ldots \trans{\abstinputword_n/\outputword_n} \transducerstate_n$,
also denoted by~$\langle \atmtinit, \ldots, \transducerstate_n \rangle$
in case the actual transducer transitions are irrelevant for the
discussion.
A run always starts in the initial transducer 
state~$\atmtinit \in \transducerstates$,
is well-founded in the transition relation~$\atmtrelation$,
and all transitions along the input match, that is, the 
quotienting $\quotientof{(\abst{ \{\langle \sigma_i, \ldots, \sigma_n \rangle \} \concat \inputwordset}_\inputcomp)}{\abstinputword_i} \neq \emptylang$ 
does not result in the abstract bottom word.
\end{definition}

\noindent
Before we continue to define feasible and accepting runs of 
an abstract transducer, we define the abstract output of a run:
\nopagebreak[4]
\begin{definition}[Abstract Run Output]The \emph{abstract output} of a run~$\transducerstate_0 \trans{\abstinputword_1/\outputword_1} \ldots \trans{\abstinputword_n/\outputword_n} \transducerstate_n$ 
is the concatenation of the subsequent abstract output words
$\outputword_\concat = \outputword_0 \concat \outputword_1 \concat \ldots \concat \outputword_n$.
The abstract output word~$\outputword_0$ is one abstract output 
word from the initial transducer state, that is, there exists a 
pair~$(\cdot, \outputword_0) \in \atmtinit$.
\end{definition}

\noindent
The output of a abstract transducer run is essential for
the definition of feasible transducer runs:

\newcommand{\wordfeasible}{\ensuremath{\textsf{feasible}}\xspace}%
\begin{definition}[Feasible Run]A run is called \emph{feasible} if and only if its abstract 
output~$\outputword_\concat$
is not the bottom element~$\bot$, that is, if and only if
$\sem{\outputword_\concat}_\outputcomp \neq \emptyset$.
The set of all concrete inputs (with lookaheads) that result
in a feasible run on an abstract transducer~$\transducer$ 
defines the function~$\wordfeasible_\transducer: \inputalphabet^* \times 2^{\inputalphabet^*}\rightarrow \Bools$.
\end{definition}

\noindent
Abstract transducers can also operate as acceptors and define a 
set of inputs to be accepted. 
We first define the notion of an accepting run and define
the accepted input language later:

\begin{definition}[Accepting Run]A run~$\langle \transducerstate_0, \ldots, \transducerstate_n \rangle$ 
is called to be \emph{accepting} if it is feasible and 
its last transducer state contains an accepting (final) control state,
that is, if and only if $(q_n, \outputword_n) \in \transducerstate_n$, with $q_n \in \atmtfinal$
and $\outputword_n \neq \bot$.
\end{definition}

\noindent
In general, an abstract transducer is a nondeterministic
automaton, nevertheless it can be deterministic if it satisfies
following criterion:
\nopagebreak[4]
\begin{definition}[Deterministic Abstract Transducer]We call an abstract transducer~\emph{deterministic} if and only
if it does not allow a run~$\bar{\iota} = \langle \transducerstate_0, \ldots \transducerstate_n \rangle$ 
with a transducer state~$\transducerstate_i \in \bar{\iota}$ 
that consists of more than one element, that is, 
$\forall \transducerstate_i \in \bar{\iota}: |\transducerstate_i| \leq 1$.
\end{definition}

We now continue with an operational perspective on the runs
of an abstract transducer.
Given a \emph{concrete input word}~$\inputword \in \coninalphabet^*$
based on the concrete input alphabet~\coninalphabet and a 
set~$\inputwordset \subseteq \coninalphabet^*$ of words that can 
follow to this word~(used for the lookahead), which output does 
the transducer produce and does processing the word terminate in 
an accepting control state?
Since a concrete input word can be represented as an abstract
word, and we consider this the more general case, we describe
runs based on abstract input words:
A given concrete input word~$\concinputword \in \inputalphabet^*$
can be transformed to an abstract input word by applying the 
abstraction operator such that we end up in an abstract 
word~$\abstinputword = \abst{\{ \concinputword \}}_\inputcomp$, 
with $\sem{\abstinputword}_\inputcomp = \{ \concinputword \}$.

\newcommand{\runconfigs}{\ensuremath{R}\xspace}%
\newcommand{\runconfig}{\ensuremath{r}\xspace}%
\newcommand{\runfor}{\ensuremath{\mathsf{run}_0}\xspace}%
\newcommand{\runfrom}{\ensuremath{\mathsf{run}}\xspace}%
\newcommand{\runfromset}{\ensuremath{\hat{\mathsf{run}}}\xspace}%
\begin{definition}[Run]The function~$\runfrom_\transducer: \atmtstates \times \outputwords 
\times \abstinputwords \times \abstinputwords 
\rightarrow 2^{\atmtstates \rightarrow \outputwords}$ 
conducts a run starting from a control state~$q \in \atmtstates$,
an initial abstract output word~$\outputword \in \outputwords$, 
an abstract input word~$\abstinputword \in \abstinputwords$, 
with~$\abstinputword \neq \bot$,
and an abstract word~$\abstinputword_\lookahead \in \abstinputwords$
that describes the lookahead that must be satisfied:
$$
\runfrom_\transducer(q, \outputword, \abstinputword, \abstinputword_\lookahead) = \begin{dcases*}
    \{ \; (q, \outputword) \; \}
     & \text{if $\abstinputword = \abstinputword_\epsilon$} \\
     \imagejoin \; \bigcup \; \{ \; \runfrom_\transducer({q'', \outputword \concat \outputword'', \tailof{\abstinputword}, \abstinputword_\lookahead}) 
     \;|\; \\ \quad (q, \abstinputword_\tau, q', \outputword') \in \atmtrelation 
        \\ \quad \land \; (q'', \outputword'') \in \outclosure(q', \outputword') 
        \\ \quad \land \; \quotientof{(\abstinputword \concat \abstinputword_\lookahead)}{\abstinputword_\tau} \neq \bot 
        \\ \quad \land \; \quotientof{(\headof{\abstinputword})}{\headof{\abstinputword_\tau}} \neq \bot \; \} & \text{otherwise} \\
\end{dcases*}
$$

\noindent
The function~\runfrom terminates its recursion if the abstract 
input word is the bottom element. 
The recursive call to~\runfrom is done for the tail of the abstract
input word---which ensures termination---in case a transition that 
leaves the given control state~$q$ matched the input.
\end{definition}
\noindent
We extend this function to 
$\runfromset_\transducer: 2^{\atmtstates \rightarrow \outputwords} \times \abstinputwords \times \abstinputwords \rightarrow 2^{\atmtstates \rightarrow \outputwords}$,
which starts from a transducer state, and we define it as follows:
$$
\runfromset_\transducer(\transducerstate, \abstinputword, \abstinputword_\lookahead) = \imagejoin \bigcup_{
    \quad (q, \outputword) \in \transducerstate} 
    \runfrom_\transducer(q, \outputword, \abstinputword, \abstinputword_\lookahead)
$$

\noindent
The transducer state to start from is omitted if it is the
abstract transducer`s initial transducer state~$\atmtinit$,
that is, $\runfromset_\transducer(\abstinputword, \abstinputword_\lookahead) 
= \runfromset_\transducer(\atmtinit, \abstinputword, \abstinputword_\lookahead)$.
Given a concrete input word~$\bar{\sigma} \in \inputalphabet^*$ 
and a corresponding set of concrete words~$\hat{\sigma} \subseteq \inputalphabet^*$
for the lookahead,
we write $\runfromset_\transducer(\bar{\sigma}, \hat{\sigma})$ 
as an abbreviation for $\runfromset_\transducer(\atmtinit, \bar{\sigma}, \hat{\sigma})$, 
which is as an abbreviation 
for~$\runfromset_\transducer(\atmtinit, \abst{\{ \bar{\sigma} \}}_\inputcomp, 
\abst{\hat{\sigma}}_\inputcomp)$.

\subsection{Languages and Transductions}
Contrary to other types of finite state transducers~\cite{DAntoniVeanes2015}
our abstract transducers distinguish between two type of input languages: 
the intermediate input language and the accepted input language.
\nopagebreak[4]
\begin{definition}[Intermediate Input Language]The \emph{intermediate input language}~$\langin(\transducer) \subseteq 
    \inputalphabet^* \times 2^{\inputalphabet^*}$ of an abstract 
transducer~\transducer is the set of concrete input words for 
that the transducer can conduct feasible runs starting from the 
initial transducer state~$\atmtinit$:
$$\langin(\transducer) = 
\{ \; (\inputword, \inputwordset) \; | \;  
\wordfeasible_\transducer(\inputword, \inputwordset) 
\land \inputword \in \inputalphabet^* \land \inputwordset \subseteq \inputalphabet^*
\; \}.
$$
It follows that each prefix~$\word_p \prefixof \word$ of each 
word~$\word \in \langin(\transducer)$ is also element of the 
intermediate input language, that is, $\word_p \in \langin(\transducer)$.
\end{definition}

\noindent
The accepted input language reflects the established notion
of input language, which is based on the set of words that can
reach a final control state:

\begin{definition}[Accepted Input Language]The \emph{accepted input language}~$\langacc \subseteq \langin$
is the subset of the intermediate input language for which an 
accepting control state~$q \in \atmtfinal$ is reached:
$$\langacc(\transducer) = 
\{ \; (\bar{\sigma}, \hat{\sigma}) \in \langin(\transducer) \; | \;  
(q, \cdot) \in \runfromset_\transducer(\bar{\sigma}, \hat{\sigma}) 
\land q \in \atmtfinal 
\; \}.
$$
\end{definition}

\noindent
Beside the accepted input language, another characteristic of
an abstract transducer is its set of transductions and its set 
of accepting transductions:

\begin{definition}[Transductions]\label{concept:transductions}%
The set of \emph{transductions}~$\transductions(\transducer) \subseteq 
\inputalphabet^* \times 2^{\inputalphabet^*} \times 2^{\outputalphabet^\infty}$
of an abstract transducer~\transducer characterizes both its 
concrete input language and the outputs that are produced for them.
One element~$(\inputword, \bar{\inputalphabet}_\lookahead, 
\bar{\outputalphabet}) \in \transductions(\transducer)$ from this
set is a tuple that consists of a word prefix~$\inputword$ that 
is consumed by a run of the transducer, a set of 
concrete words~$\bar{\outputalphabet} \subseteq \inputalphabet^*$ 
to conduct the lookahead on and that remains to be consumed by
the next transitions of the transducer,
and the set of concrete output words~$\bar{\outalphabet} \subseteq \outputalphabet^\infty$
that are emitted with the consumption of word~$\inputword$---see the
definition of~$\runfromset_\transducer$ for more details:
\begin{align*}
\transductions(\transducer) = \bigcup \; \{ \; 
    & (\bar{\sigma}, \hat{\sigma}, \sem{\outputword}_\outputcomp) \; 
     | \;  (\bar{\sigma}, \hat{\sigma}) \in \langin(\transducer)  \; \\
    & \land (q, \outputword) \in \runfromset_\transducer(\bar{\sigma}, \hat{\sigma}) \; \}.
\end{align*}
\end{definition}

\newcommand{\acctransductions}{\ensuremath{\transductions_\text{acc}}\xspace}%
\begin{definition}[Accepting Transductions]The set of \emph{accepting transductions}~$\acctransductions(\transducer) \subseteq \transductions(\transducer)$ 
is the subset of the transductions of a given abstract transducer~\transducer
that are produced by accepting runs:
\begin{align*}
\transductions(\transducer) = \bigcup \; \{ \; 
    & (\bar{\sigma}, \hat{\sigma}, \sem{\outputword}_\outputcomp) \; 
     | \; (\bar{\sigma}, \hat{\sigma}) \in \langin(\transducer) \\
    & \land (q, \outputword) \in \runfromset_\transducer(\bar{\sigma}, \hat{\sigma}) \\
    & \land q \in \atmtfinal \; \}.
\end{align*}
The number of accepting transductions is greater
or equal than the number of accepted input words, that is,
$|\langacc(\transducer)| \leq |\acctransductions(\transducer)|$,
because there can be independent concrete output languages for one 
concrete input~$(\bar{\sigma}, \hat{\sigma}) \in \Sigma^* \times 2^{\Sigma^*}$.
\end{definition}

\noindent
In combination, the set of transductions and the set of accepted 
transductions determine if two abstract transducers are equivalent 
to each other:

\begin{definition}[Equivalence] 
Two abstract transducers~$\transducer_1, \transducer_2 \in \transducers$
are called \emph{equivalent} $\transducer_1 \equiv \transducer_2$
to each other
if and only if both have the same set of transductions and the 
same set of accepting transductions, that is, if and only 
if~$\transductions(\transducer_1) = \transductions(\transducer_2)$
and~$\acctransductions(\transducer_1) = \acctransductions(\transducer_2)$.
\end{definition}


\noindent
Based on the notion of equality, we can define different operations, 
for example, reduction or $\epsilon$-elimination.
We start by defining a more fundamental one: The union of two abstract transducers.
The union is constructed similar to the union of $\epsilon$-NFAs, 
with the exception that no $\epsilon$-moves are added; we take 
advantage of the fact that the initial transducer state is a set:

\begin{definition}[Union]Given two abstract transducers~$\transducer_1, \transducer_2 \in 
\transducers_{\inputdomain \times \outputdomain}$
that both have the same abstract input domain~\inputdomain
and the same abstract output domain~\outputdomain,
such that~$\transducer_1 =  (\atmtstates_1, \inputdomain, \outputdomain, 
                \atmtinit_1, \atmtfinal_1, \atmtrelation_1)$
and~$\transducer_2 =  (\atmtstates_2, \inputdomain, \outputdomain, 
                \atmtinit_2, \atmtfinal_2, \atmtrelation_2)$.
The \emph{union}~$\cup: \transducers \times \transducers \rightarrow \transducers$
of two abstract transducers results in a new abstract 
transducer~$\transducer_\cup = \transducer_1 \cup \transducer_2$ 
that maintains exactly both the union of the set of transductions and
the set of accepting transductions, that is,
$\transductions(\transducer_\cup) = \transductions(\transducer_1) 
\cup \transductions(\transducer_2)$ and
$\acctransductions(\transducer_\cup) = \acctransductions(\transducer_1) 
\cup \acctransductions(\transducer_2)$.
We define the union as
    $\transducer_\cup =  \cup(\transducer_1, \transducer_2) = (\atmtstates_1 \cup \atmtstates_2, \inputdomain, \outputdomain, 
           \atmtinit_1 \cup \atmtinit_2, \atmtfinal_1 \cup \atmtfinal_2, \atmtrelation_1 \cup \atmtrelation_2)$.
\end{definition}

\subsection{Elimination of $\epsilon$-Moves}
Since $\epsilon$-moves are considered to be a convenience feature, 
eliminating them without losing any output must be possible---that is, 
without altering the semantics of the transducer.
The $\epsilon$-closure can allow sequences of state
transitions of infinite length, that is, a means to encode this infinite 
information into one~(finite) output symbol is needed.
An algorithm for computing abstract output closures provides such a means.

For the design of an $\epsilon$-elimination algorithm, it is 
important to note that all states in the $\epsilon$-closure 
of a control state become active when it is entered.
This implies that then also the output that is produced along 
with these $\epsilon$-moves must be emitted:
Existing algorithms for $\epsilon$-elimination
are not applicable to abstract transducers.
An algorithm for eliminating $\epsilon$-moves from an abstract
transducer~$\transducer_\epsilon$ must ensure that the resulting
transducer~$\transducer$ is equivalent~$\transducer_\epsilon \equiv \transducer$.
Please note that stuttering transitions must be made explicit and
must be considered to allow a sound elimination of $\epsilon$-moves.

\newcommand{\starttrigger}{\ensuremath{\abstinputword_\text{start}}\xspace}%
\begin{algorithm}[t]
\begin{small}
\caption{$\textsf{elim}{}(\transducer_\epsilon)$}
\label{alg:elim}
\begin{algorithmic}[1]
\vspace{1mm}
\INPUT Abstract transducer~$\transducer_\epsilon = (\atmtstates, \inputdomain, 
    \outputdomain, \atmtinit, \atmtfinal, \atmtrelation) \in \transducers$ \\
\vspace{1mm}
\OUTPUT Abstract transducer~$\transducer \in \transducers$, 
    with~$\transducer_\epsilon \equiv \transducer$
\vspace{1mm}
\newline
\COMMENT{// Sentinel transitions for the initial transducer state, with~$\abstractepsilon \not \equiv \starttrigger$}
\STATE $\atmtrelation_\epsilon = \atmtrelation \cup \{ \; (q_s, \starttrigger, q, \outputword) \; 
    | \; (q, \outputword) \in \atmtinit \; \}$ 
\newline
\COMMENT{// Shortcut $\epsilon$-moves to their termination states}
\STATE $\atmtrelation' = \{ \; (q, \abstractword, q'', \outputword'') \;
    | \; \tau = (q, \abstractword, q', \outputword) \in \atmtrelation_\epsilon \newline
    \quad\quad\quad \land \abstractword \neq \abstractepsilon 
        \land (q'', \outputword'') \in \outclosure(\{ (q', \outputword) \}) \; \}$\label{lst:line:atmtrel}
\newline
\COMMENT{// Reconstruct a new initial transducer state}
\STATE $\atmtinit' = \{ \; (q, \outputword) \; | 
    \; (\cdot, \abstractword, q, \outputword) \in \atmtrelation \land \abstinputword = \starttrigger \; \}$
\newline
\COMMENT{// Reassemble the components to a new abstract transducer}
\STATE \textbf{return} $(\atmtstates, \inputdomain, 
    \outputdomain, \atmtinit', \atmtfinal, \atmtrelation')$
\end{algorithmic}
\vspace{1mm}
\end{small}
\end{algorithm}
Algorithm~\ref{alg:elim} is our approach for eliminating
$\epsilon$-moves from an abstract transducer.
The algorithm constructs a new transition relation,
from which all $\epsilon$-moves are removed by adding shortcuts
to the closure termination states and concatenating
the corresponding closure output language.
\begin{proposition}
Given an abstract transducer~$\transducer_\epsilon$, all its 
$\epsilon$-moves \emph{can} be eliminated without affecting its 
semantics, that is, without affecting either the 
set of transductions or the set of accepting transductions.
The abstract transducer~$\transducer_\epsilon$ can be transformed
into an input-$\epsilon$-free transducer~$\transducer$, 
with $\transducer_\epsilon \equiv \transducer$.
\end{proposition}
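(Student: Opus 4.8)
The plan is to show that the transducer $\transducer = \textsf{elim}(\transducer_\epsilon)$ returned by Algorithm~\ref{alg:elim} is input-$\epsilon$-free and satisfies $\transducer_\epsilon \equiv \transducer$. Since equivalence is defined purely through the two sets $\transductions$ and $\acctransductions$, the whole task reduces to establishing $\transductions(\transducer_\epsilon) = \transductions(\transducer)$ and $\acctransductions(\transducer_\epsilon) = \acctransductions(\transducer)$. First I would dispatch the structural claim: every transition added to $\atmtrelation'$ on line~\ref{lst:line:atmtrel} is derived from a $\tau = (q, \abstractword, q', \outputword)$ with the explicit side condition $\abstractword \neq \abstractepsilon$, and the sentinel transitions carry $\starttrigger$ with $\starttrigger \not\equiv \abstractepsilon$; hence $\transducer$ has no $\epsilon$-move and is input-$\epsilon$-free by definition.

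The core is a run-correspondence lemma. The idea is to decompose any run of $\transducer_\epsilon$ into its non-$\epsilon$ moves interspersed with maximal $\epsilon$-segments, and to observe that each non-$\epsilon$ move together with the $\epsilon$-closure following its target collapses to exactly one transition of $\transducer$ carrying the same output. Concretely, I would prove by induction on the length of the remaining concrete input (well-founded because each recursive call passes $\tailof{\abstinputword}$) that $\runfrom_{\transducer}(q, \outputword, \abstinputword, \abstinputword_\lookahead)$ and $\runfrom_{\transducer_\epsilon}(q, \outputword, \abstinputword, \abstinputword_\lookahead)$ yield the same set of control-state/output pairs up to semantic equality of the outputs. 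In the recursive case, a matching transition of $\transducer$ is precisely a pair consisting of a matching non-$\epsilon$ transition $(q, \abstinputword_\tau, q', \outputword')$ of $\transducer_\epsilon$ and a closure-termination target $(q'', \outputword'') \in \outclosure(q', \outputword')$; the output it carries is $\outputword \concat \outputword''$, which agrees with what $\runfrom_{\transducer_\epsilon}$ produces, since that function applies the same $\outclosure(q', \outputword')$ after the same matched transition. The $\outclosure$ that $\runfrom_{\transducer}$ applies again at $q''$ is the identity because $\transducer$ is $\epsilon$-free and $q''$ is a closure-termination state.

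Next I would handle the initial segment, which is the reason the sentinel construction exists: a run started from the raw $\atmtinit$ does not fold in the $\epsilon$-closure of the initial states. The construction repairs this by treating each $(q, \outputword) \in \atmtinit$ as a $\starttrigger$-labeled transition, pushing it through the same shortcutting machinery, and reading $\atmtinit'$ off the $\starttrigger$-targets. I would show $\atmtinit' = \widehat{\outclosure}(\atmtinit)$, so that $\runfromset_{\transducer}(\atmtinit', \cdot, \cdot)$ begins from the correct closure-completed configuration. Combining this identity with the lemma gives equality of $\transductions$; the accepting transductions then follow because $\atmtfinal$ is carried over unchanged and $\closuretermstates$ (Definition~\ref{def:cts}) explicitly retains accepting states reachable inside a closure, so any final state reached through $\epsilon$-moves in $\transducer_\epsilon$ becomes a direct target in $\transducer$, and conversely.

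The hard part will be the exactness and finiteness of $\outclosure$ in the presence of $\epsilon$-loops. For genuine equivalence (rather than mere over-approximation), the folded output must denote \emph{exactly} $\sem{\outputword_0}_\outputcomp \concat \closurelang(q, q_\Omega)$, not just a superset, so the argument must be carried out under a lossless output-closure operator such as $\outclosure_\infty$ over an $\infty$-regular output domain; with only the default over-approximating $\outclosure$ one obtains a semantic implication in one direction but not the two-sided equality that $\equiv$ demands. Moreover, because $\epsilon$-loops yield output words of infinite length, the concatenation bookkeeping along a run lives in $\outputalphabet^\infty$, and I would lean on associativity of $\concat$ together with the stated convention that an infinite word absorbs any following suffix to keep the inductive step sound; the $\qbot$-bookkeeping of closures from which no non-$\epsilon$-move leaves must be tracked so that dead-end $\epsilon$-branches (which must not be dropped) are preserved.
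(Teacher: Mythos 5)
Your proposal follows essentially the same route as the paper's proof: both argue that Algorithm~\ref{alg:elim}, implicitly parameterized with the lossless closure operator $\outclosure_\infty$, yields an input-$\epsilon$-free transducer whose sets of transductions and accepting transductions are unchanged (the latter because closure termination states retain accepting states by Definition~\ref{def:cts}). Your version is correct and in fact more detailed than the paper's three-point sketch---the run-correspondence induction, the identity $\atmtinit' = \widehat{\outclosure}(\atmtinit)$ for the initial segment, and your explicit observation that a merely over-approximating $\outclosure$ would yield only a one-sided semantic inclusion rather than the two-sided equality that $\equiv$ demands are all faithful refinements of what the paper states only implicitly.
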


\begin{proof}
We prove the proposition by providing an algorithm that conducts 
this transformation while maintaining the set of transductions 
and the set of accepted transductions:
Given an abstract transducer~$\transducer_\epsilon$ that has 
$\epsilon$-moves, Algorithm~\ref{alg:elim}---which we implicitly 
parameterize with the output closure operator~$\outclosure_\infty$---%
produces an abstract transducer~\transducer that is input-$\epsilon$-free
and satisfies~$\transductions(\transducer_\epsilon) = \transductions(\transducer)$
and $\acctransductions(\transducer_\epsilon) = \acctransductions(\transducer)$.
(1)~The transition relation~$\atmtrelation'$, and with it the 
resulting transducer~\transducer, is input-$\epsilon$-free because 
only non-$\epsilon$-moves are added to the transition relation.
(2)~The set of closure termination states, for which~$\outclosure_\infty$
provides a pairing with the corresponding output closure language,
contains all accepting states~(Definition~\ref{def:cts}), that is,
all moves to accepting states are maintained, and with it the 
set of accepting transductions.
(3)~The set of transductions is maintained: The output from 
the epsilon closures, that is, the closure termination languages,
are concatenated to the transitions to the closure termination states. 
\end{proof}


\begin{example}
Given the transducer in Fig.~\ref{fig:closure},
Algorithm~\ref{alg:elim} proceeds as follows:
First, we extend the transition relation with sentinels
and get $\atmtrelation_\epsilon = \{ 
(q_0, \abstractepsilon, q_1, \outputword_a),\allowbreak
(q_1, \abstractepsilon, q_3, \outputword_c),\allowbreak
(q_3, \abstractepsilon, q_4, \outputword_f),\allowbreak
(q_4, \abstractepsilon, q_1, \outputword_d),\allowbreak
(q_0, \abstractepsilon, q_2, \outputword_b),\allowbreak
(q_2, \abstractepsilon, q_5, \outputword_e),\allowbreak
(q_s, \starttrigger, \allowbreak q_0, \abstractepsilon) \}$.
In the next step, $\epsilon$-moves are left out by
adding transitions to the closure termination states and 
concatenating the corresponding closure output languages;
the result is a new transition relation%
~$\atmtrelation' = \{ (q_s, \starttrigger, \qbot, \outputword_\ast),\allowbreak
(q_s, \starttrigger, q_2, \outputword_b),\allowbreak
(q_2, \abstinputword_1, q_5, \outputword_e) \}$,
with~$\outputword_\ast = \outputword_a \concat (\outputword_c \concat \outputword_f \concat \outputword_d)^\omega$.
Then, the initial transducer state is re-constructed from the
relation~$\atmtrelation'$ and we get~$\atmtinit' = \{ 
(\qbot, \outputword_\ast), (q_2, \outputword_b) \}$.
Finally, the transducer is re-assembled and we
get the transducer shown in Fig.~\ref{fig:closure:elim}.
\end{example}

\subsection{Determinization}
A typical operation when dealing with finite state machines is 
the transformation of a nondeterministic automaton into 
a deterministic one. 
This is \emph{not possible} for abstract transducers in general:
The control-flow structure of the state-transitions within the 
$\epsilon$-closure describes different information flows---that 
is, sets of output words that reach different closure termination 
states---as its semantics, which is not the case for classical 
automata and transducers.
For example, a state-space splitting might be intended based on 
the information of the emitted output---different outputs for 
the same input that lead to different control states.
That is, different closure termination states, which can be accepting 
states, can have associated different closure termination languages;
this separation must be maintained---which is also reflected
in our definition of transducer equivalence.

\begin{proposition}
Not every nondeterministic abstract transducer~$\transducer$ can be 
transformed into an equivalent deterministic transducer~$\transducer_d$,
with~$\transducer \equiv \transducer_d$.
\label{prop:det}
\end{proposition}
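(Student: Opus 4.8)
The plan is to prove the proposition by exhibiting a single nondeterministic abstract transducer $\transducer$ whose transduction set cannot be realized by any deterministic one. The decisive observation is structural: for a deterministic transducer $\transducer_d$, the definition of determinism forces every transducer state occurring in a run---in particular the one returned by $\runfromset_{\transducer_d}(\bar{\sigma}, \hat{\sigma})$ for a fixed input pair $(\bar{\sigma}, \hat{\sigma}) \in \langin(\transducer_d)$---to have size at most one. Hence that aggregated run state contributes at most one pair $(q, \outputword)$, and therefore at most one transduction $(\bar{\sigma}, \hat{\sigma}, \sem{\outputword}_\outputcomp)$ sharing that prefix and lookahead. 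So I would first record the auxiliary fact that, in the transduction set $\transductions(\transducer_d)$ of any deterministic transducer, the first two coordinates $(\bar{\sigma}, \hat{\sigma})$ functionally determine the third.

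Next I would construct the witness. Over input alphabet $\inputalphabet = \{a\}$ and output alphabet $\outputalphabet = \{x, y\}$, take $\transducer$ with control states $\{q_0, q_1, q_2\}$, initial transducer state $\atmtinit = \{(q_0, \abstractepsilon)\}$, final states $\atmtfinal = \{q_1, q_2\}$, and the two transitions $(q_0, \abstinputword_a, q_1, \outputword_x)$ and $(q_0, \abstinputword_a, q_2, \outputword_y)$, where $\sem{\abstinputword_a}_\inputcomp = \{\langle a \rangle\}$, $\sem{\outputword_x}_\outputcomp = \{\langle x \rangle\}$, and $\sem{\outputword_y}_\outputcomp = \{\langle y \rangle\}$. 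Running $\transducer$ on $\bar{\sigma} = \langle a \rangle$ with $\hat{\sigma} = \{\epsilon\}$ yields the transducer state $\{(q_1, \outputword_x), (q_2, \outputword_y)\}$ of size two, so $\transducer$ is nondeterministic; since both runs are feasible, $(\langle a \rangle, \{\epsilon\}) \in \langin(\transducer)$ and the transduction set contains both $(\langle a \rangle, \{\epsilon\}, \{\langle x \rangle\})$ and $(\langle a \rangle, \{\epsilon\}, \{\langle y \rangle\})$. Combining the two facts, any $\transducer_d$ with $\transducer_d \equiv \transducer$ would need $\transductions(\transducer_d) = \transductions(\transducer)$ and hence both of these distinct transductions sharing the common pair $(\langle a \rangle, \{\epsilon\})$; by the auxiliary fact this is impossible for a deterministic transducer, a contradiction. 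Therefore $\transducer$ admits no equivalent deterministic transducer, which proves the proposition.

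I expect the main obstacle to be arguing carefully \emph{why} the determinized machine cannot sidestep the issue by emitting a single output word $\outputword_\cup$ with $\sem{\outputword_\cup}_\outputcomp = \{\langle x \rangle, \langle y \rangle\}$ to ``cover'' both cases at once. It cannot, because a transduction carries the entire concrete output language as one set-valued coordinate, so $(\langle a \rangle, \{\epsilon\}, \{\langle x \rangle, \langle y \rangle\})$ is a \emph{third}, distinct tuple, equal to neither of the two required transductions. This is precisely the ``independent concrete output languages for one concrete input'' phenomenon already noted after the definition of accepting transductions, and it is the feature of our equivalence notion that blocks determinization: equivalence demands that the separation of the two output languages reaching the two distinct accepting control states be preserved, and a machine restricted to at most one active control state per input is structurally incapable of preserving it. I would close by remarking that the same witness also separates $\acctransductions$, since $q_1$ and $q_2$ are both accepting, so the obstruction is not an artifact of the intermediate-language coordinate alone.
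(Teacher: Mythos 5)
Your proof is correct, and it uses the same underlying obstruction as the paper---two transductions whose output languages must stay separated for one and the same input pair $(\bar{\sigma}, \hat{\sigma})$, where any attempt to cover both with a joined output word $\outputword_1 \join \outputword_2$ produces a \emph{third}, distinct set-valued tuple rather than the two required ones---but the argument is packaged differently, and in one respect more rigorously. The paper places the nondeterminism in the initial transducer state ($\atmtinit = \{(q_1, \outputword_1), (q_2, \outputword_2)\}$ with disjoint continuations on $\langle a \rangle$ and $\langle b \rangle$) and then refutes only two concrete candidate determinizations, $\atmtinit_1 = \{(q_0, \outputword_1 \join \outputword_2)\}$ and $\atmtinit_2 = \{(q_0, \epsilon)\}$; strictly speaking that case split is not exhaustive over all deterministic transducers, and the reader must accept that these two candidates exhaust the essential options. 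You instead prove a small general lemma---in a deterministic transducer every transducer state in a run, hence the aggregate returned by $\runfromset_{\transducer_d}(\bar{\sigma}, \hat{\sigma})$, has size at most one, so the input pair functionally determines the output coordinate of a transduction---and then any witness with two distinct transductions on a shared input pair kills \emph{all} deterministic candidates at once, with no case analysis. That is a genuine gain in completeness. One thing the paper's specific witness buys that yours does not: because its counterexample concentrates the nondeterminism in $\atmtinit$, it is immediately reused to justify the follow-up proposition that a set-valued initial transducer state (i.e., $|\atmtinit| > 1$) is genuinely needed for $\epsilon$-elimination, whose proof is stated as an ``implication of the proof'' of this one; your witness, having a singleton initial state and branching transitions instead, proves the determinization proposition but would not support that corollary. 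Your closing remarks---why the joined output $\{\langle x \rangle, \langle y \rangle\}$ is a distinct tuple, and that the witness also separates $\acctransductions$ since both target states are accepting---are accurate and match the paper's stated rationale that equivalence must preserve the separation of closure/output languages per control state.
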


\begin{proof}
We proof the proposition by counterexample---%
assuming that all abstract transducers can be determinized. 
Given an abstract transducer~\transducer with the set of 
initial transducer
states~$\atmtinit = \{ (q_1, \outputword_1), (q_2, \outputword_2) \}$
and the relation~$\atmtrelation = \{ (q_1, \abstinputword_1, q_3, \outputword_3), 
(q_2, \abstinputword_2, q_4, \outputword_4) \}$,
with~$\langle a \rangle \in \sem{\abstinputword_1}_\inputcomp$ 
and~$\langle b \rangle \in \sem{\abstinputword_2}_\inputcomp$,
it has the set of transductions~$\transductions(\transducer) = \{ 
    (\epsilon, \{ \epsilon \}, \sem{\outputword_1}_\outputcomp),\allowbreak
    (\epsilon, \{ \epsilon \}, \sem{\outputword_2}_\outputcomp),\allowbreak
    (\langle a \rangle, \{ \epsilon \}, \sem{\outputword_1 \concat \outputword_3}_\outputcomp),
    (\langle b \rangle, \{ \epsilon \}, \sem{\outputword_2 \concat \outputword_4}_\outputcomp)
\}$.
A determinized version would have an initial transducer state
with only one element, that is, the initial transducer state 
can be either~$\atmtinit_1 = \{ (q_0, \outputword_1 \join \outputword_2)\}$ 
of a transducer~$\transducer_1$ or~$\atmtinit_2 = \{ (q_0, \epsilon)\}$ 
of a transducer~$\transducer_2$.
Both are wrong since transducer~\transducer intended an initial
state space splitting with different output languages.
Transducer~$\transducer_2$ does not have the transduction~$(\epsilon, \{ \epsilon \}, 
\sem{\outputword_1 \join \outputword_2}_\outputcomp) \in \transductions(\transducer_2)$.
The transductions of~$\transducer_1$ are not equal to those of~\transducer,
since $\transductions(\transducer_1) = \{ 
    (\epsilon, \{ \epsilon \}, \sem{\outputword_1 \join \outputword_2}_\outputcomp),
    (\langle a \rangle, \{ \epsilon \}, \sem{(\outputword_1 \join \outputword_2) \concat \outputword_3}_\outputcomp),
    (\langle b \rangle, \{ \epsilon \}, \sem{(\outputword_1 \join \outputword_2) \concat \outputword_4}_\outputcomp)
\} \neq \transductions(\transducer)$.
\end{proof}

\begin{proposition}
An abstract transducer needs a set of initial transducer states
to allow for an elimination of $\epsilon$-moves.
That is, a set of initial transducer states with~$|\atmtinit| = 1$
is not sufficient for all $\epsilon$-input-free transducers
while maintaining their semantics.
\end{proposition}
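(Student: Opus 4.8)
The plan is to prove the statement by counterexample, reusing the transducer~$\transducer_\epsilon$ of Fig.~\ref{fig:closure} together with the $\epsilon$-elimination already carried out for it. First I would invoke the preceding proposition on $\epsilon$-elimination: Algorithm~\ref{alg:elim} turns $\transducer_\epsilon$ into an input-$\epsilon$-free transducer~$\transducer$ with $\transducer_\epsilon \equiv \transducer$, whose reconstructed initial transducer state is $\atmtinit' = \{ (q_2, \outputword_b), (\qbot, \outputword_\ast) \}$ with $\outputword_\ast = \outputword_a \concat (\outputword_c \concat \outputword_f \concat \outputword_d)^\omega$, so that $|\atmtinit'| = 2$. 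This already exhibits \emph{one} input-$\epsilon$-free transducer equivalent to $\transducer_\epsilon$ with a non-singleton initial transducer state; it then remains to show that \emph{no} input-$\epsilon$-free transducer with a singleton initial transducer state can be equivalent to it.

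The key step is to read off the relevant fragment of $\transductions(\transducer)$ on the empty input. Since $\transducer$ is input-$\epsilon$-free and $\atmtinit'$ already absorbs the initial $\epsilon$-closure, I can compute $\runfromset_\transducer(\atmtinit', \abstinputword_\epsilon, \abstinputword_\epsilon)$ directly: the $\abstinputword = \abstinputword_\epsilon$ branch of $\runfrom_\transducer$ returns each pair unchanged, and the two keys $q_2$ and $\qbot$ are distinct, so the image-join leaves $\atmtinit'$ intact. Hence $\transductions(\transducer)$ contains the two tuples $(\epsilon, \{\epsilon\}, \sem{\outputword_b}_\outputcomp)$ and $(\epsilon, \{\epsilon\}, \sem{\outputword_\ast}_\outputcomp)$, and these are genuinely distinct: the first output language consists only of finite words, whereas the second contains the infinite word arising from the $\epsilon$-loop $q_1 \to q_3 \to q_4 \to q_1$, so their denotations differ.

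Next I would assume, toward a contradiction, an input-$\epsilon$-free transducer~$\transducer_d$ with $|\atmtinit_d| = 1$, say $\atmtinit_d = \{ (q_0', \outputword') \}$, and $\transducer_d \equiv \transducer$. Running it on the same empty input yields $\runfromset_{\transducer_d}(\atmtinit_d, \abstinputword_\epsilon, \abstinputword_\epsilon) = \atmtinit_d$, a singleton, so $\transductions(\transducer_d)$ contains exactly one tuple of the form $(\epsilon, \{\epsilon\}, \cdot)$, namely $(\epsilon, \{\epsilon\}, \sem{\outputword'}_\outputcomp)$. A set carrying a single $\epsilon$-prefixed, $\{\epsilon\}$-lookahead tuple cannot equal one carrying two distinct such tuples, so $\transductions(\transducer_d) \neq \transductions(\transducer)$, contradicting equivalence. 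The only apparent escape---collapsing the two initial outputs into one pair $(q_0', \outputword_b \join \outputword_\ast)$---fails for exactly the reason used in Proposition~\ref{prop:det}: it yields a single transduction rather than two and, moreover, prefixes the infinite loop output onto the later transition to $q_5$, corrupting the transduction for inputs matching $\abstinputword_1$.

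The main obstacle I anticipate is not the counting argument but justifying cleanly that the two initial transductions must stay separate under \emph{any} single-state encoding. The delicate point is that transductions form a \emph{set} of $(\bar\sigma, \hat\sigma, \cdot)$ tuples indexed by their concrete output languages, so branching of the initial $\epsilon$-closure into termination states carrying incomparable output languages---here, one finite and one infinite---is precisely what a singleton initial state cannot represent. Establishing that this branching is intrinsic to the semantics of $\transducer_\epsilon$, and hence cannot be refactored away without reintroducing $\epsilon$-moves, is the crux; it follows from the definition of equivalence together with the behaviour of $\runfrom$ on the empty word, which returns the initial transducer state unchanged. Note that breaking $\transductions$ alone suffices, so no appeal to $\acctransductions$ is required.
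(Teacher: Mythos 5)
Your proof is correct, and it takes a more self-contained route than the paper, whose entire proof of this proposition is a one-line deferral: ``Implication of the proof for proposition~\ref{prop:det}.'' The shared core insight is identical---$\runfrom$ on the empty input returns the initial transducer state unchanged, so the tuples $(\epsilon, \{\epsilon\}, \cdot)$ in $\transductions(\transducer)$ are in bijection with $\atmtinit$, and two distinct such tuples can never be realized by a singleton initial state. But the instantiation differs in two worthwhile ways. First, the transducer in Proposition~\ref{prop:det}'s proof is already input-$\epsilon$-free and has nothing to do with closures; the paper leaves implicit why a transducer \emph{with} $\epsilon$-moves, after elimination, is ever forced into a non-singleton $\atmtinit$. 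You close that gap by using the Fig.~\ref{fig:closure} transducer, whose initial state's $\epsilon$-closure branches to the two termination states $q_2$ and $\qbot$ carrying the incomparable output languages $\sem{\outputword_b}_\outputcomp$ and $\sem{\outputword_\ast}_\outputcomp$, so the need for $|\atmtinit| > 1$ demonstrably \emph{arises from} Algorithm~\ref{alg:elim} rather than being postulated. Second, the paper's underlying argument in Proposition~\ref{prop:det} only inspects two candidate singleton encodings ($\outputword_1 \join \outputword_2$ and $\epsilon$), whereas your counting argument---any input-$\epsilon$-free $\transducer_d$ with $|\atmtinit_d| = 1$ has \emph{exactly one} transduction of the form $(\epsilon, \{\epsilon\}, \cdot)$---rules out \emph{every} singleton-initial transducer at once, which is the quantification the proposition actually asserts. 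One small point to tighten: the distinctness $\sem{\outputword_b}_\outputcomp \neq \sem{\outputword_\ast}_\outputcomp$ is a genericity assumption about the example's unspecified output denotations (if, say, all loop outputs denoted $\{\epsilon\}$, the two languages could coincide), so you should fix concrete denotations---your finite-versus-infinite-word observation does exactly this once the loop outputs are chosen nonempty and non-$\epsilon$. That is a presentational nit, not a gap.
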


\begin{proof}
Implication of the proof for proposition~\ref{prop:det}.
\end{proof}

%
%

%

%
%
%

%

\section{Transducer Abstraction}
\label{concept:atmtabstraction}%
\label{concept:at:abstraction}%
\newcommand{\tprecision}{\ensuremath{\precision}\xspace}%

Abstracting (widening) an abstract transducer is a means to 
provide its output for a larger set of input words, that is, a 
\emph{mechanism to increase sharing and with it the potential of reuse}.
That is, we explicitly rely on the fact that abstracting 
an automaton can widen its input language, 
and introduces non-determinism~\cite{AvniKupferman2013}.
We discuss different types of abstractions that are relevant 
for this work---Fig.~\ref{fig:transducers:abstractions} provides 
examples for abstractions.
Approaches for abstracting classical automata and symbolic automata 
have been presented in the past~\cite{BultanEtal2017,PredaEtal2015}, 
which can also be adopted for abstract transducers.
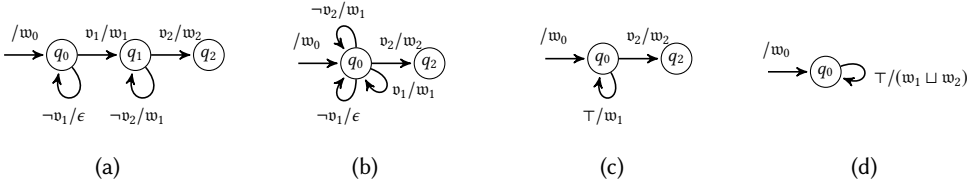
\begin{figure}[tp]
\begin{subfigure}[b]{.24\linewidth}
\begin{center}
\begin{tikzpicture}[node distance=8mm, scale=0.7, transform shape]
\node[] at (0,2) (top) {};
\node[] at (0,-1) (bot) {};
\node[] (qs) {};
\node[cfastate,right=of qs] (q0) {$q_0$};
\node[cfastate,right=of q0] (q1) {$q_1$};
\node[cfastate,right=of q1] (q2) {$q_2$};
\path[trans] (qs) edge node [pos=.5,label=above:{$/\outputword_0$}] {} (q0);
\path[trans] (q0) edge node [pos=.7,label=above:{$\abstinputword_1 / \outputword_1$}] {} (q1);
\path[trans] (q0) edge [out=310,in=260,looseness=8] node [pos=.7,label=below:{$\lnot \abstinputword_1 / \epsilon$}] {} (q0);
\path[trans] (q1) edge node [pos=.7,label=above:{$\abstinputword_2 / \outputword_2$}] {} (q2);
\path[trans] (q1) edge [out=310,in=260,looseness=8] node [pos=.7,label=below:{$\lnot \abstinputword_2 / \outputword_1$}] {} (q1);
\end{tikzpicture}
\end{center}
\caption{}
\end{subfigure}
\begin{subfigure}[b]{.24\linewidth}
\begin{center}
\begin{tikzpicture}[node distance=8mm, scale=0.7, transform shape]
\node[] at (0,2) (top) {};
\node[] at (0,-1) (bot) {};
\node[] (qs) {};
\node[cfastate,right=of qs] (q0) {$q_0$};
\node[cfastate,right=of q0] (q2) {$q_2$};
\path[trans] (qs) edge node [pos=.2,label=above:{$/\outputword_0$}] {} (q0);
\path[trans] (q0) edge [out=340,in=300,looseness=8] node [pos=.7,label=right:{$\abstinputword_1/ \outputword_1$}] {} (q0);
\path[trans] (q0) edge [out=270,in=230,looseness=8] node [pos=.7,label=below:{$\lnot \abstinputword_1 / \epsilon$}] {} (q0);
\path[trans] (q0) edge [out=90,in=130,looseness=8] node [pos=.7,label=above:{$\lnot \abstinputword_2 / \outputword_1$}] {} (q0);
\path[trans] (q0) edge node [pos=.7,label=above:{$\abstinputword_2 / \outputword_2$}] {} (q2);
\end{tikzpicture}
\end{center}
\caption{}
\end{subfigure}
\begin{subfigure}[b]{.22\linewidth}
\begin{center}
\begin{tikzpicture}[node distance=8mm, scale=0.7, transform shape]
\node[] at (0,2) (top) {};
\node[] at (0,-1) (bot) {};
\node[] (qs) {};
\node[cfastate,right=of qs] (q0) {$q_0$};
\node[cfastate,right=of q0] (q2) {$q_2$};
\path[trans] (qs) edge node [pos=.2,label=above:{$/\outputword_0$}] {} (q0);
\path[trans] (q0) edge [out=300,in=260,looseness=8] node [pos=.7,label=below:{$\top / \outputword_1$}] {} (q0);
\path[trans] (q0) edge node [pos=.7,label=above:{$\abstinputword_2 / \outputword_2$}] {} (q2);
\end{tikzpicture}
\end{center}
\caption{}
\end{subfigure}
\begin{subfigure}[b]{.25\linewidth}
\begin{center}
\begin{tikzpicture}[node distance=8mm, scale=0.7, transform shape]
\node[] at (0,2) (top) {};
\node[] at (0,-1) (bot) {};
\node[] (qs) {};
\node[cfastate,right=of qs] (q0) {$q_0$};
\path[trans] (qs) edge node [pos=.2,label=above:{$/\outputword_0$}] {} (q0);
\path[trans] (q0) edge [out=20,in=340,looseness=8] node [pos=.7,label=right:{$\top / (\outputword_1 \: \artefactjoin \: \outputword_2)$}] {} (q0);
\end{tikzpicture}
\end{center}
\caption{}
\end{subfigure}
\caption{Examples for different types of abstractions. Abstractions are applied 
step-wise from left to right:
(a)~we start with the unabstracted transducer, 
(b)~we conduct a state abstraction by merging states~$q_0$ and $q_1$, 
(c)~we abstract the input alphabet, 
(d)~we abstract the output alphabet.}
\label{fig:transducers:abstractions}
\end{figure}

Given an abstract transducer~$\transducer$, the abstraction
operator~$\abstraction^\tprecision: \transducers \rightarrow \transducers$ with widening---with the abstraction precision~$\tprecision$ as 
an implicit parameter that determines the level of abstraction 
to achieve---has to guarantee that the resulting abstraction 
overapproximates both the set of transductions and the 
set of accepting transductions:
\begin{definition}[Overapproximation]
An abstract transducer~$\transducer_1$ \emph{overapproximates} 
another abstract transducer~$\transducer_2$, which we denote 
by~$\transducer_2 \models \transducer_1$, if and only if 
$\transducer_1$ overapproximates both the set of transductions
and the set of accepting transductions of transducer~$\transducer_2$,
that is, $\transducer_2 \models \transducer_1$
if and only if 
$\forall (\bar{\sigma}_2, \hat{\sigma}_2, \hat{\theta}_2) \in \acctransductions(\transducer_2): \, \exists (\bar{\sigma}_1, \hat{\sigma}_1, \hat{\theta}_1) \in \acctransductions(\transducer_1): \, \bar{\sigma}_1 = \bar{\sigma}_2 \land \hat{\sigma}_1 = \hat{\sigma}_2 \land \hat{\theta}_2 \subseteq_C \hat{\theta}_1$
and $\forall (\bar{\sigma}_2, \hat{\sigma}_2, \hat{\theta}_2) \in \transductions(\transducer_2): \, \exists (\bar{\sigma}_1, \hat{\sigma}_1, \hat{\theta}_1) \in \transductions(\transducer_1): \, \bar{\sigma}_1 = \bar{\sigma}_2 \land \hat{\sigma}_1 = \hat{\sigma}_2 \land \hat{\theta}_2 \subseteq_C \hat{\theta}_1$.
The relation~$\subseteq_C$ denotes the inclusion relation of 
the concrete language lattice of the output language domain.
\end{definition}

\subsection{State Abstraction}
The classical approach to abstract an automaton is \emph{state 
abstraction}, that is, to merge several control states into 
one~\cite{PinchinatMarchand2000}.
Please note that this approach can also be used for abstracting 
output closures, which is the case if control states within 
an $\epsilon$-closure are merged:
\nopagebreak[4]
\newcommand{\mergestates}{\ensuremath{\textsf{qmerge}}\xspace}%
\begin{definition}[Control State Merge]A \emph{state merge} for a given abstract transducer~$\transducer$ 
is conducted by merging a set of its control states 
$\atmtstates_m \subseteq \atmtstates$ into into one new 
state~$q_m$, and results in a new abstract 
transducer~$\transducer_m$.
We denote this process by the operator~$\mergestates: \transducers \times 2^\atmtstates \rightarrow \transducers$, that is, $\transducer_m = \mergestates(\transducer, \atmtstates_m)$.
The actual definition of operator~\mergestates is given by Algorithm~\ref{alg:qmerge}.
\end{definition}
\begin{algorithm}[t]
\begin{small}
\caption{$\mergestates{}(\transducer, Q_m)$}
\label{alg:qmerge}
\begin{algorithmic}[1]
\vspace{1mm}
\INPUT Abstract transducer~$\transducer = (\atmtstates, 
    \inputdomain, \outputdomain, \atmtinit, \atmtfinal, \atmtrelation)$, \\
    \hspace{5mm} set $Q_m$ of states to merge
\vspace{1mm}
\OUTPUT Abstract transducer~$\transducer'$, with~$\transducer \models \transducer'$
\vspace{1mm}
\VARDECL Control state~$q_m$ that is not in the set $\atmtstates$ of transducer \transducer
\vspace{1mm}
\newline
\COMMENT{// Define the abstraction $\alpha$}
\STATE $\alpha = \{ \; (q, q') \; | \; q \in \atmtstates \land
    q' = q_m \, \text{if} \, q \in Q_m \, \text{else} \, q' = q \; \} $
\newline
\COMMENT{// New set of control states}
\STATE $\atmtstates' = (\atmtstates \setminus Q_m) \cup \{q_m\}$
\newline
\COMMENT{// New set of accepting states}
\STATE $\atmtfinal' = \{ \; \alpha(q) \; | \; q \in \atmtfinal \; \}$
\newline
\COMMENT{// New initial transducer state}
\STATE $\atmtinit' = \imagejoin \{ \; (\alpha(q), M) \; | \; (q, M) \in \atmtinit \; \}$
\newline
\COMMENT{// New transition relation}
\STATE $\atmtrelation' = \{ \; (\alpha(q), \abstinputword, \alpha(q'), \outputword) \; | \; (q, \abstinputword, q', \outputword) \in \atmtrelation \;\} $
\newline
\COMMENT{// Compose the resulting transducer}
\STATE \textbf{return} $(\atmtstates', \inputdomain, 
    \outputdomain, \atmtinit', \atmtfinal', \atmtrelation')$
\end{algorithmic}
\vspace{1mm}
\end{small}
\end{algorithm}

\begin{proposition}
\label{prop:qmerge}
Given an abstract transducer~\transducer, a 
transformation~$\transducer' = \mergestates(\transducer, Q_m)$
results in a new abstract transducer~$\transducer'$,
with $\transducer \models \transducer'$, that is, transducer~$\transducer'$
overapproximates transducer~$\transducer$.
\end{proposition}

\begin{proof}
We have to show that (1)~each 
input~$(\inputword, \inputwordset) \in \inputalphabet^* \times 2^{\inputalphabet^*}$ 
that leads to a feasible run~$\transducerstate = \runfromset_\transducer(\inputword, \inputwordset)$ on~\transducer also leads to a feasible 
run $\transducerstate' = \runfromset_{\transducer'}(\inputword, \inputwordset)$ 
on transducer~$\transducer'$, and for each 
element~$(q, \outputword) \in \transducerstate$ 
there exists an element $(q', \outputword') \in \transducerstate'$, 
with $\sem{\outputword} \subseteq \sem{\outputword'}$.
Furthermore, we have to show that (2)~each input that leads to 
an accepting run on transducer~$\transducer$ also lead to an 
accepting run on transducer~$\transducer'$.
Given a run $\run = \langle \atmtinit, \ldots, \transducerstate_n \rangle$
that is feasible on transducer~\transducer for a given 
input~$(\inputword, \inputwordset) \in \inputalphabet^* \times 2^{\inputalphabet^*}$.
The same input will also produce a feasible 
run~$\run' = \langle \atmtinit', \ldots, \transducerstate_n' \rangle$ on transducer~$\transducer'$.
For each~$\transducerstate_i \in \run$ with~$(q_1, \cdot) \in \transducerstate_i$
or~$(q_2, \cdot) \in \transducerstate_i$, the corresponding transducer
state~$\transducerstate_i' \in \run'$ will contain the merged
control state~$q_m$ with a corresponding abstract output word, that is,
$(q_m, \outputword) \in \transducerstate_i'$.
The definition of~$\mergestates$ ensures that all transitions
from either control state~$q_1$ or~$q_2$ are also possible from control state~$q_m$:
All transitions in $\atmtrelation$ from or to a control 
state in $\atmtstates_m$ are replaced by corresponding transitions
from or to control state~$q_m$.
In case a control state to merge is included in the initial transducer
state~$\atmtinit$, it is replaced by control state~$q_m$ in the
initial transducer state~$\atmtinit'$ of transducer~$\transducer'$.
The non-deterministic nature of abstract transducers ensures that
all transitions that match will also be taken: One control state
can have a set of successor states for a given input.
The transformation of the set of accepting control states~$\atmtfinal$
to the set $\atmtfinal'$ ensures that if one of the states to merge
was an accepting state, also state~$q_m$ will become an accepting state;
states in~$\atmtfinal$ that are not included in~$\atmtstates_m$ 
stay accepting states in~$\atmtfinal'$.
That is, all transitions---and runs on them---that were possible 
from or to control states in the set~$\atmtstates_m$ are still 
possible (lead feasible or accepting runs) in the new abstract 
transducer~$\transducer'$, but now start or end in control state~$q_m$.
\end{proof}

\noindent
Please note that abstracting abstract transducers by merging control 
states does neither affect the number of transitions nor their 
labeling---both the input symbols and the output symbols on transitions 
stay the same, but output languages of epsilon closures can change.

\begin{definition}[State Abstraction]The state abstraction~$\abst{\transducer}_\atmtstates^\tprecision$ 
of an abstract transducer~$\transducer$ results in a new abstract 
transducer~$\transducer'$ that is computed based on an abstraction 
precision~\tprecision, with~$\transducer \models \transducer'$.
The abstraction precision determines which states to keep separated 
and which to combine into one state---which represents the 
corresponding equivalence class.
The abstraction precision~$\tprecision = \langle Q_1, \ldots, Q_n \rangle$
defines a list of disjoint sets of control states that should be combined.
A state abstraction is conducted as follows:
$$
\abst{\transducer}_\atmtstates^\tprecision = \begin{dcases*}
    \mergestates(\transducer, Q_1) 
        & \text{if} $\tprecision = \langle Q_1 \rangle$ \\
    \mergestates(\abst{\transducer}_\atmtstates^{\langle Q_2, \ldots, Q_n \rangle}, Q_1) 
        & \text{if} $|\tprecision| > 1$ \text{and} $\tprecision = \langle Q_1, Q_2, \ldots \rangle$ \\
\end{dcases*}
$$
\end{definition}

\subsection{Input Alphabet Abstraction}
An abstraction approach that influences the abstract input words
of the transitions is input alphabet abstraction, which is the 
process of changing the abstract input word~$\abstinputword$ of a transducer 
transition~$\tau = (q, \abstinputword, q', \outputword) \in \atmtrelation$
to an new abstract input word~$\abstinputword'$, 
with $\sem{\abstinputword}_\inputcomp \subseteq \sem{\abstinputword'}_\inputcomp$:

\begin{definition}[Input Alphabet Abstraction]An \emph{input alphabet abstraction}~$\abst{\transducer}_I^\tprecision$
of an abstract transducer~\transducer results in a new abstract 
transducer were some of the abstract input words
of its control transitions were widened based on the given
abstraction precision~$\tprecision \in \precisions_I$.
The abstraction precision~$\tprecision$ for
input alphabet abstraction maps an abstraction precision~$\precision_\inputcomp$
that is applicable to the abstract input domain to each
of the transducer's control transitions, that is, it 
is a left-total function~$\precision: \Delta \rightarrow \precisions_\inputcomp$.
The result is an abstract transducer with a widened transition relation:
$$\atmtrelation' = \{ \; (q, \abst{\abstinputword}_\inputcomp^{\precision_\inputcomp}, q', \outputword) 
\; | \; \tau = (q, \abstinputword, q', \outputword) \in \atmtrelation 
\land (\tau, \precision_\inputcomp) \in \precision \; \}.$$
\end{definition}

\subsection{Output Alphabet Abstraction}
Along with this work, we introduce an output alphabet abstraction,
which adjusts the abstract output words of transitions.
It denotes the process of changing the abstract output word~$\outputword$ 
of a transducer transition~$\tau = (q, \abstinputword, q', \outputword) \in \atmtrelation$
to an new abstract output word~$\outputword'$, with $\sem{\outputword}_\outputcomp \subseteq \sem{\outputword'}_\outputcomp$:
\nopagebreak[4]
\begin{definition}[Output Alphabet Abstraction]An \emph{output alphabet abstraction}~$\abst{\transducer}_O^\tprecision$
of an abstract transducer~\transducer
results in a new transducer were some of the abstract output words
of its control transitions were widened based on the given
abstraction precision~$\tprecision \in \precisions_O$.
The precision~$\tprecision$ for
output alphabet abstraction maps an abstraction precision~$\precision_\outputcomp$
that is applicable to the output domain to each
of the transducer's transitions, that is, it 
is a left-total function~$\precision: \Delta \rightarrow \precisions_\outputcomp$.
The result is an abstract transducer with a widened transition relation:
$$\atmtrelation' = \{ \; (q, \abstinputword, q', \abst{\outputword}_\outputcomp^{\precision_\outputcomp}) 
\; | \; \tau = (q, \abstinputword, q', \outputword) \in \atmtrelation 
\land (\tau, \precision_\outputcomp) \in \precision \; \}.$$
\end{definition}

\noindent
Please note that also the computation of the abstract output 
closure---see Sect.~\ref{concept:abstepsclosure}---yields a
form of output alphabet abstraction.

\section{Transducer Reduction}
\label{concept:at:reduce}%
\label{sec:at:reduce}%
Besides abstraction techniques, also techniques for the reduction 
of abstract transducers are important. 
Such techniques help to reduce the number of control states, 
the number of control transitions, and the degree of non-determinism 
of a given abstract transducer.
That is, they help to reduce the costs of using and running 
abstract transducers for particular inputs, for example, to 
conduct a verification task.
Minimization is related to reduction but aims at ending up in 
finite state machines with a minimal number of states---an optimum.

The number of control states of an abstract transducer is 
critical for the performance of its use in an analysis procedure. 
Since a minimization is too 
expensive~\cite{JiangRavikumar1993,DAntoniVeanes2014,BjoerklundMartens2012}, 
we propose to adopt reduction techniques as known for NFAs 
to reduce the size and the degree of non-determinism of abstract 
transducers---a low degree of non-determinism is critical for 
efficient execution of non-deterministic finite state machines~\cite{KoHan2014}.

Abstract transducers can be reduced by merging control states,
or their transitions, as long as the set of transductions 
and the set of accepting transductions is preserved.
Please note that we assume, if not stated otherwise, that 
$\epsilon$-moves were removed \emph{before} applying the 
reduction techniques that we describe here.

\newcommand{\reduceop}{\ensuremath{\textsf{reduce}}\xspace}%
\begin{definition}[Operator~\reduceop]The (generic) reduction operator~$\reduceop: \transducers \rightarrow \transducers$
reduces a given abstract transducer~\transducer.
Instances of this operator have to guarantee to produce an equivalent 
abstract transducer, that is, $\transducer \equiv \reduceop(\transducer)$.
\end{definition}

\subsection{Reduction by State Merging}
Before we continue to outline an algorithm for reducing abstract
transducers by merging control states, we provide more definitions:

\begin{definition}[Control State Equivalence]
Two control states~$q_1, q_2 \in \atmtstates$ of an abstract 
transducer~\transducer are called \emph{equivalent} to each other,
that is, $q_1 \equiv q_2$ , if and only if they can be merged 
without affecting the transducer`s set of transductions nor 
its set of accepting transductions, that is, if and only 
if~$\transducer \equiv \mergestates(\transducer, \{q_1, q_2\})$.
\end{definition}

\noindent
Based on the definition of control state equality, we define
the equality of abstract transducer states:

\begin{definition}[Transducer State Equivalence]Two transducer states~$\transducerstate_1, \transducerstate_2 \in \transducerstates$
are called \emph{equivalent} if and only if they describe equivalent
pairs of control states and abstract output words, that is,
if and only if 
$\forall (q, \outputword) \in \transducerstate_1: \exists (q', \outputword') \in \transducerstate_2: q \equiv q' \land \outputword \equiv \outputword'$
and
$\forall (q, \outputword) \in \transducerstate_2: \exists (q', \outputword') \in \transducerstate_1: q \equiv q' \land \outputword \equiv \outputword'$.
\end{definition}

\noindent
To determine whether merging two control states maintains the
set of transductions, the notion of left transductions is essential:
\nopagebreak[4]
\newcommand{\lefttransductions}{\ensuremath{\overleftarrow{\transductions}}\xspace}%
\begin{definition}[Left Transductions]The set of~\emph{left transductions}
$\lefttransductions(\transducer, q) \subseteq \inputalphabet^* \times 2^{\inputalphabet^*} 
\times 2^{\outputalphabet^\infty}$
to a given control state~$q \in \atmtstates$, which belongs to 
a particular abstract transducer~$\transducer \in \transducers$,
is the set of all transductions that can be produced on paths
that start in the initial transducer state~$\atmtinit$ and that
\emph{reach} the given control state~$q$ with a feasible run:
\begin{align*}
\lefttransductions(\transducer, q) = \bigcup \; \{ \; 
    & (\inputword, \inputwordset, \sem{\outputword'}_\outputcomp) \; \\
    & | \; (q, \outputword') \in \runfromset_\transducer(\inputword, \inputwordset) \\
    & \land \inputword \in \inputalphabet^* 
        \land \inputwordset \subseteq \inputalphabet^* 
        \land \outputword' \neq \bot \; \}.
\end{align*}
\end{definition}

\begin{proposition}
A transformation~$\transducer' = \mergestates(\transducer, \{q_1, q_2\})$
maintains both the set of transductions and the set of accepting 
transductions if the left-transductions of the control states~$q_1$ and $q_2$ are 
equal, that is, $\transducer' \equiv \mergestates(\transducer, \{q_1, q_2\})$
if $\lefttransductions(\transducer, q_1) = \lefttransductions(\transducer, q_2)$.
\end{proposition}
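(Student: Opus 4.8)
The plan is to prove equivalence by establishing a single state-wise invariant on the reached transducer states and then reading off both set equalities from it. Write $\alpha$ for the state abstraction of Algorithm~\ref{alg:qmerge}, so $\alpha(q_1) = \alpha(q_2) = q_m$ and $\alpha$ is the identity elsewhere, and recall that $\mergestates$ leaves every transition output word untouched and only relabels endpoints and image-joins the initial outputs. The invariant I would prove is: for every input $(\bar\sigma, \hat\sigma)$ and every control state $\tilde q$ of $\transducer'$, the joined concrete output collected at $\tilde q$ agrees with the union of the joined concrete outputs that $\transducer$ collects over the $\alpha$-preimage of $\tilde q$, that is, $\sem{\runfromset_{\transducer'}(\bar\sigma, \hat\sigma)(\tilde q)}_\outputcomp = \bigcup_{q \in \alpha^{-1}(\tilde q)} \sem{\runfromset_{\transducer}(\bar\sigma, \hat\sigma)(q)}_\outputcomp$. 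For $\tilde q \neq q_m$ the right-hand side is just $\sem{\runfromset_{\transducer}(\bar\sigma, \hat\sigma)(\tilde q)}_\outputcomp$, while for $\tilde q = q_m$ the hypothesis $\lefttransductions(\transducer, q_1) = \lefttransductions(\transducer, q_2)$ makes the two summands coincide. Once this invariant is available, the set of concrete output languages that can occur is literally the same for both transducers, since the $q_m$-contribution already appeared at $q_1$ (equivalently $q_2$) in $\transducer$; this gives $\transductions(\transducer) = \transductions(\transducer')$ at once.

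I would prove the invariant by induction on $|\bar\sigma|$, i.e.\ on the recursion depth of $\runfrom$. The base case is the empty input, where $\runfromset_{\transducer'}(\epsilon, \hat\sigma) = \atmtinit'$ and, because the section assumes $\epsilon$-free transducers, a control state is reached on the empty word exactly when it is initial; the image-join in $\atmtinit' = \imagejoin\{(\alpha(q), M) \mid (q, M) \in \atmtinit\}$ therefore combines precisely the outputs of $q_1$ and $q_2$ at $q_m$, and $\lefttransductions(\transducer, q_1) = \lefttransductions(\transducer, q_2)$ forces these to have equal concretization, so the join is lossless and the invariant holds. For the inductive step I would expand one level of $\runfrom$: the output newly collected at a successor $\tilde q'$ is a join, over the matching transitions $(\tilde q, \abstinputword_\tau, \tilde q', \outputword')$, of $\outputword \concat \outputword'$ where $\outputword$ ranges over the outputs arriving at the predecessor $\tilde q$. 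Here I rely on two exactness facts: concatenation is exact on concretizations by its definition ($\sem{\abstractword_1 \concat \abstractword_2}_\outputcomp$ is exactly the pointwise concatenation), and the join of the output lattice is taken to be exact on concretizations for the intended output domains (powerset or $\infty$-regular languages), so $\sem{\bigjoin S}_\outputcomp = \bigcup_{s \in S}\sem{s}_\outputcomp$.

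The delicate case of the inductive step is a successor reached through $q_m$. In $\transducer'$, the state $q_m$ offers the outgoing transitions of \emph{both} $q_1$ and $q_2$, so the unfolding at $q_m$ fires $q_2$-transitions even on behalf of contributions that, in $\transducer$, only ever reached $q_1$. The hypothesis $\lefttransductions(\transducer, q_1) = \lefttransductions(\transducer, q_2)$ is exactly the device that repairs this: by the induction hypothesis the output arriving at $q_m$ has concretization equal both to what arrives at $q_1$ and to what arrives at $q_2$ in $\transducer$, so concatenating it with a $q_1$-transition reproduces precisely the $\transducer$-contribution routed through $q_1$, and concatenating it with a $q_2$-transition reproduces precisely the $\transducer$-contribution routed through $q_2$. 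Summing over all outgoing transitions of $q_m$ thus yields exactly the union of the $q_1$- and $q_2$-contributions of $\transducer$, which is what the invariant demands at the successor. Reasoning state-by-state at the joined level in this way avoids having to thread a single consistent preimage of a whole $\transducer'$-run.

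Finally, for the accepting transductions I would note that $q_m \in \atmtfinal'$ forces at least one representative---say $q_1$---into $\atmtfinal$, and that the common left transductions at $q_1$ and $q_2$ guarantee every accepting transduction of $\transducer'$ ending at $q_m$ is realised in $\transducer$ by a run terminating at the accepting state $q_1$; together with the straightforward $\alpha$-image direction this gives $\acctransductions(\transducer) = \acctransductions(\transducer')$, hence $\transducer \equiv \transducer'$. The main obstacle I anticipate is precisely the interaction between per-state joining and the merge: the statement is only exact if the join of the output domain does not over-approximate unions, so the cleanest route is to phrase the whole induction in terms of concrete output \emph{sets} and to invoke join-exactness of the concrete output lattice, rather than to manipulate abstract output words directly; care is also needed so that the lookahead quotient side-conditions of $\runfrom$, which depend only on the abstract input words that merging preserves, transfer verbatim to the rerouted contributions.
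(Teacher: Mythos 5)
Your proof is correct, and it takes a genuinely different route from the paper's. The paper argues informally at the level of runs in three steps: it first shows $\lefttransductions(\transducer', q_m) = \lefttransductions(\transducer, q_1) = \lefttransductions(\transducer, q_2)$ (every run reaching $q_1$ or $q_2$ now reaches $q_m$ and no others do), then that the runs feasible from $q_m$ are exactly those feasible from $q_1$ or $q_2$ (outgoing transitions are preserved verbatim), and finally that acceptance carries over; implicitly it decomposes each run of $\transducer'$ through $q_m$ into a left part and a right part and lets the hypothesis legitimize the cross-combinations, but it performs no induction and never touches the output bookkeeping. Your $\alpha$-preimage invariant, proved by induction on $|\bar{\sigma}|$, buys two things the paper's argument glosses over. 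First, it makes the cross-over at $q_m$ (a $q_1$-history continued by $q_2$-transitions) an explicit proof obligation rather than an afterthought---your third paragraph is exactly the crux. Second, and more importantly, it confronts the fact that $\runfromset$ image-joins outputs per control state, so outputs reported as two separate transductions at $q_1$ and $q_2$ in $\transducer$ are reported as a single joined output at $q_m$ in $\transducer'$; preservation of $\transductions$ \emph{on the nose} therefore really does require your exactness assumptions (pointwise concatenation, and $\sem{\bigjoin S}_\outputcomp = \bigcup_{s \in S} \sem{s}_\outputcomp$, or at least $\sem{a \join b}_\outputcomp = \sem{a}_\outputcomp$ whenever $\sem{a}_\outputcomp = \sem{b}_\outputcomp$). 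With a join that strictly over-approximates, $\transducer'$ would emit $\sem{\outputword_{q_1} \join \outputword_{q_2}}_\outputcomp \supsetneq \sem{\outputword_{q_1}}_\outputcomp$ at $q_m$ and equivalence would fail, so what you flag as a caveat is in fact an unstated hypothesis of the proposition, satisfied by the intended powerset and $\infty$-regular instances---your version is the more honest statement. Two small points to tighten: when you invoke the hypothesis at intermediate stages of the induction, you are instantiating $\lefttransductions$ at prefix inputs of the form $(\bar{\sigma}_p, \{\bar{\sigma}_s\} \concat \hat{\sigma})$, which is licensed precisely because left transductions quantify over \emph{all} lookahead sets---worth saying explicitly; and your base case silently uses the section's standing assumption that $\epsilon$-moves have been eliminated beforehand, which the paper does state just before the reduction definitions, so that reliance is sound.
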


\begin{proof}
Control state~$q_1$ is reachable by runs that correspond to the
set of left transductions~$\lefttransductions(\transducer, q_1)$
and control state~$q_2$ by runs that correspond to the set
of left transductions~$\lefttransductions(\transducer, q_2)$.
The proposition states that if we merge control states~$q_1$ and~$q_2$,
with~$\lefttransductions(\transducer, q_1) = \lefttransductions(\transducer, q_2)$
into a new state~$q_m$ of a new 
transducer~$\transducer' = \mergestates(\transducer, \{ q_1, q_2\})$
then this transducer is equivalent~$\transducer' \equiv \transducer$
to the original one. 
(1)~First, we show that control state~$q_m$ is reachable by all feasible 
runs that can also reach control state~$q_1$ or $q_2$, and that there is 
no feasible run that can reach~$q_m$ but neither state $q_1$ or $q_2$.
That is, we show that~$\lefttransductions(\transducer', q_m) = \lefttransductions(\transducer, q_1) = \lefttransductions(\transducer, q_2)$:
The operation~\mergestates ensures that all transitions that entered
either state~$q_1$ or state~$q_2$ also enter state~$q_m$; that is,
all feasible runs that reached~$q_1$ or~$q_2$ now reach state~$q_m$
and since~$q_m$ is a new state it is only reachable by these runs.
(2)~Next, we show that all runs that are feasible from control state~$q_1$ 
or~$q_2$ are also feasible from control state~$q_m$, and there is 
no feasible run from state~$q_m$ that is not feasible from control 
state~$q_1$ or~$q_2$:
The construction process of~$q_m$ ensures that all transitions
that leave states~$q_1$ or $q_2$ also leave state~$q_m$, and
no other transitions get added to leave this state; that is,
all feasible runs that start in control state~$q_m$ are also
feasible runs if they start in control state~$q_1$ or $q_2$.
(3)~Finally, we have to show that all runs that are accepting from
control state~$q_1$ or $q_2$ are also accepting from state~$q_m$,
and there is no accepting run from control state~$q_m$ that 
is not accepting from state~$q_1$ or $q_2$:
The operator~\mergestates merges states~$q_1$ and $q_2$ into a state~$q_m$,
which becomes an accepting control state if also state
$q_1$ or state~$q_2$ is an accepting control state.
That is, the inputs~$\{ (\bar{\sigma}, \hat{\sigma}) \, 
| \, (\bar{\sigma}, \hat{\sigma}, \cdot) \in \lefttransductions(\transducer, q_1)$
become elements of set of accepting transductions of transducer~$\transducer'$
if they were also accepted by transducer~$\transducer$.
All inputs that get accepted by runs starting from control state~$q_1$
or state~$q_2$, get also accepted by runs that start from control state~$q_m$.
\end{proof}

\noindent
Statements about the result of manipulating an abstract transducer 
by merging control states are also possible based on the 
notion of right transductions:
\nopagebreak[4]
\newcommand{\righttransductions}{\ensuremath{\overrightarrow{\transductions}}\xspace}%
\begin{definition}[Right Transductions]The set of~\emph{right transductions}
$\righttransductions(\transducer, q, \outputword_0) \subseteq \inputalphabet^* \times 2^{\inputalphabet^*} 
\times 2^{\outputalphabet^\infty}$
of a given control state~$q \in \atmtstates$, which belongs to a 
specific abstract transducer~$\transducer \in \transducers$,
with initial abstract output word~$\outputword_0$,
is the set of all transductions that can be produced on the 
\emph{feasible runs} that start from the given transducer state~$(q, \outputword_0)$:
\begin{align*}
\righttransductions(\transducer, q, \outputword_0) = \bigcup \; \{ \; 
    & (\bar{\sigma}, \hat{\sigma}, \sem{\outputword'}_\outputcomp) \; \\
    & | \; (\cdot, \outputword') \in \runfromset_\transducer(\{ (q, \outputword_0) \}, 
    \bar{\sigma}, \hat{\sigma}) \\
    & \land \bar{\sigma} \in \inputalphabet^* 
        \land \hat{\sigma} \subseteq \inputalphabet^* 
        \land \outputword' \neq \bot \; \}.
\end{align*}
\end{definition}

\newcommand{\rightlangacc}{\ensuremath{\overrightarrow{\langacc}}\xspace}%
\begin{definition}[Right Accepted Language]The \emph{right accepted language} of a given abstract 
transducer~$\transducer$
for a given control state~$q$ is the set of 
pairs~$(\bar{\sigma}, \hat{\sigma}) \in \inputalphabet^* \times 2^{\inputalphabet^*}$ 
that lead to an accepting run if started from the given control state~$q$:
\begin{align*}
\rightlangacc(\transducer, q) = 
\{ \; & (\inputword, \hat{\sigma}) \in \langin(\transducer) \; \\
    & | \; (q', \cdot) \in \runfromset_\transducer(\{(q, \outputword_\epsilon)\}, \inputword, \hat{\sigma}) \\
    & \land q' \in \atmtfinal \; \}.
\end{align*}
\end{definition}

\begin{proposition}
Merging two control states~$q_1, q_2 \in \atmtstates$ of an 
abstract transducer~$\transducer$, which results in a new abstract 
transducer, maintains the set of transductions if their sets 
of \emph{right transductions} are equal, that is, 
$\transductions(\transducer) = \transductions(\mergestates(\transducer, \{q_1, q_2 \}))$ 
if $\righttransductions(\transducer, q_1) = \righttransductions(\transducer, q_2)$.
Please note that we do not make a proposition about the set of
\emph{accepted} transductions here.
\end{proposition}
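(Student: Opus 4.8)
The plan is to prove the set equality $\transductions(\transducer) = \transductions(\transducer')$, where $\transducer' = \mergestates(\transducer, \{q_1, q_2\})$, by establishing the two inclusions separately, writing $q_m$ for the fresh merged state and $\alpha$ for the renaming of Algorithm~\ref{alg:qmerge}. Throughout I rely on the standing assumption that $\transducer$ is input-$\epsilon$-free, so that output closures are trivial and the abstract output word accumulated along any run is simply the concatenation of the transition output words. The inclusion $\transductions(\transducer) \subseteq \transductions(\transducer')$ is the easy direction: Proposition~\ref{prop:qmerge} already maps every feasible run of $\transducer$ to a feasible run of $\transducer'$ under $\alpha$, and since $\mergestates$ renames only control states while leaving every abstract input and output word on each transition untouched, the image run carries an identical (not merely a superset) output word. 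Hence each transduction of $\transducer$ appears verbatim in $\transducer'$.

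For the reverse inclusion I would prove a simulation lemma by induction on the length $|\bar{\sigma}|$ of the consumed input. For a control state $q$, an initial output word $\outputword_0$, and an input $(\bar{\sigma}, \hat{\sigma})$, let $O'(q, \outputword_0, \bar{\sigma}, \hat{\sigma})$ collect the output words produced by feasible runs of $\transducer'$ from $(q, \outputword_0)$, and let $O(\cdot)$ be the analogous set for $\transducer$. The lemma I would aim for reads: (i) $O'(q,\dots) = O(q,\dots)$ for every $q \neq q_m$, and (ii) $O'(q_m,\dots) = O(q_1,\dots) = O(q_2,\dots)$. The base case $\bar{\sigma} = \epsilon$ is immediate, since no non-$\epsilon$ transition can fire on empty input and both sides collapse to $\{\outputword_0\}$. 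In the inductive step I would unfold one transition: the edges leaving $q \neq q_m$ in $\atmtrelation'$ are exactly the $\alpha$-images of those leaving $q$ in $\atmtrelation$, whereas the edges leaving $q_m$ are the $\alpha$-images of those leaving $q_1$ together with those leaving $q_2$. Applying the induction hypothesis to each successor $\alpha(\hat{q})$ (invoking clause (ii) whenever $\alpha(\hat{q}) = q_m$ to rewrite $O'(q_m,\dots)$ as $O(\hat{q},\dots)$) turns the $\transducer'$-recursion into the matching $\transducer$-recursion, giving (i). For (ii) the same unfolding yields $O'(q_m,\dots) = O(q_1,\dots) \cup O(q_2,\dots)$, and it is precisely here that the hypothesis $\righttransductions(\transducer, q_1) = \righttransductions(\transducer, q_2)$ — read at initial output $\abstractepsilon$, hence for every $\outputword_0$ — collapses the union to a single set. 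Conceptually, this is the step that neutralizes the \emph{crossover runs} that merging creates: a run that enters $q_m$ along an edge inherited from $q_1$ and leaves along an edge inherited from $q_2$ can only produce outputs already reachable from $q_1$ (equivalently $q_2$) once the two states share a future.

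The main obstacle, and the part I would expect to fight with, lies not in the crossover runs but in the image-join semantics of $\runfromset$ at the merged state. Because a transducer state is a \emph{map} and $\runfromset$ aggregates via $\imagejoin$, whenever $q_1$ and $q_2$ are co-reached by a common input prefix with distinct accumulated outputs $w_1 \neq w_2$, the merged transducer reaches $q_m$ carrying $w_1 \join w_2$, whereas $\transducer$ records the two separate triples with outputs $\sem{w_1}_\outputcomp$ and $\sem{w_2}_\outputcomp$. Right-transduction equality constrains the \emph{future} behaviour of $q_1$ and $q_2$ but says nothing about these \emph{reaching} outputs, so it does not by itself prevent the join from enlarging the emitted language; indeed two states with empty, hence trivially equal, right transductions can be reached with different outputs. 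I would therefore expect the clean argument to go through only under an additional hypothesis on the left — the co-reachability-with-equal-outputs condition that the dual left-transduction result (via $\lefttransductions(\transducer,q_1) = \lefttransductions(\transducer,q_2)$) supplies automatically — or under the reading that transductions are compared modulo the output domain's join. Pinning down exactly which of these the statement intends, and discharging the image-join case accordingly, is the crux of the proof.
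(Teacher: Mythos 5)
Your overall decomposition tracks the paper's proof, but your inductive simulation lemma is a genuinely more careful rendering of it: the paper disposes of the converse inclusion in essentially one sentence --- new transductions could only arise if $\righttransductions(\transducer', q_m)$ exceeded the union of the right transductions of $q_1$ and $q_2$, which the equality hypothesis forbids --- whereas your clause~(ii), where $O'(q_m,\dots) = O(q_1,\dots) \cup O(q_2,\dots)$ collapses under the hypothesis, actually proves that claim for the crossover runs by induction on $|\bar{\sigma}|$. Up to that point your route is the paper's argument done properly.

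The important point, however, is that the obstacle you isolate in your final paragraph is genuine, it defeats the paper's own proof as well, and it also quietly breaks your ``easy'' direction. Transductions are not defined per run but through $\runfromset$, which aggregates co-reached control states with $\imagejoin$; hence run-wise preservation of output words does not yield verbatim membership of triples. If some input co-reaches $q_1$ and $q_2$ with accumulated outputs $w_1, w_2$ of different denotations, then $\transducer$ contributes the two triples $(\bar{\sigma}, \hat{\sigma}, \sem{w_1}_\outputcomp)$ and $(\bar{\sigma}, \hat{\sigma}, \sem{w_2}_\outputcomp)$, while the merged transducer contributes only the single joined triple $(\bar{\sigma}, \hat{\sigma}, \sem{w_1 \join w_2}_\outputcomp)$: neither transduction set contains the other. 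A minimal counterexample: $\atmtinit = \{(q_0, \outputword_\epsilon)\}$ with exactly the transitions $(q_0, \abstinputword_a, q_1, w_1)$ and $(q_0, \abstinputword_a, q_2, w_2)$, where $\sem{w_1}_\outputcomp \neq \sem{w_2}_\outputcomp$ and no transitions leave $q_1$ or $q_2$; then $\righttransductions(\transducer, q_1) = \righttransductions(\transducer, q_2)$ holds trivially for any initial output word, yet $\transductions(\transducer) \neq \transductions(\mergestates(\transducer, \{q_1, q_2\}))$. The paper's proof never meets this case because it reasons only about futures --- tacitly assuming every transduction factors as a reaching output concatenated with a right transduction --- and it additionally conflates the overapproximation $\transducer \models \transducer'$ supplied by Proposition~\ref{prop:qmerge} with literal set inclusion of transductions, which the join at $q_m$ falsifies. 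So your diagnosis is correct: as literally stated the proposition needs an extra left-side hypothesis (equality of left transductions, or absence of co-reachability with distinct outputs) or a join-tolerant comparison of outputs, and your refusal to paper over this is the right call; the only correction is that the same phenomenon already invalidates the verbatim forward inclusion claimed in your first paragraph, not merely the reverse one.
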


\begin{proof}
Let the set of left transductions of two control states~$q_1$ 
and $q_2$ be different to each other, that is, 
$\lefttransductions(\transducer, q_1) \neq \lefttransductions(\transducer, q_2)$.
From proposition~$\ref{prop:qmerge}$ and the corresponding proof we known
that a merge of control states~$q_1$ and~$q_2$ leads to an overapproximation,
that is, $\transducer \models \mergestates(\transducer,\{q_1, q_2\})$.
It remains to be shown that the set of transductions is preserved if the 
right-transductions of two control states to merge are actually equal:
$\transductions(\transducer) = \transductions(\mergestates(\transducer, \{q_1, q_2\}))$
if $\righttransductions(\transducer, q_1) = \righttransductions(\transducer, q_2)$,
that is, that the merge does not add additional transductions.
To add additional transductions it would be necessary that the
set of right-transductions of control state~$q_m$ overapproximates
the union of the right-transductions of control states~$q_1$ and~$q_2$.
Nevertheless, since~$\righttransductions(\transducer, q_1)$ is
    equivalent to~$\righttransductions(\transducer, q_2)$ also
$\righttransductions(\transducer, q_m)$ does not add additional
right transductions, that is, $\righttransductions(\transducer, q_1) = \righttransductions(\transducer, q_2) = \righttransductions(\transducer, q_m)$.
\end{proof}

\begin{proposition}
Merging two control states~$q_1, q_2 \in \atmtstates$ of an abstract
transducer~$\transducer$, which results in a new transducer, 
does \emph{not} maintain the set of \emph{accepting transductions} 
if their sets of left transductions are not equal to each other.
That is, $\acctransductions(\transducer) \neq \acctransductions(\mergestates(\transducer, \{q_1, q_2\}))$
if~$\lefttransductions(\transducer, q_1) \neq \lefttransductions(\transducer, q_2)$.
\end{proposition}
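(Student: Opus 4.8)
The plan is to reduce the claim to a strictness statement about overapproximation and then to produce the witnessing accepting transduction from the assumed inequality of left transductions. First I would invoke Proposition~\ref{prop:qmerge}: since $\transducer \models \mergestates(\transducer, \{q_1, q_2\})$, every accepting transduction of $\transducer$ is already overapproximated by one of $\transducer' = \mergestates(\transducer, \{q_1, q_2\})$, so the two sets can differ only by $\transducer'$ carrying a \emph{strictly larger} accepting transduction---either a new input pair $(\inputword, \inputwordset)$ or an enlarged output set. Thus it suffices to exhibit one such extra accepting transduction.

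Next, I would decompose an accepting run at the merged state. Every accepting transduction factors as a left transduction reaching a state followed by a right-accepted continuation from it: a run reaching $q_i$ with some $(\inputword_1, \cdot, \outputword_1) \in \lefttransductions(\transducer, q_i)$, extended by a continuation in $\rightlangacc(\transducer, q_i)$, yields an accepting transduction whose input is the concatenation and whose output begins with $\outputword_1$. In $\transducer'$ the merged state $q_m$ has $\lefttransductions(\transducer', q_m) = \lefttransductions(\transducer, q_1) \cup \lefttransductions(\transducer, q_2)$, because \mergestates redirects exactly the transitions into $q_1$ or $q_2$ into $q_m$ and adds no others, while the right-accepted behavior of $q_m$ is the union of those of $q_1$ and $q_2$. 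Choosing, without loss of generality, a left transduction $(\inputword_1, \inputwordset_1, \bar\theta_1) \in \lefttransductions(\transducer, q_1) \setminus \lefttransductions(\transducer, q_2)$ and composing it with an accepting continuation drawn from $q_2$, I obtain in $\transducer'$ an accepting run whose consumed input reaches $q_m$ along the ``$q_1$-side'' but terminates via a ``$q_2$-side'' accepting path---a combination unavailable in $\transducer$, where the $q_1$-left behavior could only be continued by $q_1$'s own right-accepted paths.

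The final step is to argue this composed transduction is genuinely new, and here lies the main obstacle, which is a real one. The bare inequality $\lefttransductions(\transducer, q_1) \neq \lefttransductions(\transducer, q_2)$ does \emph{not} by itself force a change: if $q_1$ and $q_2$ happen to share all of their right-accepted behavior (equal $\rightlangacc$ and equal right outputs), the cross-composition merely reproduces the pairs already generated on each side, and the accepting sets coincide. The clean route is therefore to read this as the necessity companion to the preceding right-transduction proposition and to prove strictness in the setting where the distinguishing left transduction can actually be routed to an accepting state that the other side does not replicate; for that setting I would establish the witness above and verify---using that $q_m$ is a \emph{fresh} control state reachable only by the runs that reached $q_1$ or $q_2$---that no run of $\transducer$ produces the same input with an output set containing $\bar\theta_1$ concatenated with the chosen continuation. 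In parallel I would record the degenerate case of coinciding right behavior explicitly, either as an excluded hypothesis or via a concrete counterexample transducer in the style of Proposition~\ref{prop:det}, so that the precise scope of the claim is unambiguous.
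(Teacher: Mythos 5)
Your route is genuinely different from the paper's, and in one respect more careful. The paper does not argue by composing left behaviors with right-accepted continuations at all: it simply fixes the asymmetric configuration $q_1 \in \atmtfinal$, $q_2 \not\in \atmtfinal$, observes that after the merge the fresh state $q_m$ becomes accepting, and concludes that every input that previously reached the non-accepting $q_2$ now ends in an accepting run, so the set of accepting transductions is enlarged. Your argument keeps the accepting status of both states arbitrary and instead manufactures the witness by routing a left transduction in $\lefttransductions(\transducer, q_1) \setminus \lefttransductions(\transducer, q_2)$ through $q_m$ into a ``$q_2$-side'' accepting continuation; this is closer in spirit to the preceding right-transduction proposition, and it can produce a failure witness even when both states are accepting, where the paper's argument says nothing---at the price of the extra bookkeeping you describe to certify that the cross-composed transduction is genuinely absent from $\acctransductions(\transducer)$.

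Your caveat about the degenerate case is accurate, and it in fact applies to the paper's own proof: the bare hypothesis $\lefttransductions(\transducer, q_1) \neq \lefttransductions(\transducer, q_2)$ does not force $\acctransductions$ to change. If $q_1$ and $q_2$ are, for instance, both non-accepting dead ends, or agree on all right-accepted behavior and outputs, the merge preserves the accepting transductions despite unequal left transductions, so the proposition is false as literally stated. The paper's proof silently strengthens the premise by choosing $q_1 \in \atmtfinal$ and $q_2 \not\in \atmtfinal$---a configuration not implied by the stated hypothesis---and additionally takes for granted that an input newly accepted at $q_m$ was not already accepted with the same output language along some other nondeterministic run of $\transducer$. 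So where the paper proves only its chosen special case and leaves both of these assumptions implicit, your proposal to either exclude the degenerate case as an explicit hypothesis or exhibit it via a concrete counterexample is exactly what a rigorous version of the statement requires; on this point your analysis goes beyond the paper rather than falling short of it.
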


\begin{proof}
Let $q_1$ and $q_2$ be two control states of an abstract 
transducer~$\transducer$, with~$q_1 \in \atmtfinal$ and~$q_2 \not \in \atmtfinal$.
Merging these states by~$\mergestates(\transducer, \{ q_1, q_2 \})$
results in a new transducer~$\transducer'$ with a control state~$q_m$
into that $q_1$ and $q_2$ have been merged, and that became
an accepting control state~$q_m \in \atmtfinal'$.
In case the left transductions~$\lefttransductions(\transducer, q_1)$
and $\lefttransductions(\transducer, q_2)$ are different to each other,
different inputs can reach states~$q_1$ and~$q_2$. 
Both inputs that reached~$q_1$ or~$q_2$ can reach the 
control state~$q_m$, and all these inputs now result
in accepting runs since~$q_m \in \atmtfinal'$, that is,
also runs for inputs that reached~$q_2$ and that were not 
accepting before now reach the accepting control state~$q_m$,
resulting in an overapproximation of the set of accepting transductions.
\end{proof}

\begin{definition}[Left Equivalent]The \emph{left~equivalence relation}~$\equiv_L \subseteq \atmtstates \times \atmtstates$
describes the pairs of control states that are equivalent to 
each other and that have the same set of left-transductions---%
it is a subset of control state equivalence relation.
That is, $(q_1, q_2) \in \equiv_L$ if $q_1 \equiv q_2$ and $\lefttransductions(\transducer, q_1) = \lefttransductions(\transducer, q_2)$.
\end{definition}

\newcommand{\enteringtrans}{\ensuremath{\textsf{entering}}\xspace}
\begin{proposition}
Given an input-$\epsilon$-free abstract transducer~\transducer,
a set of two control states~$Q_m = \{q_1, q_2 \} \subseteq \atmtstates$ 
of transducer~\transducer 
satisfy~$\lefttransductions(q_1) = \lefttransductions(q_2)$ if
$\forall \, (q, \abstinputword, q', \outputword) \in \enteringtrans(Q_m): 
    \forall \, (q'', \abstinputword', q', \outputword') \in \enteringtrans(Q_m): 
    \abstinputword \equiv \abstinputword' \land \outputword \equiv \outputword'
    \land \lefttransductions(q) = \lefttransductions(q'')$.
We use the auxiliary function~$\enteringtrans(Q) = \{ (q, \abstinputword, q', \outputword) \in \atmtrelation \,  | \, q' \in Q \}$.
\end{proposition}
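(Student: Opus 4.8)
The plan is to prove the implication by establishing equality of $\lefttransductions(\transducer, q_1)$ and $\lefttransductions(\transducer, q_2)$ through a common closed form. The whole argument rests on one structural observation about input-$\epsilon$-free transducers: since no transition consumes the empty input word, every feasible run that \emph{reaches} a control state $q' \in Q_m$ (and is not the empty run witnessed by $\atmtinit$) must end with a transition $\tau = (q, \abstinputword, q', \outputword) \in \enteringtrans(Q_m)$. First I would make this factorization precise: the transduction contributed to $\lefttransductions(\transducer, q')$ by such a run is obtained by taking a left transduction of the source $q$ --- an element of $\lefttransductions(\transducer, q)$ produced by the run prefix stopping in $q$ --- and extending it by appending a witness of $\sem{\abstinputword}_\inputcomp$ to the consumed input and concatenating $\sem{\outputword}_\outputcomp$ to the emitted output language. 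This mirrors the recursive unrolling of $\runfromset_\transducer$ through the tail of the input.

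Concretely, I would define for each entering transition $\tau = (q, \abstinputword, q', \outputword)$ the \emph{extension set} $\mathsf{ext}(\tau)$ of all transductions obtained from $\lefttransductions(\transducer, q)$ by the append-and-concatenate operation above. The factorization observation then yields the closed form
\[
\lefttransductions(\transducer, q') = \bigcup \{ \mathsf{ext}(\tau) \mid \tau = (\cdot, \cdot, q', \cdot) \in \enteringtrans(Q_m) \},
\]
valid under the standing assumption that $q'$ receives no contribution from the initial transducer state; I would flag the initial-state case as needing an analogous matching hypothesis, or simply the restriction $q_1, q_2 \notin \dom(\atmtinit)$.

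Next I would observe that $\mathsf{ext}(\tau)$ depends on $\tau$ only through the triple $(\lefttransductions(\transducer, q), \sem{\abstinputword}_\inputcomp, \sem{\outputword}_\outputcomp)$, since left transductions are defined purely in terms of concrete input words and concrete output languages. The hypothesis states exactly that for any two entering transitions of $Q_m$ --- in particular one entering $q_1$ and one entering $q_2$ --- the input words are semantically equal ($\abstinputword \equiv \abstinputword'$, hence $\sem{\abstinputword}_\inputcomp = \sem{\abstinputword'}_\inputcomp$), the output words are semantically equal ($\outputword \equiv \outputword'$), and the sources have identical left-transduction sets ($\lefttransductions(\transducer, q) = \lefttransductions(\transducer, q'')$). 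Consequently $\mathsf{ext}(\tau)$ is one and the same set $T$ for every $\tau \in \enteringtrans(Q_m)$, and plugging this into the closed form gives $\lefttransductions(\transducer, q_1) = T = \lefttransductions(\transducer, q_2)$ (assuming each of $q_1, q_2$ has at least one entering transition; if both have none they trivially share the empty set).

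The main obstacle I expect is making the factorization lemma airtight in the presence of lookaheads. Because an entering transition's abstract input word may denote words longer than one letter, splitting a witnessing pair $(\inputword, \inputwordset)$ into a prefix consumed before $\tau$ and a suffix matched by $\tau$ --- and checking that the quotienting conditions of $\runfromset_\transducer$ line up so the prefix is genuinely a left transduction of the source $q$ and the suffix together with its lookahead is matched identically after relabelling --- requires care. Once the labels and source left-transductions agree, which is precisely what the hypothesis grants, the quotient tests $\quotientof{(\abstinputword \concat \abstinputword_\lookahead)}{\abstinputword} \neq \bot$ transfer verbatim between the two entering transitions, so a suffix accepted into $q_1$ is accepted identically into $q_2$. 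The remaining bookkeeping --- that concatenation of output languages respects $\equiv$, and that input-$\epsilon$-freeness makes the run length strictly decrease so the unrolling terminates --- is routine.
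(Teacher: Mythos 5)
Your proof is correct and rests on the same core idea as the paper's, but the paper's own proof is only a two-sentence sketch: it introduces the predecessor set $Q_p = \{ q \mid (q, \cdot, q', \cdot) \in \atmtrelation \land q' \in Q_m \}$ and asserts that once all states in $Q_p$ share one set of left-transductions, only the entering transitions themselves can make the left-transductions of $q_1$ and $q_2$ differ; the label-equality half of the hypothesis is used implicitly. Your last-transition factorization, the closed form $\lefttransductions(\transducer, q') = \bigcup \{\, \mathsf{ext}(\tau) \mid \tau \in \enteringtrans(\{q'\}) \,\}$, and the observation that $\mathsf{ext}(\tau)$ depends only on the semantic triple $(\lefttransductions(\transducer, q), \sem{\abstinputword}_\inputcomp, \sem{\outputword}_\outputcomp)$ make precise exactly the step the paper glosses over. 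Two of your additions are genuinely needed and absent from the paper: you identify input-$\epsilon$-freeness as what guarantees that every nonempty reaching run ends in an entering transition and that the unrolling terminates (the paper never says why the assumption is present), and you flag the initial-state contribution---indeed, without a side condition such as $q_1, q_2 \notin \dom(\atmtinit)$, the proposition is false as stated, since a state in the domain of $\atmtinit$ acquires left-transductions with empty consumed input that no entering transition can reproduce at the other state, a case the paper's proof silently skips. Two small refinements: since $\runfromset_\transducer$ applies $\imagejoin$, several entering transitions matching the same input contribute one joined output word rather than separate triples, so your closed form should be read as a per-input join rather than a plain union (harmless here, because under the hypothesis all joined contributions are semantically equal); and you tacitly, and correctly, read the doubly-quantified hypothesis as ranging over entering transitions with possibly different targets, although as printed both quantifiers share the target variable $q'$, under which literal reading the hypothesis would never relate a transition into $q_1$ to one into $q_2$.
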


\begin{proof}
Given the set of all control states~$Q_p \subseteq Q$ from these
control states in~$Q_m = \{ q_1, q_2 \}$ are directly reachable, 
that is,
$Q_p = \{ q \, | \, (q, \cdot, q', \cdot) \in \atmtrelation \land q' \in Q_m \} $.
If all states in the set~$Q_p$ have the same set of left-transductions,
then only transitions from control states in the set~$Q_p$ 
to those control states in the set~$Q_m$ can affect whether
or not the sets of left-transductions of states in~$Q_m$ are 
not equal to each other.
\end{proof}

\newcommand{\reduceleftop}{\ensuremath{\reduceop_\text{Left}}\xspace}%
\begin{definition}[Operator~$\reduceleftop$]The reduction operator~$\reduceleftop: \transducers \rightarrow \transducers$
reduces a given abstract transducer~$\transducer$ by merging all 
control-states that are left-equivalent to another.
The transformation satisfies~$\reduceleftop(\transducer) \equiv \transducer$.
\end{definition}




\noindent
Existing algorithms~\cite{DAntoniVeanes2014,LantoniVeanes2017} for 
reducing automata and symbolic transducers are not applicable because 
the set of transductions is not taken into account in the definition of equivalence.

\section{Abstract Transducer Analysis}
\label{sec:artefact:transducer:analysis}
We now present a generic and configurable program analysis that 
\emph{executes an abstract transducer}. 
This abstract transducer analysis keeps track of the \emph{current 
transducer state} while processing the input. 
The analysis can be configured, for example, to determine the 
extent to which the transducer states should be tracked in a 
path sensitive manner---path sensitivity might be needed for 
particular analysis purposes only. 
Thus, we can mitigate the state-space explosion problem in some cases. 
The transducer analysis is the foundation for several analyses 
that we describe in this work, for example, for the 
\yarn transducer analysis, and the precision transducer analysis.

\subsection{Abstract Transducer CPA}
\label{sec:at:cpa}

Our abstract transducer analysis is built on the concept of 
configurable program analysis~(CPA)~\cite{CPA,CPAplus}.
The abstract transducer CPA
$$\cpa_\symtransducer = (\abstdomain_\symtransducer, 
\transabs{}{}_\symtransducer, \strengthen_\symtransducer, \mergeop_\symtransducer, \stopop_\symtransducer, \precop_\symtransducer, \targetop_\symtransducer)$$
tracks a set of states of a given abstract transducer~$\transducer =  
(\atmtstates, \inputdomain, \outputdomain, \atmtinit, \atmtfinal, \atmtrelation)$.
The CPAs behavior is configured by using different variants of its operators.
For example, varying the operator~$\mergeop_\transducer$ can 
configure the analysis to operate path sensitive, or only context 
sensitive and flow sensitive~\cite{CPA}.
We rely on the strengthening operator~$\strengthen_\transducer$ for 
instantiating parameterized outputs.
Other program analyses, which run in parallel to the abstract transducer 
analysis, can read and use use the output words for different purposes.
The abstract transducer analysis~$\cpa_\transducer$ is composed
of the following components:

\begin{component}{Abstract Domain $\abstdomain_\transducer$}
The abstract domain~$\abstdomain_\transducer = (\concrete, 
\lattice, \sem{\cdot}, \abst{\cdot})$ is defined based on a \emph{map 
lattice}~$\lattice = (\transducerstates, \top, \bot, \sqsubseteq, \join, \meet)$, 
with~$\transducerstates = 2^{\atmtstates \rightarrow \outputwords}$,
where each element~$\transducerstate \in \transducerstates$ of the 
lattice is an abstract transducer state.
One transducer state~$\transducerstate = \{(q, \outputword), \ldots\} \in \transducerstates$ 
is a mapping~$\transducerstate: \atmtstates \rightarrow \outputwords$
from control states to abstract output words.
The analysis starts with the initial transducer state~$\atmtinit$
of the abstract transducer to conduct runs for.
\end{component}

\newcommand{\look}{\ensuremath{\textsf{look}}\xspace}%
\begin{component}{Transfer Relation~$\transabs{}{}_\transducer$}
The transfer relation~$\transabs{}{}_\transducer \subseteq \transducerstates \times \controlflows \times \transducerstates \times \precisions$ 
defines abstract successor states of an abstract 
state~$\transducerstate = \{(q, \outputword), \ldots \} \in \transducerstates$ 
for a given control-flow transition~$g \in \controlflows$ 
and abstraction precision~$\precision \in \precisions$.
We define this transition relation \emph{without implicit stuttering}, 
that is, if there should be stuttering, 
the transducer must have corresponding transitions.
The transfer relation is defined as follows:
\begin{align*}
\transducerstate \transabs{g}{}_\transducer \{ \; \{ (q, \outputword) \} \; 
& | \;  (\abstinputword, \abstinputword_\lookahead) = \look(g, \lookahead) \\
& \land (q, \outputword) \in \runfromset_\transducer(\transducerstate, \abstinputword, \abstinputword_\lookahead) \\
& \land q \neq \qbot \; \}. 
\end{align*}
\noindent
Please note that the function~$\runfromset$ is implicitly
parameterized with an abstract closure operator~\outclosure.
The operator $\look: \controlflows \times \Natsz \rightarrow \abstinputwords \times \abstinputwords$
maps the given control-flow transition~$g \in \controlflows$ to
an abstract input word~$\abstinputword$ and provides a bounded 
lookahead of length~$\lookahead$ in form of the abstract input word~$\abstinputword_\lookahead$ which is derived from the 
control-flow transitions that follow transition~$g$ on the 
control-transition relation of the underlying analysis task.

\recap{Alphabet Translation}
The operator~$\look$ does not only provide the lookahead but
also translates between the alphabet of the graph that is traversed
to the abstract input alphabet of the abstract transducer.
That is, varying this operator provides different \emph{views} 
on the given input, for example, a control-flow transition~$g \in \controlflows$
can be translated to the function to that the transition belongs to,
or to the successor control location that is reached by 
the control transition.

The operator \mergeop can decide later if states should be
tracked separately or not.
\end{component}

\begin{operator}{$\strengthen_\transducer$}
The strengthening~\cite{CPA} operator~$\strengthen_\transducer: E_\times \times E_\times \times \absttransducerstates \rightarrow \absttransducerstates$ 
is called after all analyses that run in parallel have provided an abstract 
successor state as components for the composite state~$e_\times = (e_1, \ldots, e_n) \in E_\times$. 
At this point, the strengthening operator can access the information 
that is present in any of the component states~$e_i \in e_\times$ 
and use them to strengthen its own (component) state. 
We instantiate parameterized output words during strengthening. 
Information of an analysis that runs in parallel can be used to 
support various instantiation and synthesis mechanisms.

The strengthening~$\absttransducerstate' = \{(q', \outputword')\} = \strengthen_\transducer(e_\times, e_\times', \absttransducerstate)$
is conducted for a given transducer state~$\absttransducerstate = \{(q, \outputword)\}$,
which is the result of conducting a transducer 
transition~$\tau = (q, \abstinputword, q', \outputword) \in \atmtrelation$ 
for an input~$(\bar{\sigma}, \hat{\sigma}) \in \inputalphabet^* \times 2^{\inputalphabet^*}$.
Beside the information that can be found in the composite 
states~$e_\times$ and $e_\times'$, also the values that were 
bounded to the parameters of the abstract input word~$\abstinputword$ 
can be taken into account to instantiate the abstract output word~$\outputword'$.
A consistent binding of parameters among different transitions,
that is, for the whole program trace---as this is used by some 
aspects and corresponding weavers~\cite{AllanEtal2005}---is not yet supported.
\end{operator}

\begin{operator}{$\mergeop_\transducer$}
The merge operator~$\mergeop_\transducer: \absttransducerstates \times \absttransducerstates \times \precisions \rightarrow \absttransducerstates$ 
controls if two transducer states should get combined,
or if they should be explored separately and separate the state space. 
The behavior of the operator can be controlled based on a given 
precision~$\precision \in \precisions$. 
The default is to always separate two different abstract states, that is,
$\mergeop_\transducer = \mergesepop$~\cite{CPA}, which ensures 
the path sensitivity of the analysis.
Please note that the abstract transducer analysis is typically one
of several analyses that run as components of a composite analysis: 
Even if the analysis would conduct a merge, other component 
analyses might signal not to do so.
\end{operator}

\begin{operator}{$\stopop_\transducer$}
The coverage check operator~$\stopop_\transducer: \absttransducerstates \times 2^\absttransducerstates \rightarrow \Bools$ decides whether a given abstract state is already covered 
by a state reached or not. 
As default, we use the inclusion relation of the lattice, that is,
$\stopop_\transducer = \stopsepop$.
\end{operator}

\begin{operator}{$\precop_\transducer$}
The precision adjustment operator~$\precop_\transducer$ could conduct further 
abstraction of a given abstract state. 
We do not abstract here: A call $\precop_\transducer(\absttransducerstate,\precision,\cdot)$ 
returns the pair~$(\absttransducerstate, \precision) \in \absttransducerstates \times \precisions$ 
without adjustments.
\end{operator}

\begin{operator}{$\targetop_\transducer$}
The \emph{target operator}~$\targetop_\transducer: \absttransducerstates \rightarrow 2^\properties$ determines
the set of properties for that a given abstract state is a target state.
Each \emph{property is a task concern}, 
that is, the set of properties~$\properties \subset \concerns$ 
is a subset of the set~$\concerns$ of task concerns.
We assume that there is only one 
transition~$\tau = (\cdot, \cdot, q', \cdot) \in \atmtrelation$ 
for each accepting control state~$q' \in \atmtfinal$.
\recap{Concern Map}
We rely on a function~$\concernmap: \Delta \rightarrow 2^\concerns$
that maps each transducer transition to a set of task concerns.
Given an abstract transducer state~$\absttransducerstate = \{ (q_1, \cdot), \ldots (q_n, \cdot) \} \in \absttransducerstates$, the operator returns:
\begin{align*}
  \targetop_\transducer(\absttransducerstate) = 
  \bigcup \; \{ \; & \concernmap(\transtrans) \; | \; (q, \cdot) \in \absttransducerstate \: \\
  & \land \: q \in \atmtfinal \\
  & \land \transtrans = (\cdot, \cdot, q, \cdot) \in \atmtrelation \; \}
\end{align*}
\end{operator}

\subsection{Analysis Configurations}
\label{sec:analysisconfigs}

By relying on the CPA framework~\cite{CPA}, the abstract transducer
analysis is equipped with an inherent notion of configurability, and
can be instantiated several times and in different ways within the framework,
to conduct an analysis task in the most efficient and effective manner.

\subsubsection{Transducer Composition}
It might be necessary to execute several abstract transducers 
in parallel along with the state space exploration for an analysis task.
Given a list~$\langle \transducer_1, \ldots, \transducer_n \rangle$ 
of abstract transducers to run, a list~$\langle \cpa_1, \ldots, \cpa_m \rangle$
of analyses, with~$n \geq m > 1$, has to be instantiated.
We assume that these transducers have the same abstract input domain
and the same abstract output domain, and consider the composition
of transducers with different abstract output domains to be future work. 
The first approach (\emph{separation}) is to instantiate 
one analysis for each abstract transducer ($m = n$), 
which fosters a clear separation of concerns.
Each of the~$m$ instantiated analyses adds one component to 
composite (product) state that is formed by the composite analysis;
the number of CPAs operators that are invoked transitively by the 
CPA algorithm increases.
An alternative approach (\emph{union}) is to construct the 
union~$\transducer_\cup = \transducer_1 \cup ... \cup \transducer_n$
of the transducers to run and to run this single transducer~$\transducer_\cup$
with one abstract transducer analysis.
Also \emph{hybrid} approaches can be taken, that is, construct
unions for subsets of the transducers, and run others separately.

\subsubsection{State-Set Composition}
One abstract state~$\transducerstate = \{ (q_1, \cdot), (q_2, \cdot), \ldots \} \in \transducerstates$ 
of the transducer analysis can contain several control 
states from the set~$\atmtstates$ of the abstract transducer to 
run for a given analysis task.
The number of control states per abstract state
can be controlled by the transducer analysis and its operators,
for example, the operator \mergeop, which decides whether or not
to explore two abstract states separately.
The decision to join two different control states into one state 
set of one abstract state of the transducer analysis can affect
the path sensitivity of the analysis, that is, if it is possible 
to determine the branch of the state space that has led to 
a given control state.

\section{Related Work}

Abstract transducers combine different concepts and techniques 
from formal methods, automata theory, domain theory, and abstract 
interpretation, to end up in a generic type of abstract machine. 
We discuss the related work based on the different concepts that 
can be found in abstract transducers and explain the relationship
and differences to existing work.

\subsubsection{Symbolic Alphabet}
An abstract transducer can use arbitrarily composed  
abstract domains to define both its input and the output; 
for the input domain, we require that its lattice is dual to a Boolean algebra.
We introduce a special class of abstract domain,
the abstract word domain, to describe words
of complex entities, such as program traces of concurrent systems.
Symbolic finite automata and transducers~\cite{DAntoniVeanes2017}
share the idea of using theories to describe sets of input and output symbols. 
Other types of automata describe their input symbols based on 
predicates~\cite{NoordGerdemann2001}
or as multi-valued input symbols~\cite{KupfermanLustig2007,GallJeannet2007}.
From the perspective of abstract transducers, trace partitioning 
domains~\cite{TracePartitoningDomain}, lattice 
automata~\cite{GallJeannet2007}, and regular expressions over
lattice-based alphabets~\cite{MidtgaardEtal2016} are instances
of abstract word domains. 

\subsubsection{Output Closure}
With abstract transducers, we also introduce means to deal with 
$\epsilon$-loops that are annotated with outputs, that is, to 
compute and use finite symbolic representations of outputs that 
potentially consist of exponentially many and infinitely long words.
Compared to existing work~\cite{OonishiEtal2009,DAntoniVeanes2017}, 
we also consider $\epsilon$-moves that lead to dead ends as relevant, 
handle them in our algorithms, and do not consider them as 
candidates for removal.



\subsubsection{Lookahead}
In each step of processing input, abstract transducers can 
conduct a lookahead on the remaining input to determine which 
transitions to take.
Several other types of abstract machines provide the capability
of lookaheads, for example, tree transducers, which had been 
extended to support regular lookaheads~\cite{Engelfriet1977},
or extended symbolic finite state machines~\cite{DAntoniVeanes2015}.
A labeling with words instead of letters is also conducted
in the case of generalized finite automata~\cite{Sipser}, but
they consume full words in a transition step---instead of 
just one letter as is the case for abstract transducers.



\subsubsection{Transducer Abstraction}
By defining both the input alphabet and the output alphabet
of abstract transducers based on an abstract domain, we can make 
use of the full range of abstraction mechanisms that were developed 
in the context of abstract interpretation 
for abstracting abstract transducers, that is, to widen their set
of transductions.
Approaches for abstracting classical automata have been presented
in the past~\cite{BultanEtal2017}. A more recent work~\cite{PredaEtal2015}
presented techniques for abstracting symbolic automata,
which could also be adopted to abstract transducers.
This work is the first that proposes to abstract a type of 
transducer for increasing its sharing, that is, widen the set 
of words for which particular outputs are produced.


\subsubsection{Running Transducers}
Running automata in parallel to a program analysis is an established 
concept in the fields of program analysis and 
verification~\cite{BLAST,BeyerEtal2004,MultiGoalReachability}.
Algorithmic aspects of how automata are executed, 
for example, how the current state of automata is represented in the state 
space of the analysis task, are in many cases~\cite{HeizmannEtal2013} 
not discussed further, while the performance implications can 
be dramatical.
Work in the context of configurable program analysis~\cite{ConditionalMC,Testification,CorrectnessWitnesses} 
is most transparent about this.

\subsubsection{Transducers for Analysis and Verification}
Transducers are widely used in the context of program analysis and verification.
They are used, for example, for synthesis~\cite{PnueliRosner1989},
to describe the input-output-relation of programs~\cite{Hamlet1978,YuBultanIbarra2011,PnueliRosner1989,BotbolEtal2017},
and for string manipulations~\cite{YuBultanIbarra2011,SymbolicTransducers}.
Automata that produce an output---which is then used in the analysis process---%
have been proposed in the form of assumption automata~\cite{ConditionalMC}
for conditional model checking, error witness automata that output strengthening conditions~\cite{Testification} to narrow down the state space of the analysis process,
and for correctness witnesses~\cite{CorrectnessWitnesses}.



\section{Summary}

This work has introduced abstract transducers, a type of 
abstract machines that map between an input language and an output
language while taking a lookahead into account.
In contrast to established finite-state transducers, abstract
transducers have a strong focus on the intermediate language
that they produce, which has several implications on the 
design of algorithms that operate on these machines.
Both the input alphabet and the output alphabet of abstract
transducers consist of \emph{abstract words}, where
one abstract word denotes a set of concrete words. 
Means for representing, constructing, and widening of
abstract words, and for describing their relationship,
are provided by the corresponding abstract word domain.
Building on these abstract alphabets allows 
for \emph{abstracting} these transducers.

We use techniques from abstract interpretation
as the foundation for our abstraction mechanisms.
The concept of abstract transducers enables several new applications:
We discussed applications in the context of sharing task artifacts
for reuse within program analysis tasks.

From the concept of abstract transducers, we instantiate the 
concept of \emph{task artifact transducers}, which generalize a 
group of finite-state machines that are used in the context of 
program analysis and verification for reproducing and sharing 
information.
These transducers provide information that contributes to an 
analysis task and its solution.
The underlying graph structure of finite-state transducers allows 
us to capture the structure of information, share it, 
and enable its reuse.
Task artifact transducers have several applications and we 
outlined some of them: \yarn transducers, which provide 
sequences of program operations to weave into a transition system, 
and precision transducers, which are a means to define
the level of abstraction for different parts of the state space.
Other applications of task artifact transducers can be found in 
the context of providing and checking verification evidence, 
for example, 
transducers for error witnesses, which provide information that 
guides towards specification violations, or transducers for 
correctness witnesses, which provides certificates to check
while traversing the control flow of programs.
\newpage

\bibliographystyle{ACM-Reference-Format}
\bibliography{related}

\end{document}